\documentclass[11pt]{article}
\usepackage[sc]{mathpazo}

\usepackage[margin=1in]{geometry}
\usepackage[english]{babel}
\usepackage[utf8x]{inputenc}
\usepackage[compact]{titlesec}
\usepackage{cmap}
\usepackage[T1]{fontenc}
\usepackage{bm}
\usepackage{booktabs}
\usepackage{mathtools}
\usepackage{amsfonts}
\usepackage{amsmath}
\usepackage{amssymb}
\usepackage{amsthm}
\usepackage{float}

\usepackage[section]{placeins} 

\usepackage{graphics}
\usepackage{hyperref}
\usepackage[svgnames]{xcolor}
\hypersetup{colorlinks={true},urlcolor={blue},linkcolor={DarkBlue},citecolor=[named]{DarkGreen},linktoc=all}
\usepackage{natbib}
\usepackage{microtype}
\usepackage[capitalise,nameinlink,noabbrev]{cleveref}
\usepackage{doi}

\newcommand{\BibTeX}{\rm B\kern-.05em{\sc i\kern-.025em b}\kern-.08em\TeX}

\usepackage[labelfont={normalfont,bf},textfont=it]{caption}
\usepackage{subcaption}

\usepackage{nicefrac}
\usepackage{pifont}

\usepackage{apxproof}
\usepackage{array,multirow,graphicx,bigdelim}

\input{insbox}
\usepackage{tikz}
\usepackage{tikz-dimline}
\usepackage{tikz-cd}
\usetikzlibrary{fit,calc,math,shapes,shapes.multipart,decorations.text,arrows,decorations.markings,decorations.pathmorphing,shapes.geometric,positioning,decorations.pathreplacing}
\tikzset{snake it/.style={decorate, decoration=snake}}

\colorlet{mygray}{gray!40}

\makeatletter
\let\oldnl\nl
\newcommand{\nonl}{\renewcommand{\nl}{\let\nl\oldnl}}
\makeatother

\usepackage{thmtools,thm-restate}

\DeclarePairedDelimiter{\paren}{\lparen}{\rparen}

\DeclarePairedDelimiter{\bracket}{[}{]}
\DeclarePairedDelimiter{\oc}{(}{]}

\DeclarePairedDelimiter{\set}{\lbrace}{\rbrace}
\DeclarePairedDelimiter{\card}{\lvert}{\rvert}

\DeclarePairedDelimiter{\abs}{\lvert}{\rvert}

\DeclareMathOperator*{\argmax}{arg\,max}
\DeclareMathOperator*{\argmin}{arg\,min}

\def\1{\mathds{1}}

\def\maxconf{\mathsf{max}\text{-}\mathsf{conf}}
\def\minconf{\mathsf{min}\text{-}\mathsf{conf}}

\newcommand{\A}{\mathcal{A}}

\newcommand{\EF}[1]{\ifstrempty{#1}{\textrm{\textup{EF}}}{\textrm{\textup{EF{$#1$}}}}}
\newcommand{\EFoneone}{\textrm{\textup{EF$[1,1]$}}}
\newcommand{\EFX}{\textrm{\textup{EFX}}}

\newcommand{\FairIS}{\textrm{\textup{\textsc{Fair}-IS}}}
\newcommand{\I}{\mathcal{I}}

\newcommand{\NPH}{\textrm{\textup{NP-hard}}}

\newcommand{\PO}{\textup{PO}}

\newcommand{\PPAC}{\textrm{\textup{PPA-complete}}}

\definecolor{DarkGreen}{rgb}{0.1,0.5,0.1}

\newcommand{\cA}{\mathcal{A}}

\newcommand{\N}{\mathbb{N}}

\newcommand{\tI}{\tilde{I}}

\newcommand{\tM}{\tilde{M}}
\newcommand{\tE}{\tilde{E}}
\newcommand{\tG}{\tilde{G}}
\newcommand{\tv}{\tilde{v}}
\newcommand{\tA}{\tilde{A}}
\newcommand{\tcA}{\tilde{\mathcal{A}}}

\theoremstyle{remark}

\newtheorem{example}{Example}

\title{Fair Allocation under Conflict Constraints}
\author{
	\begin{tabular}{m{0.12\textwidth}m{0.12\textwidth}m{0.12\textwidth}m{0.12\textwidth}m{0.12\textwidth}m{0.12\textwidth}}
		\multicolumn{2}{c}{\textbf{Sarfaraz Equbal}} & \multicolumn{2}{c}{\textbf{Rohit Gurjar}} & \multicolumn{2}{c}{\textbf{Ayumi Igarashi}}\\
		\multicolumn{2}{c}{\small{IIT Bombay}} & \multicolumn{2}{c}{\small{IIT Bombay}} & \multicolumn{2}{c}{\small{The University of Tokyo}}\\
		\multicolumn{2}{c}{\href{mailto:sequbal@cse.iitb.ac.in}{\small{\texttt{sequbal@cse.iitb.ac.in}}}} & \multicolumn{2}{c}{\href{mailto:rgurjar@cse.iitb.ac.in}{\small{\texttt{rgurjar@cse.iitb.ac.in}}}}
		& \multicolumn{2}{c}{\href{mailto:igarashi@mist.i.u-tokyo.ac.jp}{\small{\texttt{igarashi@mist.i.u-tokyo.ac.jp}}}}\\
		&&&&&\\
		\multicolumn{2}{c}{\textbf{Yatharth Kumar}} & 
        \multicolumn{2}{c}{\textbf{Pasin Manurangsi}} & 
        \multicolumn{2}{c}{\textbf{Swaprava Nath}}\\
		\multicolumn{2}{c}{\small{Rubrik}} &
        \multicolumn{2}{c}{\small{Google Research}} &
        \multicolumn{2}{c}{\small{IIT Bombay}}\\
		\multicolumn{2}{c}{\href{mailto:yatharthbns811@gmail.com}{\small{\texttt{yatharthbns811@gmail.com}}}} &
        \multicolumn{2}{c}{\href{mailto:pasin@google.com}{\small{\texttt{pasin@google.com}}}} &
        \multicolumn{2}{c}{\href{mailto:swaprava@cse.iitb.ac.in}{\small{\texttt{swaprava@cse.iitb.ac.in}}}}\\
        &&&&&\\
		\multicolumn{2}{c}{\textbf{Raghuvansh Saxena}} &
        \multicolumn{2}{c}{\textbf{Rohit Vaish}} &
        \multicolumn{2}{c}{\textbf{Hirotaka Yoneda}}\\
		\multicolumn{2}{c}{\small{TIFR}} &
        \multicolumn{2}{c}{\small{IIT Delhi}} &
        \multicolumn{2}{c}{\small{The University of Tokyo}}\\
		\multicolumn{2}{c}{\href{mailto:raghuvansh.saxena@tifr.res.in}{\small{\texttt{raghuvansh.saxena@tifr.res.in}}}} &
        \multicolumn{2}{c}{\href{mailto:rvaish@iitd.ac.in}{\small{\texttt{rvaish@iitd.ac.in}}}} &
        \multicolumn{2}{c}{\href{mailto:yoneda-h@g.ecc.u-tokyo.ac.jp}{\small{\texttt{yoneda-h@g.ecc.u-tokyo.ac.jp}}}}\\
	\end{tabular}
}

\usepackage{colortbl}
\colorlet{myred}{red!25}
\colorlet{myblue}{blue!25}
\colorlet{mygreen}{green!25}

\usepackage{todonotes}

\usepackage[shortlabels]{enumitem}

\usepackage{etoolbox,algorithm,algpseudocode}

\makeatletter
\newcommand{\algmargin}{\the\ALG@thistlm}
\algnewcommand{\parState}[1]{\State%
  \parbox[t]{\dimexpr\linewidth-\algmargin}{\strut #1\strut}}

\newcommand{\alglinenoNew}[1]{\newcounter{ALG@line@#1}}
\newcommand{\alglinenoPop}[1]{\setcounter{ALG@line}{\value{ALG@line@#1}}}
\newcommand{\alglinenoPush}[1]{\setcounter{ALG@line@#1}{\value{ALG@line}}}

\newtheorem{theorem}{Theorem}[section]
\newtheorem*{theorem*}{Theorem}

\newtheorem{observation}[theorem]{Observation}
\newtheorem*{observation*}{Observation}
\newtheorem{lemma}{Lemma}
\newtheorem{proposition}{Proposition}
\newtheorem{definition}{Definition}

\date{}

\begin{document}

\maketitle 


\begin{abstract}
  We study the fair allocation of indivisible items subject to conflict constraints. In this framework, the items are represented as the vertices of a graph, with edges corresponding to conflicts between pairs of items. Each agent is assigned an independent set of items from the graph. Our goal is to achieve a fair and efficient allocation of these items. Fairness pertains to satisfying envy-freeness up to one item~(\EF{1}), while efficiency is defined by maximality, meaning that no unallocated item can be feasibly assigned to any agent.

First, we explore the case of two agents. For monotone valuations, we show that a maximal \EF{1} allocation always exists on any graph. Our existence proof relies on a color-switching technique, which locally modifies a maximal allocation while preserving feasibility and restoring \EF{1}. We further show that such allocations can be computed in pseudopolynomial time in general, and in polynomial time for additive valuations on arbitrary graphs, as well as for monotone valuations on interval and bipartite graphs. By contrast, once monotonicity is dropped, maximal \EF{1} allocations need not exist even for identical additive valuations, and deciding existence becomes \NPH{}.

Next, we consider the case with a general number of agents. Again, we arrive at a negative result: An \EF{1} and maximal allocation fails to exist even for three agents under identical monotone valuations, and determining the existence of such an allocation is NP-hard. On the positive side, we show that under identical non-monotone additive valuations on a path graph, an \EFoneone{} and maximal allocation always exists. This result involves a novel application of the ``cycle plus triangles'' theorem.\footnote{Preliminary versions of this work have appeared at the AAMAS 2024~\citep{KEG+24} and IJCAI 2025~\citep{IMY25dividing} conferences. }

\end{abstract}

\section{Introduction}
\label{sec:Introduction}



How can we allocate a set of resources fairly? This problem was first formalized by the pioneering work of~\citet{Steinhaus48} and has since been extensively studied in economics, mathematics, and computer science under the umbrella of \emph{fair division}~\citep{BT96fair,BCE+16handbook,M19fair}. Applications of fair division arise in many real-life situations, including 
the allocation of courses among students~\citep{BudishCaKe17}, the division of family inheritance among family members~\citep{GoldmanPr14}, and the division of household chores between couples~\citep{IgarashiYo23}.

A central concept of fairness in fair division is \emph{envy-freeness}, which requires that each agent receives their most preferred bundle in the allocation. Although envy-freeness is a natural and well-motivated fairness guarantee, allocations that meet this criterion could fail to exist when dealing with indivisible resources, such as courses, houses, or tasks. As a result, recent literature on discrete fair division has focused on approximate fairness, exploring various concepts and algorithmic results~\citep{AAB+23fair}. One particularly influential relaxation of envy-freeness is \emph{envy-freeness up to one item} (\EF{1}), introduced by \citet{B11combinatorial}. This notion allows agents to eliminate pairwise envy by (hypothetically) dropping one of the items. This concept has garnered significant attention over the past decade. It is known that for general classes of monotone valuations, an \EF{1} allocation exists and can be computed efficiently~\citep{LMM+04approximately,BSV21approximate}.

A common assumption in the fair division literature is that any item can be feasibly assigned to any agent. This assumption may not hold in many settings of interest. For example, in course allocation, a student can only attend at most one course at any given time. Similarly, when allocating players to sports teams, if two players have overlapping areas of expertise, it is preferable not to assign them to the same team. In such settings, it is more natural to model \emph{conflicts} among the items and allow only feasible (or non-conflicting) allocations. 
A versatile framework for modeling such constraints, explored in a series of recent papers~\citep{ChiarelliKMPPS20,HH22fair}, represents conflicts among indivisible resources as a graph, where the vertices correspond to resources and the edges represent conflicts.

\paragraph{Our contributions.}
In this paper, we study the allocation of indivisible items under conflict constraints.
The items are represented by the vertices of a graph, and each agent must receive an
independent set of items. We consider a general valuation setting in which an item may
provide either non-negative utility, as a good, or negative utility, as a chore. Conflict
constraints substantially change the interaction between fairness and efficiency, and
standard guarantees and techniques from unconstrained fair division need not apply.

In particular, canonical efficiency notions such as completeness and Pareto optimality
are not always compatible with the fairness requirement of \EF{1} in the presence of conflict constraints. A complete
allocation assigning all goods may fail to exist, and even when it does,
there are simple instances in which no complete \EF{1} allocation can be achieved
\citep{HH22fair}. Pareto optimality is also too demanding for our purposes: even when the conflict graph is a path, \EF{1} is incompatible with Pareto optimality for two agents with identical additive
valuations (see \Cref{fig:EF1_PO_line_graph_counterexample}), where Pareto optimality is defined with respect to feasible allocations. These limitations motivate our focus on the weaker efficiency notion of \emph{maximality}, which requires that no unallocated item can be feasibly assigned to any agent.

Besides \EF{1}, another natural benchmark is envy-freeness up to any item (\EFX{})~\citep{CKM+19unreasonable}. Unfortunately, \EFX{} and maximality are incompatible: An allocation satisfying both \EFX{} and maximality may fail to exist even for two agents with identical valuations on a path graph
(see \Cref{eg:EFX_Maximal_Counterexample}).
We present the incompatibility results between fairness and efficiency notions in \Cref{fig:relations}.
Our goal is to identify when maximal \EF{1} allocations exist and when they can be
computed efficiently.

Our main contributions are as follows (see~\Cref{tab:Summary}).

\begin{enumerate}
    \item {\bf Two agents.}

    The two-agent problem is particularly important in fair division, with applications such as inheritance division, house-chore division, and divorce settlements~\citep{BramsFi00,brams2014two,IgarashiYo23}. We prove that, for two agents with monotone valuations, a maximal \EF{1} allocation exists for every conflict graph. The proof is based on a ``color switching'' technique, which constructs a sequence of maximal allocations such that the bundles of the agents between consecutive allocations satisfy an adjacency condition. It can be shown that, in any pair of consecutive allocations where the sign of envy between the agents changes, at least one allocation must satisfy \EF{1}.

    We complement this structural result with algorithms. For additive valuations, we give
    a polynomial-time algorithm for finding a maximal \EF{1} allocation on arbitrary
    graphs. For general monotone valuations, we obtain a pseudo-polynomial-time algorithm.
    We also identify graph classes for which efficient computation is possible under
    monotone valuations, including interval graphs, a structure that naturally arises in
    scheduling applications. When monotonicity is dropped, however, the existence is no longer guaranteed: there is a simple example in which a maximal \EF{1} allocation fails to exist even for two agents with identical additive valuations.

    \item {\bf General number of agents.}
    
    We show that the positive two-agent result does not extend to an arbitrary number of
    agents. We show multiple families of counterexamples for \EF{1} and maximality, including for (a) any $n \geq 3$ agents with identical monotone valuations, (b) any $n \geq 4$ agents with identical additive and monotone valuations, and (c) any $n \geq 2$ agents with identical additive valuations. 
    Furthermore, we provide a general recipe for converting any counterexample to an \NPH{}ness proof. In particular, we show that checking the existence of \EF{1} and maximal allocations for all of the aforementioned settings is \NPH{}.

    On the positive side, we achieve a meaningful relaxation of \EF{1}, called envy-freeness up to one good and one chore (\EFoneone{}), for path graphs. This notion addresses pairwise envy by removing one good from the envied bundle and one chore from the envious agent's bundle~\citep{ShoshanSeHa23}. For any
    number of agents with identical additive valuations on a path, we prove that a maximal
    \EF{[1,1]} allocation always exists. The proof employs a novel application of the ``cycle-plus-triangles'' theorem from graph theory~\citep{FS92solution,FS97some}, which states that any graph whose edge set is a disjoint union of triangles and a Hamiltonian cycle is $3$-colorable. The \EF{1} and maximal allocation resulting from our proof can be achieved by a generalized round-robin algorithm, where different rounds use different permutations of agents, and, on its turn, each agent picks its favorite remaining item subject to path-conflict constraints.
    Thus, although an \EF{1} and maximal allocation may fail to exist for general graphs, a robust relaxation of \EF{1}, namely \EFoneone{}, remains attainable on paths. Finally, we consider the case of \emph{uniform valuations} in which each item has the same value. We show that when the conflict graph is a tree, a maximal \EF{1} allocation is guaranteed to exist and can be found in polynomial time. 
\end{enumerate}

\paragraph{Related work.}
There is a growing body of research on fair division under constraints. For a comprehensive survey on the topic, see \citep{Suksompong21}. The concept of conflict constraints in fair division was introduced by~\citet{ChiarelliKMPPS20} and has been further explored in several subsequent works~\citep{HH22fair,KEG+24,BiswasFORC2023,LiFairScheduling2021}. In their study,~\citet{ChiarelliKMPPS20} explored different fairness objectives than those we consider, focusing on partial allocations that maximize the egalitarian social welfare (i.e., the utility of the worst-off agent). \citet{HH22fair} studied complete allocations satisfying fairness criteria such as EF1 and MMS (maximin fair share). 
\citet{BiswasFORC2023} generalized the model of \citet{HH22fair} by incorporating capacity of resources into their analysis.  
Additionally,~\citet{LiFairScheduling2021} considered an interval scheduling problem, with the goal of achieving fairness concepts such as EF1 and MMS.

Assigning independent subsets to agents naturally corresponds to a partial coloring of the conflict graph. A seminal result in this line of work is the Hajnal-Szemer\'{e}di theorem~\citep{HS70}, which states that any graph with maximum degree $\Delta$ admits an equitable
coloring when there are at least $\Delta+1$ colors.\footnote{An equitable coloring is a proper coloring in which the color classes are almost balanced. That is, each vertex is
assigned a color such that no pair of adjacent vertices have the same color and any two color classes differ in size by at
most one.} The distinguishing point with our work is that this result only focuses on equalizing the \emph{number} of vertices in each color class, whereas our model accounts for the (possibly differing) \emph{valuations} assigned by the agents to the vertices.

A related class of constraints is that of connectivity constraints in a graph~\citep{BouveretCeEl17,BiloCaFl22}, where each agent receives a connected bundle of a graph. Note that while connectivity constraints imposed by a sparse graph, such as a tree, generally result in fewer feasible allocations, in our context, sparsity actually leads to greater flexibility by increasing the number of feasible allocations.

\begin{table}[t]
\centering
\small
\begin{tabular}{|c|c|c|>{\centering\arraybackslash}p{2.25cm}|>{\centering\arraybackslash}p{5cm}|c|}
 \hline
 $\bm{n}$ & \textbf{Valuations} & \textbf{Graphs} & \textbf{Result} & \textbf{Notes} & \S\\
 \hline
 \rowcolor{myblue}
 \raisebox{-\height}{$2$} & \raisebox{-\height}{Monotone} & \raisebox{-\height}{General} & \raisebox{-\height}{Exists} & Efficient for bipartite/interval graphs or additive valuations & \raisebox{-\height}{\ref{sec:Results_Two_Agents}} \\
 \hline
 \rowcolor{myred}
 \raisebox{-\height}{$2$} & \raisebox{-\height}{Non-monotone} & \raisebox{-\height}{General} & Fails to exist and \NPH{} & \raisebox{-\height}{Even for identical additive valuations} & \raisebox{-\height}{\ref{sec:Non-Existence-Results}} \\ 
 \hline
 \rowcolor{myblue}
 \raisebox{-\height}{$>2$} & \raisebox{-\height}{Identical additive} & \raisebox{-\height}{Paths} & \raisebox{-\height}{Exists} & \EF{1} for monotone and \EFoneone{} for non-monotone & \raisebox{-\height}{\ref{sec:Results_n_Agents}} \\
 \hline
 \rowcolor{myred}
 \raisebox{-\height}{$>2$} & \raisebox{-\height}{Identical monotone} & \raisebox{-\height}{General} & Fails to exist and \NPH{} & Even for identical additive valuations if $n > 3$ & \raisebox{-\height}{\ref{sec:Non-Existence-Results}} \\
 \hline
\end{tabular}
\vspace{0.1in}
\caption{Summary of our results for \EF{1} and maximality. Each row presents an existential and/or computational result based on the assumptions specified in the corresponding columns. Rows highlighted in red indicate hardness or non-existence results, while blue rows represent existence results.}
\label{tab:Summary}
\end{table}

\section{Preliminaries}
\label{sec:Preliminaries}

Given any $r \in \mathbb{N}$, let $[r] \coloneqq \{1,2,\dots,r\}$. 

\paragraph{Problem instance.}
We use $M = \{o_1,o_2,\dots,o_m\}$ to denote the set of $m$ \emph{items} and $N = [n]$ to denote the set of $n$ \emph{agents}. Let $G = (M, E)$ denote an undirected graph, where each vertex corresponds to an item, and each edge corresponds to a conflict. We refer to $G$ as a \emph{conflict graph}. Each agent $i$ has a \emph{valuation function} $v_i: 2^M \rightarrow \mathbb{R}$; here,
$\mathbb{R}$ is the set of reals. 
We assume that $v_i(\emptyset) = 0$. 
For simplicity, the valuation of a single item $o \in M$, namely $v_i(\{o\})$, is denoted by $v_i(o)$. An instance of our problem is given by the tuple $(N, M, V, G)$ where $V = (v_1, v_2, \dots, v_n)$ denotes a valuation profile. We use $K_{s,t}$ to denote a complete bipartite graph with a bipartition in which one part contains $s$ vertices and another part includes $t$ vertices.

\paragraph{Classes of valuation functions.}
A valuation profile $\mathcal{V}$ is called \emph{monotone} if either all agents' valuations are monotone non-decreasing, or all agents' valuations are monotone non-increasing. A valuation function $v_i$ is \emph{monotone non-decreasing} if $v_i(S) \leq v_i(T)$ for every $S \subseteq T \subseteq M$. 
It is \emph{monotone non-increasing} if $v_i(S) \geq v_i(T)$ for every $S \subseteq T \subseteq M$. 
It is \emph{additive} if $v_i(S)=\sum_{o \in S} v_i(o)$ holds for every $S \subseteq M$ and $i\in N$. An instance $(N, M, V, G)$ is said to be \emph{of goods} (respectively,~\emph{of chores}) if each agent $i$ has a monotone non-decreasing (respectively,~monotone non-increasing) valuation function $v_i$. Additionally, a valuation profile $\mathcal{V}$ is called \emph{identical} if every agent $i \in N$ has the same valuation function; in this case, we denote this function by $v$. 
Our algorithmic results will assume access to a \emph{value query} oracle. In this model, for any given subset of items $S \subseteq M$ and any agent $i \in N$, a value query returns agent $i$'s value for the subset $S$, namely~$v_i(S)$.

\paragraph{Allocation.}
An \emph{allocation} is an ordered subpartition $\mathcal{A}=(A_1,\ldots, A_n)$ of $M$ where for every pair of distinct agents $i,j \in N$, $A_i \cap A_j = \emptyset$, $\cup_{i \in N} A_i \subseteq M$, and for each $i \in N$, $A_i$ is an \emph{independent set} of $G$, namely, there is no pair of items in $A_i$ that are adjacent to each other. The subset $A_i$ is called the \emph{bundle} of agent~$i$. An allocation is \emph{complete} if all items are allocated, i.e., $\cup_{i\in N} A_i=M$.

\paragraph{Fairness notions.}
An allocation is \emph{envy-free} if for every pair of agents $i,j \in N$, we have $v_i(A_i) \geq v_i(A_j)$~\citep{GS58puzzle,F67resource}. 
It is called \emph{envy-free up to one item} (\EF{1}) if for every pair of agents $i,j \in N$, there exists a set $S \subseteq A_i \cup A_{j}$ of size at most $1$ such that $v_i(A_i \setminus S) \ge v_i(A_{j} \setminus S)$~\citep{ACI+19fair}. 
An allocation is \emph{envy-free up to any item} (\EFX{}) \citep{aziz2022fair,AleksandrovWalsh2020MixedManna} if for every pair of agents $i,j \in N$, the following two conditions hold:
\begin{enumerate}[(i)]
    \item for every item $o \in A_j$ such that $v_i(A_j \setminus \{o\}) < v_i(A_j)$, we have $v_i(A_i) \geq v_i(A_j \setminus \{o\})$; and
    \item for every item $o \in A_i$ such that $v_i(A_i \setminus \{o\}) > v_i(A_i)$, we have $v_i(A_i \setminus \{o\}) \geq v_i(A_j)$.
\end{enumerate}
We consider a further relaxation of envy-freeness up to one item, denoted by EF$[1,1]$~\citep{ShoshanSeHa23}.
An allocation is \emph{envy-free up to one good and one chore} (EF$[1,1]$) if for every pair of agents $i,j \in N$, there exist sets $S \subseteq A_i$ and $T \subseteq A_{j}$, each of size at most $1$, such that $v_i(A_i \setminus S) \ge v_i(A_{j} \setminus T)$.

Note that for the goods-only problem, both \EF{1} and EF$[1,1]$ coincide with the standard definition of \EF{1} for goods. An allocation $\mathcal{A} = (A_1, \dots, A_n)$ is \emph{envy-free up to one good} (EF1 for goods) if for every pair of agents $i,j \in N$, either $A_j = \emptyset$, or $v_i(A_i) \geq v_i(A_j  \setminus \{o\})$ for some $o \in A_j$~\citep{B11combinatorial,LMM+04approximately}. An analogous equivalence also holds for chore instances. In this case, an allocation $\mathcal{A} = (A_1, \dots, A_n)$ is \emph{envy-free up to one chore} (EF1 for chores) if for every pair of agents $i,j \in N$, either $A_i = \emptyset$, or $v_i(A_i \setminus \{o\}) \geq v_i(A_j)$ for some $o \in A_i$~\citep{ACI+19fair,BSV21approximate}.

\paragraph{Efficiency notions.}
One of the fundamental notions of efficiency in fair division is \emph{completeness}, which requires that all items be allocated. Formally, an allocation $\mathcal{A}=(A_1,\ldots, A_n)$ is \emph{complete} if $\bigcup_{i \in N}A_i=M$.
Besides completeness, another commonly used notion of efficiency in fair division is \emph{Pareto optimality}. 
Specifically, an allocation $\mathcal{A}'$ \emph{Pareto dominates} another allocation $\mathcal{A}$ if $v_i(A'_i) \geq v_i(A_i)$ for every $i \in N$ and the inequality is strict for at least one agent. An allocation $\mathcal{A}$ is \emph{Pareto optimal} (\PO{}) if there is no other allocation that Pareto dominates it. 
We also consider a relaxed efficiency notion of {\em maximality}. An allocation $\mathcal{A}$ is \emph{maximal} if for every agent $i \in N$ and every unallocated item $o \in M \setminus \bigcup_{i \in N}A_i$, $o$ is adjacent to some item in $A_i$. 


\paragraph{Connection between goods and chores.}
For identical valuations, it is easy to see that the existence of a maximal EF1 allocation is equivalent for goods and chores.
Specifically, an allocation $\mathcal{A}$ is envy-free up to one good under valuation $v$ if and only if $\mathcal{A}$ is envy-free up to one chore under valuation $-v$. Thus, a maximal (respectively,~complete) allocation that is envy-free up to one good exists for $v$ if and only if a maximal (respectively,~complete) allocation that is envy-free up to one chore exists for $-v$.

\begin{restatable}
{proposition}{Equivalence}
Given a conflict graph $G$, let $\I^g$ denote an instance with $n$ agents with identical monotone non-decreasing valuation $v$, and let $\I^c$  denote an instance with $n$ agents with identical monotone non-increasing valuation $-v$. Then, an allocation $\A$ is envy-free up to one good $(\EF{1} \text{ for goods})$ in $\I^g$ if and only if $\A$ is envy-free up to one chore $(\EF{1} \text{ for chores})$ in $\I^c$. Similarly, an allocation $\A$ is envy-free up to any good $(\EFX{} \text{ for goods})$ in $\I^g$ if and only if $\A$ is envy-free up to any chore $(\EFX{} \text{ for chores})$ in $\I^c$.
\label{thm:Equivalence}
\end{restatable}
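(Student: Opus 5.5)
The plan is to unfold the definitions and check that every inequality defining the goods-side notion under $v$ turns into the corresponding chores-side inequality under $-v$ once we multiply by $-1$ and exchange the roles of agents $i$ and $j$. The point is that valuations are identical. Then $v(A_i)\ge v(A_j\setminus\{o\})$ says agent $i$ does not envy $j$ after removing $o$ from $j$'s bundle. Negated, it reads $-v(A_j\setminus\{o\}) \ge -v(A_i)$, which says agent $j$, under $-v$, does not envy $i$ after removing $o$ from $j$'s own bundle. So the ordered pair $(i,j)$ on the goods side corresponds to the ordered pair $(j,i)$ on the chores side. Because both definitions quantify over \emph{all} ordered pairs, this relabeling is a bijection of the conditions.

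For \EF{1}, I will fix $\A$ and an ordered pair $(i,j)$. The goods condition is: $A_j=\emptyset$, or $v(A_i)\ge v(A_j\setminus\{o\})$ for some $o\in A_j$. The chores condition for the pair $(j,i)$ under $-v$ is: $A_j=\emptyset$, or $-v(A_j\setminus\{o\})\ge -v(A_i)$ for some $o\in A_j$. These two conditions are literally equivalent. Since every ordered pair $(j,i)$ arises in this way from $(i,j)$, $\A$ is \EF{1} for goods in $\I^g$ iff it is \EF{1} for chores in $\I^c$. Note that feasibility, being independent sets, does not depend on valuations, so it does not interfere.

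For \EFX{}, I will use the general two-clause definition from the preliminaries. In $\I^g$, $v$ is non-decreasing, so $v(A_i\setminus\{o\})>v(A_i)$ never holds and clause (ii) is vacuous. Clause (i) says: for every $o\in A_j$ with $v(A_j\setminus\{o\})<v(A_j)$, we have $v(A_i)\ge v(A_j\setminus\{o\})$. In $\I^c$ with $-v$, which is non-increasing, clause (i) is vacuous, since $-v(A_j\setminus\{o\})<-v(A_j)$ is impossible. Clause (ii) for the pair $(j,i)$ then reads: for every $o\in A_j$ with $-v(A_j\setminus\{o\})>-v(A_j)$, we have $-v(A_j\setminus\{o\})\ge -v(A_i)$. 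Multiplying by $-1$, both the hypothesis and the conclusion become exactly those of clause (i) for the goods pair $(i,j)$. Ranging over all pairs gives the equivalence.

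The only delicate step is the bookkeeping of which clauses become vacuous under monotonicity, together with the strict-inequality side conditions in \EFX{} (the marginal must be nonzero). I will check explicitly that negation maps the strict condition $v(A_j\setminus\{o\})<v(A_j)$ to $-v(A_j\setminus\{o\})>-v(A_j)$, so that the set of items quantified over is the same on both sides. Everything else is routine.
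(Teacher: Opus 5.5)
Your proposal is correct and takes essentially the same route as the paper, which justifies the proposition only by observing that replacing $v$ with $-v$ turns the goods condition for the ordered pair $(i,j)$ into the chores condition for $(j,i)$. Your EFX bookkeeping supplies the details that observation leaves implicit: clause (ii) is vacuous under $v$, clause (i) is vacuous under $-v$, and the strict side condition $v(A_j\setminus\{o\})<v(A_j)$ negates to exactly the hypothesis of clause (ii).
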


\section{Non-Existence Results}
\label{sec:Non-Existence-Results}

In this section, we will show that various combinations of fairness and efficiency notions can fail to exist in our problem. Interestingly, all of our counterexamples involve two agents with identical additive valuations. Further, with the exception of \Cref{eg:EF1_PO_goods_cycle_counterexample}, the conflict graph is a path graph. 
Let us start with a negative result for \EFX{} and maximality.


\begin{example}[Non-existence of \EFX{}+maximality]
Consider an instance with two agents $1$ and $2$ and four goods $o_1$, $o_2$, $o_3$, $o_4$ that are identically valued by the agents at $1$, $1$, $1$, and $4$, respectively. The conflict graph is a path, as shown below.

\begin{figure}[h]
\centering
\begin{tikzpicture}
    \tikzset{mynode/.style = {shape=circle,draw,inner sep=1pt}} 
    \node[mynode] (1) at (0,0) {$o_1$};
    \node[mynode] (2) at (1,0) {$o_2$};
    \node[mynode] (3) at (2,0) {$o_3$};
    \node[mynode] (4) at (3,0) {$o_4$};
    \draw (1) -- (2);
    \draw (2) -- (3);
    \draw (3) -- (4);
\end{tikzpicture}
\label{fig:EFX_maximal_counterexample}
\end{figure}
Let $\A$ be a maximal allocation. By maximality, the good $o_4$ must be allocated: if $o_3$ is allocated to one agent, then $o_4$ can be allocated to the other agent, while if $o_3$ is unallocated, then $o_4$ can be allocated to either agent. Thus, without loss of generality, suppose that $o_4$ is assigned to agent $1$.

We claim that, if $\A$ satisfies \EFX{}, then agent $1$ cannot receive any other good. Indeed, agent $2$ can receive at most two of the remaining goods $o_1,o_2,o_3$, and hence $v(A_2)\leq 2$. If agent $1$ receives $o_4$ together with some other good $o$, then removing $o$ from $A_1$ leaves value at least $4$, which is still larger than $v(A_2)$. Hence agent $2$ envies agent $1$ even after the removal of the good $o$, contradicting \EFX{}.

It remains to show that maximality is then impossible. If agent $2$ receives exactly two of the goods $o_1,o_2,o_3$, feasibility forces $A_2=\{o_1,o_3\}$; but then the unallocated good $o_2$ can be assigned to agent $1$, contradicting maximality. If agent $2$ receives at most one of $o_1,o_2,o_3$, then at least one of $o_1$ or $o_2$ is unallocated and can be feasibly assigned to agent $1$, again contradicting maximality. Therefore, no \EFX{} and maximal allocation exists.\qed
\label{eg:EFX_Maximal_Counterexample}
\end{example}

The non-existence of \EFX{} motivates the consideration of a weaker approximation such as \EF{1}. Our next example shows that \EF{1} and Pareto optimality can be mutually incompatible even for two agents with identical additive goods valuations on a cycle graph.

\begin{figure}[t]
	\begin{center}
	    \scalebox{1}{
	       \begin{tikzpicture}[scale=0.9, every node/.style={scale=0.9}]
                \centering
	            \tikzstyle{onlytext}=[text width = 3cm, align = center]
	            \tikzset{venn circle/.style={circle,minimum width=0mm,fill=#1,opacity=0.1}}
                    \node[onlytext] (Result) at (7.45,-0.75) {{Results}};
                    %
                    %
                    \draw [line width=28pt,opacity=0.4,color=red!55,line cap=round,rounded corners] (-0.5,-2) -- (0.5,-2) -- (2.5,-4.5) -- (3.5,-4.5) -- (5,-2) -- (9.5,-2);
                    \draw [line width=28pt,opacity=0.4,color=red!55,line cap=round,rounded corners] (-0.5,-4.5) -- (0.5,-4.5) -- (2.5,-2) -- (3.5,-2) -- (5,-3.25) -- (9.5,-3.25);
                    \draw [line width=28pt,opacity=0.4,color=green!55,line cap=round,rounded corners] (-0.5,-4.5) -- (9.5,-4.5);
	            \node[onlytext] () at (7.45,-2) {May not exist\\(\Cref{eg:EFX_Maximal_Counterexample})};
                    %
                    %
	            \node[onlytext] () at (7.45,-3.25) {May not exist\\(\Cref{eg:EF1_PO_line_graph_counterexample})};
                    %
                    %
	            \node[onlytext] () at (7.2,-4.5) {\begin{tabular}{c}{Exists in certain settings}\\{(See \Cref{tab:Summary})}\end{tabular}};
                    \draw[-, line width=1pt] (-1,-1.25) -- (10,-1.25);
                    \node[onlytext] (Efficiency) at (3,-0.75) {\begin{tabular}{c}{Efficiency}\\{Notions}\end{tabular}};
                    \node[onlytext] (PO) at (3,-2) {\PO{}};
                    \node[onlytext] (Maximal) at (3,-4.5) {Maximal};
                    \node[onlytext] (Fairness) at (0,-0.75) {\begin{tabular}{c}{Fairnesss}\\{Notions}\end{tabular}};
                    \node[onlytext] (EFX) at (0,-2) {\EFX{}};
                    \node[onlytext] (EFone) at (0,-4.5) {\EF{1}};
                    \draw[->, line width=1pt] (EFX) -- (EFone);
                    \draw[->, line width=1pt] (PO) -- (Maximal);
	       \end{tikzpicture}
            }
	\end{center}
	\caption{Summary of our results. The arrows denote logical implications between fairness and efficiency notions. The positive and negative results are shown in green and red, respectively.}
	\label{fig:relations}
\end{figure}

\begin{example}[Non-existence of  \EF{1}+Pareto optimality for goods]
Consider an instance with two agents $1$ and $2$ and six goods $o_1,\dots,o_6$ that are identically valued by the agents at $1,2,1,2,1,2$, respectively. The conflict graph is a cycle, as shown below.

\begin{figure}[h]
\centering
\begin{tikzpicture}
    \tikzset{mynode/.style = {shape=circle,draw,inner sep=1pt}} 
    \node[mynode] (1) at (0:1.4) {$o_1$};
    \node[mynode] (2) at (60:1.4) {$o_2$};
    \node[mynode] (3) at (120:1.4) {$o_3$};
    \node[mynode] (4) at (180:1.4) {$o_4$};
    \node[mynode] (5) at (240:1.4) {$o_5$};
    \node[mynode] (6) at (300:1.4) {$o_6$};
    \draw (1) -- (2);
    \draw (2) -- (3);
    \draw (3) -- (4);
    \draw (4) -- (5);
    \draw (5) -- (6);
    \draw (6) -- (1);
\end{tikzpicture}
\label{fig:EF1_PO_goods_cycle_counterexample}
\end{figure}

Let $L=\{o_1,o_3,o_5\}$ be the set of low-value goods and $H=\{o_2,o_4,o_6\}$ be the set of high-value goods. Since both $L$ and $H$ are independent sets, the allocations $(L,H)$ and $(H,L)$ are feasible and have value profiles $(3,6)$ and $(6,3)$, respectively.

We first characterize the Pareto optimal allocations. Any agent whose bundle does not contain all three high-value goods has value at most $4$: with no high-value good its value is at most $3$, with one high-value good its value is at most $3$, and with two high-value goods it cannot receive any low-value good. Moreover, at most one agent can have value $4$, since value $4$ requires receiving two high-value goods. Hence, if neither agent receives all goods in $H$, then one of the allocations $(L,H)$ or $(H,L)$ Pareto dominates the allocation: if one agent has value $4$, give $H$ to that agent and $L$ to the other; otherwise, both agents have value at most $3$. If one agent receives all goods in $H$, then the other agent must receive all goods in $L$ in any Pareto optimal allocation; otherwise, assigning the missing low-value goods to the other agent gives a Pareto improvement. Therefore, every Pareto optimal allocation is either $(L,H)$ or $(H,L)$.

Neither of these two allocations satisfies \EF{1}. In each case, the agent receiving $L$ has value $3$, while the agent receiving $H$ has value $6$; even after removing any one high-value good from $H$, the remaining value is $4>3$. Thus, no allocation satisfies both \EF{1} and Pareto optimality.\qed
\label{eg:EF1_PO_goods_cycle_counterexample}
\end{example}

We present a similar example for chores on a path graph.

\begin{example}[Non-existence of  \EF{1}+Pareto optimality for chores]
Consider an instance with two agents $1$ and $2$ and five items $o_1, \dots, o_5$ that are identically valued by the agents at $-2$, $-10$, $-1$, $-10$, and $-2$, respectively.  The conflict graph is a path graph, as shown below:

\begin{figure}[h]
\centering
\begin{tikzpicture}
    \tikzset{mynode/.style = {shape=circle,draw,inner sep=1pt}} 
    \node[mynode] (1) at (0,0) {$o_1$};
    \node[mynode] (2) at (1,0) {$o_2$};
    \node[mynode] (3) at (2,0) {$o_3$};
    \node[mynode] (4) at (3,0) {$o_4$};
    \node[mynode] (5) at (4,0) {$o_5$};
    \draw (1) -- (2);
    \draw (2) -- (3);
    \draw (3) -- (4);
    \draw (4) -- (5);
\end{tikzpicture}
\label{fig:EF1_PO_line_graph_counterexample}
\end{figure}

Any maximal allocation that allocates a ``heavy'' item ($o_2$ or $o_4$) to one of the agents, say agent $1$, can be shown to be Pareto dominated by an allocation that assigns the extreme items ($o_1$ and $o_5$) to agent $1$ and the middle item $o_3$ to agent $2$. Therefore, any maximal allocation that leaves both heavy items unassigned must be of the form $(\{o_1,o_5\},\{o_3\})$ or $(\{o_3\},\{o_1,o_5\})$, neither of which satisfy \EF{1}.\qed
\label{eg:EF1_PO_line_graph_counterexample}
\end{example}


Note that a Pareto optimal allocation (without \EF{1}) always exists; in particular, a allocation that maximizes the sum of agents' utilities~(i.e., the utilitarian social welfare maximizing allocation) over the space of all maximal allocations is Pareto optimal.

Another notion of efficiency is \emph{completeness} which asks that no chore should be left unassigned. A complete allocation may not exist, such as with two agents and a triangle conflict graph. However, even when a complete allocation exists, no such allocation may satisfy \EF{1}.

\begin{example}[Non-existence of \EF{1}+completeness]
Consider an instance with two agents $1$ and $2$ and four goods $o_1,o_2,o_3,o_4$ such that the odd-indexed goods are valued at $1$ each and the even-indexed goods are valued at $3$ each by both agents. The conflict graph is a path graph, as in \Cref{eg:EFX_Maximal_Counterexample}.

Any complete allocation must assign the two odd-indexed goods to one agent and the two even-indexed goods to the other agent. The former agent has value $2$, while the latter agent has value $6$. Even after removing either even-indexed good from the latter bundle, its value remains $3>2$. Hence every complete allocation violates \EF{1}.\qed
\label{eg:EF1_complete_counterexample}
\end{example}




\subsection{Limitations of Algorithms from the Unconstrained Setting}
\label{subsec:Limitations}

Part of what makes the fair division with conflict constraints problem challenging is that algorithmic techniques from unconstrained fair division do not automatically extend to the constrained problem. We will illustrate this challenge through two examples that demonstrate that the well-known round robin and envy-cycle elimination algorithms, which satisfy \EF{1} for unconstrained items, fail to do so in the presence of conflicts.

The round robin algorithm iterates over the agents according to a fixed permutation, and each agent, on its turn, picks its favorite remaining item. This algorithm is known to find an \EF{1} allocation under additive valuations in the unconstrained problem. However, when the conflict graph is a path graph, round robin can fail to achieve \EF{1} even for two agents with identical valuations.

\begin{example}[Round robin fails \EF{1}]
Consider an instance with eight items $o_1,\dots,o_8$ and two agents $1$ and $2$ with identical additive valuations. The valuation profile and the conflict graph are shown below.
%
\begin{figure}[ht]
\centering
	\begin{tikzpicture}[scale=0.8]
		\tikzset{mynode/.style = {shape=circle,draw,inner sep=1pt}} 
            \node[] (0) at (0,2.7) {};
		\node[mynode] (1) at (0,2) {$o_1$};
		\node[mynode] (2) at (1,2) {$o_2$};
		\node[mynode] (3) at (2,2) {$o_3$};
		\node[mynode] (4) at (3,2) {$o_4$};
		\node[mynode] (5) at (4,2) {$o_5$};
		\node[mynode] (6) at (5,2) {$o_6$};
		\node[mynode] (7) at (6,2) {$o_7$};
		\node[mynode] (8) at (7,2) {$o_8$};
		\draw (1) -- (2);
		\draw (2) -- (3);
		\draw (3) -- (4);
		\draw (4) -- (5);
		\draw (5) -- (6);
		\draw (6) -- (7);
		\draw (7) -- (8);
		\node[] (1) at (-0.75,1.25) {1:};
		\node[] (1) at (0,1.25) {10};
		\node[] (2) at (1,1.25) {3};
		\node[] (3) at (2,1.25) {8};
		\node[] (4) at (3,1.25) {9};
		\node[] (5) at (4,1.25) {7};
		\node[] (6) at (5,1.25) {2};
		\node[] (7) at (6,1.25) {1};
		\node[] (8) at (7,1.25) {0};
		\node[] (1) at (-0.75,0.5) {2:};
		\node[] (1) at (0,0.5) {10};
		\node[] (2) at (1,0.5) {3};
		\node[] (3) at (2,0.5) {8};
		\node[] (4) at (3,0.5) {9};
		\node[] (5) at (4,0.5) {7};
		\node[] (6) at (5,0.5) {2};
		\node[] (7) at (6,0.5) {1};
		\node[] (8) at (7,0.5) {0};
	\end{tikzpicture}
\end{figure}

Suppose agent $1$ goes first in the round robin algorithm. Then the resulting allocation is $A_1 = \{o_1,o_3,o_5,o_7\}$ and $A_2 = \{o_4,o_2,o_6,o_8\}$. Note that $v_2(A_2) < v_2(A_1\setminus \{o\})$ for every $o \in A_1$, implying that the allocation fails \EF{1}. 

The reason why round robin fails \EF{1} in this example is that after picking $o_4$ in the first round, agent $2$ can no longer pick $o_5$ in its next turn due to feasibility constraint. Thus, it has to pick its favorite among the feasible items, which is $o_2$. This results in a ``build up'' of envy in each round, which cannot be compensated even after removing the worst item $o_8$ in its bundle.\qed
\end{example}

The envy-cycle elimination algorithm is known to find an \EF{1} allocation for indivisible goods in the unconstrained setting~\citep{LMM+04approximately}. 
The algorithm works as follows: At each step, the algorithm assigns an unallocated good to a ``source'' vertex in the envy graph. (The envy graph associated with a given allocation is a directed graph whose vertices are the agents and there is an edge $(i,j)$ if agent $i$ envies agent $j$ in the given allocation. A source vertex refers to an agent who is not envied by any other agent.) If a source does not exist, there must exist a cycle of envy edges. By repeatedly resolving such cycles (i.e., by cyclically swapping bundles), one can eventually make one of the vertices a source.

\begin{example}[Envy-cycle elimination fails \EF{1}]
Consider an instance with two agents $1$ and $2$ with identical additive valuations and five items $o_1,\dots,o_5$ whose conflict graph is a path graph (similar to \Cref{eg:EFX_Maximal_Counterexample}). The items are valued at $4$, $0$, $2$, $3$, and $2$, respectively, by both agents. Note that since the valuations are identical, an envy cycle can never occur. Thus, a source agent always exists.

At each step of the algorithm, we allow a source agent to choose its favorite feasible item. Thus, agent $1$ starts by picking the item $o_1$, which makes agent $2$ the source. Next, agent $2$ picks the item $o_4$, which again makes agent $2$ the source. However, on its next turn, agent $2$ cannot pick either $o_3$ or $o_5$ due to conflict constraints. Hence, it must settle for item $o_2$. It can be checked that the induced allocation is $(\{o_1,o_3,o_5\},\{o_2,o_4\})$. However, \EF{1} is violated from agent $2$'s perspective.\qed
\end{example}

\section{Results for Two Agents}
\label{sec:Results_Two_Agents}

In this section, we will present our results for the case of two agents. Our main result is that under monotone valuations, an \EF{1} and maximal allocation always exists for any conflict graph. Furthermore, when the valuations take integer values, such an allocation can be computed in pseudopolynomial time~(\Cref{thm:TwoAgents-GeneralGraph-MonotoneVals-Existence}). 
In \cref{sec:two-agents:specialized}, we will later consider specialized settings where either the conflict graph or the valuation functions have more structure. For these settings, we will show that an \EF{1} allocation can be computed in polynomial time without assuming integrality of valuations.

\begin{restatable}{theorem}{TwoAgents-GeneralGraph-MonotoneVals-Existence}
For two agents with monotone valuation functions and any conflict graph, there exists an allocation that is both \EF{1} and  maximal. Moreover, for integral valuations, such an allocation can be computed in pseudopolynomial time.
\label{thm:TwoAgents-GeneralGraph-MonotoneVals-Existence}
\end{restatable}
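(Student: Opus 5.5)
We plan to follow the color-switching approach sketched in the introduction. By \Cref{thm:Equivalence}-style symmetry (negating valuations and reversing roles), it suffices to handle the goods case, where both $v_1,v_2$ are monotone non-decreasing; the chores case is analogous with envy directions flipped. The idea is to build a sequence of \emph{maximal} allocations $\A^0,\A^1,\dots,\A^T$ such that $\A^0$ is favorable to agent~2 (agent~1 does not envy, or rather agent~2 envies), $\A^T$ is the ``mirror'' situation, and consecutive allocations differ by moving a single item. Concretely, we start from a maximal allocation $(A_1,A_2)$ and consider the swapped allocation $(A_2,A_1)$, which is also feasible and maximal. If $(A_1,A_2)$ is not \EF{1}, some agent, say agent~1, strongly envies: $v_1(A_1) < v_1(A_2\setminus\{o\})$ for all $o\in A_2$. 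We then interpolate between the two.

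\textbf{Key local step.} Given a maximal allocation $(X,Y)$, we move one item $o\in Y$ to $X$'s side: set $X' = (X\setminus N(o))\cup\{o\}$, $Y' = Y\setminus\{o\}$. The items of $X\cap N(o)$ are dropped. We then restore maximality greedily by adding unallocated items to either bundle, preferring to add them to $Y'$ first (since only $Y'$ lost value, and anything adjacent to $o$ that was in $X$ can now go to $Y'$ if compatible). The resulting allocation is maximal and feasible. A simple potential (e.g.\ $|X\cap A_2|$ increasing or a lexicographic count of items of $A_2$ held by agent~1) guarantees termination after polynomially many switches at $(A_2,A_1)$-like configurations; alternatively we run the process as a ``color switching'' along the symmetric difference, in the spirit of Kempe chains.

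\textbf{EF1 across a sign change.} Along the sequence, consider the quantity ``agent~1 strongly envies agent~2'' versus ``agent~2 strongly envies agent~1''. At the start agent~1 strongly envies; at the end (the swapped allocation, after reversing) agent~2 would strongly envy agent~1, or the allocation is \EF{1}. Hence there are consecutive allocations $\A^t,\A^{t+1}$ where the strong envy switches. The crucial claim, which is the adjacency condition, is that if $\A^{t+1}$ arises from $\A^t$ by moving a single item $o$ from agent~2 to agent~1 (with agent~1 losing only items that are then either given to agent~2 or dropped, and agent~2 gaining items), then one of the two is \EF{1}: if agent~1 strongly envies in $\A^t$ and agent~2 strongly envies in $\A^{t+1}$, monotonicity gives $v_2(A^{t+1}_2)\ge v_2(A^t_2\setminus\{o\})$ and we must compare with $A^{t+1}_1\setminus\{o\}\subseteq A^t_1$; so agent~2 doesn't envy $A^t_1$ up to... and we need to combine with agent~1's view. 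The main obstacle is exactly here: the two agents have different valuations, and dropped neighbors of $o$ break the clean one-item exchange. I would handle this by designing the switch so that dropped items are immediately reassigned to agent~2 whenever possible and argue that agent~2's bundle only grows except for $o$, while agent~1's only shrinks except for $o$; then in $\A^t$ agent~2 is not strongly envious (because removing $o$-type item...) and check both agents' \EF{1} condition at one of the two allocations by case analysis.

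\textbf{Pseudopolynomial time.} For integral valuations, I would use a potential such as $v_1(A_1)$ (or $v_1(A_1)-v_1(A_2)$), which changes monotonically by at least $1$ per phase of the switching sequence and is bounded by $v_1(M)$, giving a number of steps polynomial in $m$ and the maximum value; each step uses polynomially many value queries. A binary-search-free linear walk thus finds the crossing point, yielding the pseudopolynomial bound.
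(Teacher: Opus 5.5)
\textbf{The crossing step fails as designed.} The overall shape of your walk is reasonable: single items move one way while bulk changes go the other way. But the property you need, that agent~1's bundle ``only shrinks except for $o$'', is false for your switch. Restoring maximality after dropping $X\cap N(o)$ can force new items into agent~1's bundle. This happens to any unallocated item whose only neighbour in $X$ was a dropped neighbour of $o$ and which is blocked from agent~2 by some other item of $Y$: it now has to go to agent~1.

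Concretely, take the path $p_1p_2p_3p_4$ with identical additive values $2,0,2,1$ and the maximal allocation $(\{p_2\},\{p_1,p_4\})$. Here agent~1 strongly envies: it has $0$ against $3$, and removing any one item still leaves at least $1$. Moving $p_1$ to agent~1 drops $p_2$, which must then go to agent~2. Now $p_3$ is blocked from agent~2 by $p_4$ and no longer blocked from agent~1. So the only maximal completion of $(\{p_1\},\{p_4\})$ is $(\{p_1,p_3\},\{p_2,p_4\})$, with values $4$ against $1$, and agent~2 still envies after removing any single item. Envy changes sign across one step, yet neither allocation is \EF{1}, even with identical valuations. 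A different choice of item to move may avoid this in a particular instance, but your proposal gives no rule that guarantees it, and such a rule is the crux.

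What the crossing argument actually needs is order-adjacency (\Cref{def:order-adjacent}): the agent that was better off loses at most one item, and the agent that becomes better off gains at most one item. In the example the latter gains two. The paper obtains this with a global construction rather than local repair:
\begin{itemize}
\item it fixes a maximal independent set $S$ and moves its elements from agent~1 to agent~2 one at a time in index order;
\item it assigns the items of $\overline{S}$ by threshold rules on $X_{\maxconf}$ and $X_{\minconf}$ (\Cref{algo:sequence-from-good-S});
\item as a result every intermediate allocation is automatically maximal (\Cref{lemma:ao-maximal}) and consecutive allocations are order-adjacent (\Cref{lemma:ao-order-adjacent}).
\end{itemize}

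\textbf{Two further ingredients are missing.}

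First, a sign-change argument requires identical valuations. The proof of \Cref{lemma:EF1-from-order-adjacent} rests on $v(A_2)+v(A'_1)\ge v(A_2\cap A'_2)+v(A_1\cap A'_1)$, which evaluates all bundles with one function. With $v_1\neq v_2$ the argument fails even under perfect adjacency. For example, take no conflicts and items $a,o,b$ valued $(0,1,1)$ by agent~1 and $(1,1,0)$ by agent~2. The complete allocations $(\{a\},\{o,b\})$ and $(\{a,o\},\{b\})$ differ only by moving $o$, agent~1 strongly envies in the first and agent~2 in the second, and neither is \EF{1}. So no case analysis can rescue your ``crucial claim'' in the heterogeneous setting.

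The paper handles this by first solving for two copies of $|v_1|$ and then letting agent~2 choose between $\A$ and its reversal (\Cref{lemma:two-agents:monotone-goods}). \Cref{thm:Equivalence} is used for the chores case, and it holds only for identical valuations. So your opening goods-only reduction and your remark that ``chores are analogous'' also depend on this cut-and-choose step.

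Second, nothing guarantees that your walk ends with the other agent better off. Restorations can give agent~2 an ever more valuable independent set. Also, $v_1(A_1)$ is not monotone along the walk, since agent~1 loses $X\cap N(o)$ at each step, so it cannot serve as your potential.

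The paper secures the endpoints through the condition $v(S)\ge\max\{v(X_{\minconf}),v(X_{\maxconf})\}$ of \Cref{lemma:spropforproof}. This is obtained either by taking $S$ of maximum value among maximal independent sets, or by local search: replace $S$ with a maximal extension of whichever of $X_{\maxconf}$ or $X_{\minconf}$ beats it. The strict increase of $v(S)$ by at least $1$ per round is what gives the pseudopolynomial bound.
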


The first step in the proof of \cref{thm:TwoAgents-GeneralGraph-MonotoneVals-Existence} is a reduction to the case where both the agents have \emph{identical} valuations and all the items are goods. Formally, we show that \cref{thm:two-agents:monotone-goods} below implies \cref{thm:TwoAgents-GeneralGraph-MonotoneVals-Existence}. This part of the proof is a fairly standard argument and is presented next. Note that the reduction to the identical valuations setting works even when the valuations are not monotone. Later, we prove \cref{thm:two-agents:monotone-goods}.

\begin{restatable}{theorem}{two-agents:monotone-goods}
\label{thm:two-agents:monotone-goods}
For two agents with identical monotone non-decreasing valuation functions and any conflict graph,  there exists an allocation that is both \EF{1} and maximal. Moreover, for integral valuations, such an allocation can be computed in pseudopolynomial time.
\end{restatable}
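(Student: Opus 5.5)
The plan is to connect two maximal allocations whose envy points in opposite directions by a chain of maximal allocations, chosen so that consecutive allocations differ by an elementary ``color switch''. At the first place in the chain where the sign of envy flips, I expect \EF{1} to hold. Since valuations are identical, write $d(\A)=v(A_1)-v(A_2)$. I would start from any maximal allocation $(X,Y)$, for instance a greedy one. The swapped allocation $(Y,X)$ is also maximal and satisfies $d(Y,X)=-d(X,Y)$. It therefore suffices to walk from $(X,Y)$ to $(Y,X)$, or more generally to move until the sign of $d$ changes, through maximal allocations linked by the following adjacency condition. Consecutive allocations $(A,B)$ and $(A',B')$ are \emph{adjacent} if there is a single item $x$ with $A\setminus\{x\}\subseteq A'$ and $B'\setminus\{x\}\subseteq B$, or the same with the roles of the agents reversed. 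In words, one agent loses at most $x$ and the other gains at most $x$.

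The key lemma is that adjacency forces \EF{1} at the sign change. Suppose $(A,B)$ is not \EF{1} with $v(A)>v(B)$. Then $v(A\setminus\{x\})>v(B)$. By monotonicity, $v(A')\ge v(A\setminus\{x\})>v(B)\ge v(B'\setminus\{x\})$. So in $(A',B')$ agent~$1$ envies agent~$2$ by at most the item $x\in B'$, and agent~$2$ does not envy at all. Hence if the sign of $d$ flips from $(A,B)$ to $(A',B')$, one of the two allocations is \EF{1}. The existence part therefore reduces to constructing the chain.

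\textbf{Constructing the chain (main obstacle).} Take a current maximal $(A,B)$ in which agent~$1$ is envied beyond \EF{1}. Moves of two kinds remain available.
\begin{itemize}
\item[(i)] If some $x\in A$ has no neighbour in $B$, transfer $x$ to $B$. Then refill $A$ greedily with those unallocated items that were blocked only by $x$. Every other unallocated item was already adjacent to $B\subseteq B'$, so the result is maximal and satisfies the adjacency condition.
\item[(ii)] Otherwise every vertex of $A$ has a neighbour in $B$. In this case I would perform a Kempe-chain switch: swap colours on a connected component of the bipartite graph $G[A\cup B]$. This can be decomposed into single-vertex steps, where each intermediate vertex is temporarily unallocated and the stranded unallocated vertices are re-added greedily to whichever side admits them.
\end{itemize}
The difficulty is to verify that each elementary step of (ii) meets the adjacency condition and keeps maximality. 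Re-added vertices must go only to the side that is ``gaining''. I would order the switch along a BFS of the component so that this holds, and I would use a potential argument to show progress: the process either reaches $(Y,X)$, or $d$ strictly decreases in the relevant direction.

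For the algorithmic claim with integral valuations, I would use $\Phi=v(A)-v(B)\in\{-v(M),\dots,v(M)\}$ as a potential. Each macro-move should strictly decrease $\Phi$ while the sign has not yet flipped, or else terminate at an \EF{1} allocation. This bounds the number of moves by $O(v(M))$. Each move consists of polynomially many value queries and graph operations, so the total running time is pseudopolynomial.
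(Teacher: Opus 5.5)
Your sign-change lemma is correct and is essentially the paper's lemma on order-adjacent pairs. The paper's version even allows the item agent~1 loses to differ from the item agent~2 gains. It holds only in the orientation you prove it for, though. If the chain may also use the reversed orientation, it fails: on three isolated vertices with $v(p)=v(q)=5$ and $v(r)=0$, the step $(\{p,q\},\{r\})\to(\{r\},\{p,q\})$ is reversed-adjacent and flips the sign, yet neither allocation is \EF{1}.

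The real gap is the chain itself. It carries all the difficulty and is never constructed, and move (ii) does not work as described. To hand $a\in A$ to agent~2 you must remove $N(a)\cap B$ from agent~2. Every unallocated vertex that thereby loses its last neighbour on agent~2's side, and is blocked for agent~1, must then enter agent~2's bundle in that same step. But agent~2 may gain only one item per step. Concretely, take edges $ab$, $a'b'$, $bu_1$, $bu_2$, $a'u_1$, $a'u_2$ and $(A,B)=(\{a,a'\},\{b,b'\})$, where every vertex of $A$ has a neighbour in $B$. The Kempe switch of the component $\{a,b\}$ cannot be done in adjacent maximal steps while agent~1 keeps $a'$. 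When $b$ leaves agent~2, the non-adjacent vertices $u_1,u_2$ are blocked for agent~1 by $a'$ and have no other neighbour on agent~2's side, so agent~2 would have to gain both at once. A chain to the swap does exist in this example, but its first step must take $a'$ away from agent~1, and your procedure gives no rule for such choices.

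Two further problems remain. First, (i) and (ii) are local repairs not directed at $(Y,X)$, so nothing shows the walk reaches $(Y,X)$ or that the sign of $d$ ever flips. Second, $\Phi=v(A)-v(B)$ is not monotone, because the refill in (i) can give agent~1 valuable items. Take edges $ab$, $xu$, $bu$, the allocation $(A,B)=(\{x,a\},\{b\})$, and additive values $v(x)=2$, $v(a)=5$, $v(b)=0$, $v(u)=10$. Move (i) yields $(\{a,u\},\{b,x\})$, which is still not \EF{1}, while $\Phi$ rises from $7$ to $13$. So neither existence nor the pseudopolynomial bound follows. Note also that an efficiently computable, polynomial-length chain from a greedy maximal allocation to its swap would give a polynomial-time algorithm for all monotone valuations, which the paper lists as open. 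This step is therefore not a routine detail.

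The missing idea is to give up the swap as the endpoint and build the chain around one maximal independent set $S$ with indexed vertices. For $o'\notin S$, let $\maxconf(o')$ and $\minconf(o')$ be the largest and smallest index of an $S$-neighbour. Build $X_{\maxconf}$ greedily from $\overline{S}$ in increasing order of $\maxconf$, and $X_{\minconf}$ greedily in decreasing order of $\minconf$. Let $\mathcal{A}_o$ give agent~1 the set $(S\setminus[o])\cup\{o'\in X_{\maxconf}:\maxconf(o')\le o\}$ and agent~2 the set $(S\cap[o])\cup\{o'\in X_{\minconf}:\minconf(o')>o\}$.

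From $\mathcal{A}_{o-1}$ to $\mathcal{A}_o$, agent~1 loses at most $o$ and agent~2 gains at most $o$, so your adjacency holds. The greedy orders make every $\mathcal{A}_o$ maximal: an unassigned $o'$ either has $S$-neighbours on both sides, or has a neighbour in $X_{\maxconf}$ or $X_{\minconf}$ that is already in the relevant bundle.

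The chain runs from $(S,X_{\minconf})$ to $(X_{\maxconf},S)$. The sign change therefore needs $v(S)\ge\max\{v(X_{\minconf}),v(X_{\maxconf})\}$, which holds when $S$ is a maximum-value maximal independent set; this gives existence. For the algorithm, the right potential is $v(S)$, not $v(A)-v(B)$. If the condition fails, replace $S$ by a maximal independent extension of whichever of $X_{\maxconf}$, $X_{\minconf}$ beats it. Then $v(S)$ strictly increases, so with integral values there are at most $v(M)$ rounds, each taking polynomial time.
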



To show that \cref{thm:two-agents:monotone-goods} implies \cref{thm:TwoAgents-GeneralGraph-MonotoneVals-Existence}, we will make use of~\Cref{lemma:two-agents:monotone-goods}. This result employs a standard ``cut-and-choose'' argument, which allows us to assume, without loss of generality, that both agents have identical valuations. Note that this lemma does not require the valuations to be monotone.

\begin{lemma}
Consider $n = 2$ agents and any set of items. 
Let $v$ and $v'$ be (not necessarily monotone) valuation functions. For any conflict graph, if $\mathcal{A} = \paren{ A_1, A_2 }$ is a maximal and \EF{1} allocation when both agents have valuation $v$, then either $\mathcal{A}$ or $\mathcal{A}^{ \mathsf{rev} } = \paren{ A_2, A_1 }$ is maximal and \EF{1} when one of the agents has valuation $v$ and the other has valuation $v'$.
\label{lemma:two-agents:monotone-goods}
\end{lemma}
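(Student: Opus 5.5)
The proof is a cut-and-choose argument. Suppose agent $1$ has valuation $v$ and agent $2$ has valuation $v'$; the case where agent $1$ has $v'$ is symmetric. The plan is to give agent $2$ the bundle it prefers under $v'$. If $v'(A_2) \ge v'(A_1)$, we take $\mathcal{A}$; otherwise we take $\mathcal{A}^{\mathsf{rev}}$. We then check maximality, \EF{1} for the agent with valuation $v$, and \EF{1} for the agent with valuation $v'$.

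\emph{Maximality.} Maximality depends only on the pair of bundles and the graph, not on valuations or labels. The definition requires every unallocated item to be adjacent to some item in \emph{each} bundle, and this condition is symmetric in the two bundles. The set of unallocated items is also the same in $\mathcal{A}$ and $\mathcal{A}^{\mathsf{rev}}$. Hence both allocations are maximal, and both are feasible because each bundle is an independent set.

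\emph{The agent with valuation $v$.} By hypothesis, $\mathcal{A}$ is \EF{1} when both agents have valuation $v$. This means that for the ordered pair $(1,2)$ there is a set $S \subseteq A_1 \cup A_2$ with $|S|\le 1$ and $v(A_1\setminus S)\ge v(A_2\setminus S)$. For the ordered pair $(2,1)$ there is a set $S'$ with $|S'|\le 1$ and $v(A_2\setminus S')\ge v(A_1\setminus S')$. The \EF{1} condition for agent $1$ uses only $v$ and the pair (own bundle, other bundle). So whichever of $A_1$ or $A_2$ agent $1$ receives, the corresponding witness from the identical-valuation instance applies unchanged.

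\emph{The agent with valuation $v'$.} Agent $2$ receives its $v'$-preferred bundle, so it is envy-free with $S=\emptyset$, which trivially implies \EF{1}.

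The only point needing care is that the \EF{1} definition here allows $S \subseteq A_i \cup A_j$ with general, possibly non-monotone, valuations. We must therefore confirm that the witness depends only on the ordered pair of bundles and the valuation, not on agent identities. This holds because the definition is stated per ordered pair $(i,j)$ in terms of $v_i$, $A_i$ and $A_j$ alone. With that checked, no real obstacle remains.
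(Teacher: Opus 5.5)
Your proposal is correct and follows the paper's own cut-and-choose argument. The $v$-agent is \EF{1} in both $\mathcal{A}$ and $\mathcal{A}^{\mathsf{rev}}$ via the two witnesses from the identical-valuation instance, the $v'$-agent is envy-free in whichever allocation gives it its preferred bundle, and maximality is unchanged by swapping bundles; you simply spell out the witness transfer and the maximality step that the paper calls straightforward.
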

\begin{proof}
Note that maximality is straightforward and we only need to show the \EF{1} property. For this, assume without loss of generality that the first agent has valuation $v$ and the second agent has valuation $v'$. With this assumption, note that both the allocations $\mathcal{A}$ and $\mathcal{A}^{ \mathsf{rev} }$ are \EF{1} from the first agent's perspective. As one of the them has to be envy-free (and therefore \EF{1}) from the second agent's perspective, we are done.
\end{proof}



We can now show that \cref{thm:two-agents:monotone-goods} implies \cref{thm:TwoAgents-GeneralGraph-MonotoneVals-Existence}.

\begin{proof}[Proof of \Cref{thm:TwoAgents-GeneralGraph-MonotoneVals-Existence} assuming \Cref{thm:two-agents:monotone-goods}]
Let $v_1$ and $v_2$ be the monotone valuation functions of the two agents, and let $v$ be the valuation function defined as $v(S) = \abs{ v_1(S) }$ for all sets $S$ of items. Observe that $v$ is monotone non-decreasing. From \cref{thm:two-agents:monotone-goods}, we get an allocation $\mathcal{A} = \paren{ A_1, A_2 }$ that is both maximal and \EF{1} when both agents have valuations $v$ in pseudopolynomial time. Furthermore, any value query for the valuation function $v$ can be simulated using a value query for the valuation function $v_1$ by taking the absolute value of the oracle's output.

Observe that $v_1 = v$ if $v_1$ is non-decreasing, and $v_1 = -v$ if $v_1$ is non-increasing. Thus, from \Cref{thm:Equivalence}, we get that $\mathcal{A}$ is both maximal and \EF{1} when both agents have valuations $v_1$. Finally, from \Cref{lemma:two-agents:monotone-goods}, we get that either $\mathcal{A}$ or $\mathcal{A}^{ \mathsf{rev} } = \paren{ A_2, A_1 }$ is maximal and \EF{1} when the agents have valuations $v_1$ and $v_2$, respectively.
\end{proof}

\subsection{Proof for Two Identical Agents and Monotone Goods}


We now focus our attention on proving \cref{thm:two-agents:monotone-goods}. Throughout, we use $v$ to denote the identical monotone non-decreasing valuation function of the agents. Our key lemma~(\Cref{lemma:EF1-from-order-adjacent}) shows that the existential guarantee of~\Cref{thm:two-agents:monotone-goods} follows if we can construct a sequence of `order-adjacent' allocations, as defined next.

\begin{definition}
\label{def:order-adjacent}
We say that an ordered pair of allocations $\paren{ \mathcal{A} = \paren{ A_1, A_2 }, \mathcal{A}' = \paren{ A'_1, A'_2 } }$ is \emph{order-adjacent} if we have $\card{ A_1 \setminus A'_1 } \leq 1$ and $\card{ A'_2 \setminus A_2 } \leq 1$. 
\end{definition}

We emphasize that \cref{def:order-adjacent} is not symmetric. Namely, it is possible for $\paren{ \mathcal{A}, \mathcal{A}' }$ to be order-adjacent without $\paren{ \mathcal{A}', \mathcal{A} }$ being so. 

The following lemma is the key consequence of order-adjacency that we use in our proof. Roughly speaking, it shows that if agent 1 is the envied agent in $\mathcal{A}$ and agent 2 is envied in $\mathcal{A}'$, 
then at least one of the allocations in the pair must be \EF{1}. Note that, because of identical valuations, at most one agent can be envied under any allocation.

\begin{lemma}
Let $\paren{ \mathcal{A} = \paren{ A_1, A_2 }, \mathcal{A}' = \paren{ A'_1, A'_2 } }$ be an order-adjacent pair of allocations. If we have that $v\paren{ A_1 } \geq v\paren{ A_2 }$ and $v\paren{ A'_2 } \geq v\paren{ A'_1 }$, then at least one of the allocations $\mathcal{A}$ and $\mathcal{A}'$ is \EF{1}.\footnote{Note that just like \cref{def:order-adjacent}, these two inequalities are not symmetric.}
\label{lemma:EF1-from-order-adjacent}
\end{lemma}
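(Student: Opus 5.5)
We argue by contradiction: assume neither $\mathcal{A}$ nor $\mathcal{A}'$ is \EF{1}, and derive a violation of monotonicity combined with order-adjacency. Since valuations are identical and monotone non-decreasing, \EF{1} for goods only needs to be checked from the perspective of the (weakly) envious agent. In $\mathcal{A}$ we have $v(A_1)\ge v(A_2)$, so only agent $2$ can envy. Hence failure of \EF{1} for $\mathcal{A}$ means $A_1\neq\emptyset$ and $v(A_2) < v(A_1\setminus\{o\})$ for every $o\in A_1$. Symmetrically, failure for $\mathcal{A}'$ means $A'_2\ne\emptyset$ and $v(A'_1) < v(A'_2\setminus\{o'\})$ for every $o'\in A'_2$.

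We then use order-adjacency to pick the removed items so that the bundles become nested. Since $\card{A_1\setminus A'_1}\le 1$, choose $o\in A_1$ to be the unique element of $A_1\setminus A'_1$ if it exists, and an arbitrary element of $A_1$ otherwise. In either case $A_1\setminus\{o\}\subseteq A'_1$. Similarly, since $\card{A'_2\setminus A_2}\le 1$, choose $o'\in A'_2$ to be the unique element of $A'_2\setminus A_2$ if it exists, and arbitrary otherwise. Then $A'_2\setminus\{o'\}\subseteq A_2$. Both choices are possible because $A_1$ and $A'_2$ are nonempty under the contradiction hypothesis.

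Next we chain the inequalities using monotonicity:
\[
v(A_2) < v(A_1\setminus\{o\}) \le v(A'_1) < v(A'_2\setminus\{o'\}) \le v(A_2),
\]
which is a contradiction. Hence at least one of $\mathcal{A}$ and $\mathcal{A}'$ is \EF{1}.

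The only delicate step is the empty-bundle edge case. If $A_1=\emptyset$, then $v(A_1)=0=v(\emptyset)\le v(A_2)$, and together with $v(A_1)\ge v(A_2)$ this gives $v(A_1)=v(A_2)$, so $\mathcal{A}$ is envy-free and hence \EF{1}. The case $A'_2=\emptyset$ is handled the same way for $\mathcal{A}'$. This is exactly why the contradiction hypothesis guarantees both bundles are nonempty, so that the items $o$ and $o'$ can be chosen. The rest is routine; the main point of care is matching the direction of the order-adjacency conditions to the direction of the inequalities, which the asymmetric definition is designed to make work.
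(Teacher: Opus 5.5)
Your proof is correct and is essentially the paper's argument. The paper uses monotonicity to get $v(A_2)+v(A'_1)\geq v(A_2\cap A'_2)+v(A_1\cap A'_1)$ and then splits this sum, where $A_1\cap A'_1 = A_1\setminus(A_1\setminus A'_1)$ and $A_2\cap A'_2 = A'_2\setminus(A'_2\setminus A_2)$; this is exactly the nesting your contradiction chain uses, and the only cosmetic difference is that when $A_1\subseteq A'_1$ the paper removes nothing (obtaining outright envy-freeness) whereas you remove an arbitrary item, which is equally valid.
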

%
\begin{proof}
Note the following inequality:
\begin{align*}
v\paren{ A_2 } + v\paren{ A'_1 } &\geq v\paren{ A_2 \cap A'_2 } + v\paren{ A_1 \cap A'_1 } \tag{$v$ is a monotone valuation for goods} \\
&= v\paren{ A'_2 \setminus \paren{ A'_2 \setminus A_2 } } + v\paren{ A_1 \setminus \paren{ A_1 \setminus A'_1 } } .
\end{align*}
Thus, we either have $v\paren{ A_2 } \geq v\paren{ A_1 \setminus \paren{ A_1 \setminus A'_1 } }$ or we have $v\paren{ A'_1 } \geq v\paren{ A'_2 \setminus \paren{ A'_2 \setminus A_2 } }$. As both $\card{ A_1 \setminus A'_1 }, \card{ A'_2 \setminus A_2 } \leq 1$ from \cref{def:order-adjacent}, we get that either $\mathcal{A}$ or $\mathcal{A}'$ is \EF{1}, as desired. 
\end{proof}

Henceforth, we focus on constructing an order-adjacent pair of allocations satisfying \cref{lemma:EF1-from-order-adjacent}. For this, we first describe \cref{algo:sequence-from-good-S} that starts with any maximal independent set $S$ in the conflict graph and uses $S$ to construct a sequence of maximal allocations $\paren{ \mathcal{A}_o }_{ o \in \set{ 0 } \cup [m] }$. We prove that if the set $S$ satisfies a certain property, then at least one of the allocations in the sequence is \EF{1}. The exact property that we need is described later in \cref{lemma:spropforproof}, but we note here that the maximal independent set in the conflict graph $S$ that maximizes $v\paren{ S }$ among all maximal independent sets trivially satisfies this property. Thus, by starting with this set $S$, we get \cref{thm:two-agents:monotone-goods}. Computing the set $S$ may not be efficient in general, so we make no guarantees about the running time. Later on, we show that in some specialized settings, an appropriate $S$ can be found efficiently. We use the notation $\overline{ S }$ to denote $M \setminus S$. 

In the interest of notational simplicity, we will index the goods as $1,2,\dots,m$ instead of $o_1,o_2,\dots,o_m$ for the rest of this section. Additionally, for any $r \in \{1,2,\dots,m\}$, we will write $[r]$ to denote $\{1,2,\dots,r\}$.


\begin{algorithm}[H]
\caption{Constructing a sequence of order-adjacent allocations. }
\label{algo:sequence-from-good-S}
\begin{algorithmic}[1]

\renewcommand{\algorithmicrequire}{\textbf{Input:}}
\renewcommand{\algorithmicensure}{\textbf{Output:}}

\smallskip

\Require A maximal independent set $S$ in the conflict graph.
\Ensure A sequence $\paren{ \mathcal{A}_o }_{ o \in \set{ 0 } \cup [m] }$ of order-adjacent allocations.

\smallskip

\State For all $o \in \overline{ S }$, we define: \label{line:max-min-conf} \Comment{{\footnotesize\emph{Note that the set $\set{ o' \in S \mid \paren{ o, o' } \in E }$ cannot be empty due to maximality of $S$.}}}
\begin{align*}
\maxconf\paren{ o } = \max \set{ o' \in S \mid \paren{ o, o' } \in E } , \\
\minconf\paren{ o } = \min \set{ o' \in S \mid \paren{ o, o' } \in E } . 
\end{align*} 

\smallskip

\State Sort all $o \in \overline{ S }$ in increasing order of $\maxconf\paren{ o }$ (breaking ties arbitrarily). In this order, greedily pick vertices to get a maximal independent set $X_{ \maxconf }$. Define $X_{ \minconf }$ analogously with the vertices sorted in decreasing order of $\minconf\paren{ o }$. \label{line:x-max-min-conf}

\smallskip

\State For all $o \in \set{ 0 } \cup [m]$, define the allocation $\mathcal{A}_o = \paren{ A_{ o, 1 }, A_{ o, 2 } }$ where: \label{line:ao} \Comment{{\footnotesize\emph{Recall that $[o] = \{1,2,\dots,o\}$.}}}
\begin{align*}
A_{ o, 1 } &= \paren{ S \setminus [o] } \cup \set{ o' \in X_{ \maxconf } \mid \maxconf\paren{ o' } \leq o } , \\
A_{ o, 2 } &= \paren{ S \cap [o] } \cup \set{ o' \in X_{ \minconf } \mid \minconf\paren{ o' } > o } .
\end{align*}

\alglinenoNew{algcommon}
\alglinenoPush{algcommon}

\end{algorithmic}
\end{algorithm}

In order to give some intuition about \cref{algo:sequence-from-good-S}, we illustrate its behavior on two simple graphs in \cref{fig:algo-trace}, namely the path graph and a simple example of a chordal graph.\footnote{A chordal graph is an undirected graph where every cycle of four or more vertices has a \emph{chord}, i.e., an edge connecting two non-adjacent vertices of the cycle.} Both graphs have eight vertices, and the vertices in the set $S$ input to the algorithm are marked as shaded. It can be verified that the shaded vertices form a maximal independent set in the respective graph. For some vertices, the values $\maxconf\paren{ \cdot }$ and $\minconf\paren{ \cdot }$ computed in \cref{line:max-min-conf} are depicted using curved arrows. Finally, the assignments $\paren{ \mathcal{A}_o = \paren{ A_{ o, 1 }, A_{ o, 2 } } }_{ o \in \set{ 0 } \cup [m] }$ are shown in order under the graphs, with the elements of $A_{ o, 1 }$ colored red and the elements of $A_{ o, 2 }$ colored blue. As $A_0 = \paren{ S, X_{ \minconf } }$ and $A_m = \paren{ X_{ \maxconf }, S }$, this also depicts the values of $X_{ \maxconf }$ and $X_{ \minconf }$. Note that the owner of the set $S$ reverses from red to blue as the allocation changes from $\mathcal{A}_0$ to $\mathcal{A}_m$. Consequently, whenever $v\paren{ S } \geq \max \set{ v\paren{ X_{ \minconf } }, v\paren{ X_{ \maxconf } } }$, the envy relation must switch at some point along the sequence.  


\newcommand{\mygraph}[3]{%
\begin{scope}[yshift=#2]
    \foreach \col [count=\i] in {#3} {
        \node[vertex, fill=\col] (w#1\i) at (0.8 * \i - 3.6,0) {\i};
    }
    \foreach \i in {1,...,7} {
        \pgfmathtruncatemacro{\j}{\i+1}
        \draw (w#1\i) -- (w#1\j);
    }
\end{scope}
}

\newcommand{\mygraphB}[3]{%
\begin{scope}[yshift=#2]

\def\colors{#3}

\foreach \col [count=\i] in \colors {

    \ifnum\i=1 \node[vertex,fill=\col] (w#1\i) at (-1.75,0) {1};\fi
    \ifnum\i=2 \node[vertex,fill=\col] (w#1\i) at (-1.75,0.75) {2};\fi
    \ifnum\i=3 \node[vertex,fill=\col] (w#1\i) at (0,0.75) {3};\fi
    \ifnum\i=4 \node[vertex,fill=\col] (w#1\i) at (1.75,0.75) {4};\fi
    \ifnum\i=5 \node[vertex,fill=\col] (w#1\i) at (1.75,0) {5};\fi
    \ifnum\i=6 \node[vertex,fill=\col] (w#1\i) at (1.75,-0.75) {6};\fi
    \ifnum\i=7 \node[vertex,fill=\col] (w#1\i) at (0,-0.75) {7};\fi
    \ifnum\i=8 \node[vertex,fill=\col] (w#1\i) at (-1.75,-0.75) {8};\fi

}

\draw (w#11)--(w#12)--(w#13)--(w#14)--(w#15)--(w#16)--(w#17)--(w#18)--(w#11);
\draw (w#13)--(w#15)--(w#17)--(w#11)--(w#13);
\draw (w#13)--(w#17);

\end{scope}
}

\begin{figure}[t]
\centering

\begin{subfigure}[t]{0.45\textwidth}
\centering
\begin{tikzpicture}[
    vertex/.style={
        circle,
        draw,
        minimum size=5mm,
        inner sep=0pt
    }
]

\useasboundingbox (-4.5,-7.5) rectangle (4.5,7.5);

\mygraph{1}{3cm}{gray!60, white, gray!60, white, gray!60, white, gray!60, white}
\node[left=0.2cm of w11] {$G$:};

\draw[->, bend right=55, ultra thick] (w12) to node[above]{\tiny min} (w11);
\draw[->, bend right=55, ultra thick] (w12) to node[below]{\tiny max} (w13);

\draw[->, bend right=55, ultra thick] (w18) to node[above]{\tiny min} (w17);
\draw[->, bend left=55, ultra thick] (w18) to node[below]{\tiny max} (w17);

\mygraph{2}{0cm}{red!60, blue!60, red!60, blue!60, red!60, blue!60, red!60, blue!60}
\node[left=0.2cm of w21] {$\mathcal{A}_0$:};

\mygraph{3}{-0.75cm}{blue!60, white, red!60, blue!60, red!60, blue!60, red!60, blue!60}
\node[left=0.2cm of w31] {$\mathcal{A}_1, \mathcal{A}_2$:};

\mygraph{4}{-1.5cm}{blue!60, red!60, blue!60, white, red!60, blue!60, red!60, blue!60}
\node[left=0.2cm of w41] {$\mathcal{A}_3, \mathcal{A}_4$:};

\mygraph{5}{-2.25cm}{blue!60, red!60, blue!60, red!60, blue!60, white, red!60, blue!60}
\node[left=0.2cm of w51] {$\mathcal{A}_5, \mathcal{A}_6$:};

\mygraph{6}{-3cm}{blue!60, red!60, blue!60, red!60, blue!60, red!60, blue!60, red!60}
\node[left=0.2cm of w61] {$\mathcal{A}_7, \mathcal{A}_8$:};

\end{tikzpicture}
\caption{The path graph on $8$ vertices.}
\label{fig:left}
\end{subfigure}
\hfill
\begin{subfigure}[t]{0.45\textwidth}
\centering
\begin{tikzpicture}[
    vertex/.style={
        circle,
        draw,
        minimum size=5mm,
        inner sep=0pt
    },
    middle/.style={midway, inner sep = 1pt, fill=white}
]

\useasboundingbox (-3,-7.5) rectangle (3,7.5);

\mygraphB{1}{5.5cm}{white, gray!60, white, white, gray!60, white, gray!60, white}
\node[left=0.2cm of w11] {$G$:};

\draw[->, bend right=35, ultra thick] (w13) to node[middle]{\tiny max} (w17);
\draw[->, bend right=35, ultra thick] (w13) to node[middle]{\tiny min} (w12);
\draw[->, bend right=35, ultra thick] (w18) to node[middle]{\tiny max} (w17);
\draw[->, bend left=35, ultra thick] (w18) to node[middle]{\tiny min} (w17);

\mygraphB{2}{2.5cm}{white, red!60, white, blue!60, red!60, blue!60, red!60, blue!60}
\node[left=0.2cm of w21] {$\mathcal{A}_0$:};

\mygraphB{3}{0cm}{white, blue!60, white, blue!60, red!60, blue!60, red!60, blue!60}
\node[left=0.2cm of w31] {$\mathcal{A}_1, \mathcal{A}_2, \mathcal{A}_3$:};

\mygraphB{4}{-2.5cm}{white, blue!60, white, red!60, blue!60, white, red!60, blue!60}
\node[left=0.2cm of w41] {$\mathcal{A}_4, \mathcal{A}_5$:};

\mygraphB{5}{-5.5cm}{red!60, blue!60, white, red!60, blue!60, red!60, blue!60, white}
\node[left=0.2cm of w51] {$\mathcal{A}_6, \mathcal{A}_7, \mathcal{A}_8$:};

\end{tikzpicture}
\caption{A chordal graph on $8$ vertices.}
\label{fig:right}
\end{subfigure}

\caption{A depiction of \cref{algo:sequence-from-good-S} on two input graphs. It can be verified that all allocations are maximal and every pair of consecutive allocation is order-adjacent.}
\label{fig:algo-trace}

\end{figure}

\paragraph{Analyzing \cref{algo:sequence-from-good-S}.} We show that \cref{line:ao} defines maximal allocations.

\begin{lemma}
\label{lemma:ao-maximal}
For any input set $S$ and all $o \in \set{ 0 } \cup [m]$, the allocation $\mathcal{A}_o$ constructed in \cref{line:ao} is maximal.
\end{lemma}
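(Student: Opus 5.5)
The plan is to check directly from the definitions in \cref{line:max-min-conf,line:x-max-min-conf,line:ao} that $\mathcal{A}_o$ is a feasible allocation and then that it is maximal. The only nontrivial ingredient is the way the greedy construction of $X_{\maxconf}$ and $X_{\minconf}$ interacts with the sorting order.

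\emph{Feasibility.} We first show $A_{o,1}$ is independent.
\begin{itemize}
\item The set $S \setminus [o]$ is independent because $S$ is.
\item The set $\set{ o' \in X_{\maxconf} \mid \maxconf(o') \leq o }$ is independent because $X_{\maxconf}$ is.
\item There is no edge between the two parts. If $o' \in X_{\maxconf}$ has $\maxconf(o') \leq o$, then every neighbor of $o'$ in $S$ lies in $[o]$, so none lies in $S \setminus [o]$.
\end{itemize}
The same argument shows $A_{o,2}$ is independent: if $\minconf(o') > o$, then every neighbor of $o'$ in $S$ lies outside $[o]$, so none lies in $S \cap [o]$. For disjointness, $S \setminus [o]$ and $S \cap [o]$ are disjoint, and both are disjoint from $\overline{S}$. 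An element of $\overline{S}$ lying in both bundles would need $\maxconf(o') \leq o < \minconf(o')$. This contradicts $\minconf(o') \leq \maxconf(o')$.

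\emph{Maximality.} Every element of $S$ is allocated, to agent $2$ if it is at most $o$ and to agent $1$ otherwise. So take an unallocated $u \in \overline{S}$. We must show $u$ has a neighbor in $A_{o,1}$ and a neighbor in $A_{o,2}$.

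For agent $1$, there are two cases.
\begin{itemize}
\item If $\maxconf(u) > o$, then $u$ is adjacent to $\maxconf(u) \in S \setminus [o] \subseteq A_{o,1}$.
\item If $\maxconf(u) \leq o$, then $u \notin X_{\maxconf}$, since otherwise $u$ would be in $A_{o,1}$. Because $X_{\maxconf}$ was built greedily in increasing order of $\maxconf$, $u$ was rejected. So some $x \in X_{\maxconf}$ adjacent to $u$ was processed before $u$. Hence $\maxconf(x) \leq \maxconf(u) \leq o$, which gives $x \in A_{o,1}$. Ties in the order cause no problem, because only the weak inequality is needed.
\end{itemize}

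For agent $2$, the argument is symmetric.
\begin{itemize}
\item If $\minconf(u) \leq o$, then $u$ is adjacent to $\minconf(u) \in S \cap [o]$.
\item Otherwise $u$ was rejected while building $X_{\minconf}$ in decreasing order of $\minconf$. So there is an earlier-processed neighbor $x \in X_{\minconf}$ with $\minconf(x) \geq \minconf(u) > o$, which gives $x \in A_{o,2}$.
\end{itemize}

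The step needing the most care is this greedy-blocker argument. The key facts are that a rejected vertex is always blocked by an already-selected neighbor that appears earlier in the sort order, and that the sort order is chosen exactly so that the blocker inherits the threshold condition defining membership in $A_{o,1}$ (respectively $A_{o,2}$). Everything else is routine bookkeeping.
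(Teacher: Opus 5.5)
Your proof is correct and follows the paper's argument. Disjointness comes from $\minconf(o') \leq \maxconf(o')$, and maximality comes from the greedy-blocker fact that a rejected vertex has an already-selected neighbor in $X_{\maxconf}$ (resp.\ $X_{\minconf}$) with weakly smaller $\maxconf$ (resp.\ weakly larger $\minconf$). Your per-agent two-case split is just a regrouping of the paper's three joint cases, and your explicit check that each bundle is independent fills in a step the paper leaves implicit.
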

\begin{proof}
We first prove that $\mathcal{A}_o$ is indeed an allocation by showing that $A_{ o, 1 } \cap A_{ o, 2 } = \emptyset$. For this, note from \cref{line:x-max-min-conf} that $S \cap X_{ \minconf } = S \cap X_{ \maxconf } = \emptyset$. Thus, it suffices to show that:
\[
\set{ o' \in \overline{ S } \mid \minconf\paren{ o' } > o } \cap \set{ o' \in \overline{ S } \mid \maxconf\paren{ o' } \leq o } = \emptyset .
\]
This is because \cref{line:max-min-conf} implies $\minconf\paren{ o' } \leq \maxconf\paren{ o' }$ for all $o' \in \overline{ S }$. 

We now show that the allocation $\mathcal{A}_o$ is maximal. For this, we observe the following from \cref{line:x-max-min-conf}.
\begin{observation}
\label{obs:x-max-min-conf}
For all $o' \in \overline{ S } \setminus X_{ \maxconf }$, there exists $o'' \in X_{ \maxconf }$ such that $\paren{ o', o'' } \in E$ and $\maxconf\paren{ o'' } \leq \maxconf\paren{ o' }$. Similarly, for all $o' \in \overline{ S } \setminus X_{ \minconf }$, there exists $o'' \in X_{ \minconf }$ such that $\paren{ o', o'' } \in E$ and $\minconf\paren{ o'' } \geq \minconf\paren{ o' }$.
\end{observation}
Now to show that the allocation $\mathcal{A}_o$ is maximal, note that for all $o' \notin A_{ o, 1 } \cup A_{ o, 2 }$, we must have that $o' \in \overline{ S }$. Recalling that $\minconf\paren{ o' } \leq \maxconf\paren{ o' }$ for all $o' \in \overline{ S }$, we get the following cases:
\begin{itemize}
\item \textbf{When $\minconf\paren{ o' } \leq \maxconf\paren{ o' } \leq o$:} In this case, we have $\minconf\paren{ o' }, \maxconf\paren{ o' } \in S \cap [o] \subseteq A_{ o, 2 }$. This means that $o' \notin A_{ o, 1 }$ implies that $o' \notin X_{ \maxconf }$ which by \cref{obs:x-max-min-conf} means that there exists $o'' \in X_{ \maxconf }$ such that $\paren{ o', o'' } \in G$ and $\maxconf\paren{ o'' } \leq \maxconf\paren{ o' }$. This means that $o'' \in A_{ o, 1 }$ and we are done.
\item \textbf{When $\minconf\paren{ o' } \leq o < \maxconf\paren{ o' }$:} In this case, we have $\minconf\paren{ o' } \in S \cap [o] \subseteq A_{ o, 2 }$ and $\maxconf\paren{ o' } \in S \setminus [o] \subseteq A_{ o, 1 }$. As they both conflict with $o'$, we are done.
\item \textbf{When $o < \minconf\paren{ o' } \leq \maxconf\paren{ o' }$:} In this case, we have $\minconf\paren{ o' }, \maxconf\paren{ o' } \in S \setminus [o] \subseteq A_{ o, 1 }$. This means that $o' \notin A_{ o, 2 }$ implies that $o' \notin X_{ \minconf }$ which by \cref{obs:x-max-min-conf} means that there exists $o'' \in X_{ \minconf }$ such that $\paren{ o', o'' } \in G$ and $\minconf\paren{ o'' } \geq \minconf\paren{ o' }$. This means that $o'' \in A_{ o, 2 }$ and we are done.\qedhere
\end{itemize}
\end{proof}

We now show that the constructed allocations are order-adjacent.

\begin{lemma}
\label{lemma:ao-order-adjacent}
For any input set $S$ and all $o \in [m]$, it holds that the pair of allocations $\paren{ \mathcal{A}_{ o - 1 }, \mathcal{A}_o }$ is order-adjacent.
\end{lemma}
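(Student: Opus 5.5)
We verify the two inequalities of \cref{def:order-adjacent} directly from the explicit formulas in \cref{line:ao}, by computing $A_{o-1,1} \setminus A_{o,1}$ and $A_{o,2} \setminus A_{o-1,2}$. The only quantity changing between the two allocations is the threshold, from $o-1$ to $o$. So each set difference can only contain items whose membership condition flips as the threshold moves from $o-1$ to $o$.

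\textbf{First inequality.} Write $A_{o-1,1} = (S \setminus [o-1]) \cup \set{ o' \in X_{\maxconf} \mid \maxconf(o') \leq o-1 }$ and $A_{o,1} = (S \setminus [o]) \cup \set{ o' \in X_{\maxconf} \mid \maxconf(o') \leq o }$. The second part is monotone increasing in the threshold: every $o' \in X_{\maxconf}$ with $\maxconf(o') \le o-1$ also satisfies $\maxconf(o') \le o$. Hence those items remain in $A_{o,1}$. The first part loses exactly $(S \setminus [o-1]) \setminus (S \setminus [o]) = S \cap \set{o}$. Since $S$ and $X_{\maxconf}$ are disjoint, an item of $S \setminus [o-1]$ cannot re-enter $A_{o,1}$ through the $X_{\maxconf}$ part. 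Therefore $A_{o-1,1} \setminus A_{o,1} \subseteq \set{o}$, which has size at most one.

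\textbf{Second inequality.} Symmetrically, $A_{o,2} = (S \cap [o]) \cup \set{ o' \in X_{\minconf} \mid \minconf(o') > o }$. The $X_{\minconf}$ part is monotone decreasing in the threshold: if $\minconf(o') > o$ then $\minconf(o') > o-1$, so these items already lie in $A_{o-1,2}$. The $S$ part gains only $(S\cap[o]) \setminus (S \cap [o-1]) = S \cap \set{o}$. Hence $A_{o,2} \setminus A_{o-1,2} \subseteq \set{o}$, again of size at most one.

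There is no real obstacle here. The only care needed is to note that $S \cap X_{\maxconf} = S \cap X_{\minconf} = \emptyset$, as observed in the proof of \cref{lemma:ao-maximal}. This lets the $S$-parts and $X$-parts be treated separately when taking set differences. Together the two inequalities give the order-adjacency of $(\mathcal{A}_{o-1}, \mathcal{A}_o)$ for every $o \in [m]$.
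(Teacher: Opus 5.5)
Your proof is correct and is essentially the paper's argument: the paper simply observes that, by the formulas in \cref{line:ao}, both $A_{o-1,1} \setminus A_{o,1}$ and $A_{o,2} \setminus A_{o-1,2}$ are contained in $S \cap \{o\}$, and you have spelled out why (the $X_{\maxconf}$ and $X_{\minconf}$ parts are monotone in the threshold). One small remark: the disjointness $S \cap X_{\maxconf} = S \cap X_{\minconf} = \emptyset$ is not actually needed for these upper bounds, since $(P \cup Q) \setminus (P' \cup Q') \subseteq (P \setminus P') \cup (Q \setminus Q')$ holds for arbitrary sets.
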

\begin{proof}
By \cref{def:order-adjacent}, we have to show that $\card{ A_{ o - 1, 1 } \setminus A_{ o, 1 } } \leq 1$ and $\card{ A_{ o, 2 } \setminus A_{ o - 1, 2 } } \leq 1$. From \cref{line:ao}, this follows if we show that $\card{ S \cap \set{ o } } \leq 1$, which is direct.
\end{proof}


\begin{lemma}
\label{lemma:spropforproof}
For any input set $S$ that satisfies $v\paren{ S } \geq \max \set{ v\paren{ X_{ \minconf } }, v\paren{ X_{ \maxconf } } }$, there exists $o \in \set{ 0 } \cup [m]$ such that $\mathcal{A}_o$ is \EF{1}.
\end{lemma}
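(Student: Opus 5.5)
The plan is a discrete intermediate-value argument along the sequence $\paren{ \mathcal{A}_o }_{ o \in \set{ 0 } \cup [m] }$, combined with \cref{lemma:EF1-from-order-adjacent} and \cref{lemma:ao-order-adjacent}. I first identify the two endpoints of the sequence from \cref{line:ao}.

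At $o = 0$ we have $[0] = \emptyset$. Every $o' \in \overline{ S }$ has $\maxconf\paren{ o' } \geq 1 > 0$ and $\minconf\paren{ o' } \geq 1 > 0$. Hence $\mathcal{A}_0 = \paren{ S, X_{ \minconf } }$.

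At $o = m$ we have $S \cap [m] = S$. Every $o' \in \overline{ S }$ has $\maxconf\paren{ o' } \leq m$ and $\minconf\paren{ o' } \leq m$. Hence $\mathcal{A}_m = \paren{ X_{ \maxconf }, S }$.

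The hypothesis $v\paren{ S } \geq \max \set{ v\paren{ X_{ \minconf } }, v\paren{ X_{ \maxconf } } }$ then gives two boundary conditions:
\[
v\paren{ A_{ 0, 1 } } \geq v\paren{ A_{ 0, 2 } }, \qquad v\paren{ A_{ m, 2 } } \geq v\paren{ A_{ m, 1 } }.
\]

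Next, I take the smallest index $o^\ast \in [m]$ with $v\paren{ A_{ o^\ast, 2 } } \geq v\paren{ A_{ o^\ast, 1 } }$. Such an index exists because $o = m$ qualifies (if $m = 0$ the statement is trivial). I then claim $v\paren{ A_{ o^\ast - 1, 1 } } \geq v\paren{ A_{ o^\ast - 1, 2 } }$. If $o^\ast - 1 = 0$, this is the first boundary condition. Otherwise it follows from minimality of $o^\ast$, which gives the strict inequality $v\paren{ A_{ o^\ast - 1, 2 } } < v\paren{ A_{ o^\ast - 1, 1 } }$.

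By \cref{lemma:ao-order-adjacent}, the pair $\paren{ \mathcal{A}_{ o^\ast - 1 }, \mathcal{A}_{ o^\ast } }$ is order-adjacent. Applying \cref{lemma:EF1-from-order-adjacent} with $\mathcal{A} = \mathcal{A}_{ o^\ast - 1 }$ and $\mathcal{A}' = \mathcal{A}_{ o^\ast }$ shows that one of the two is \EF{1}, which proves the lemma.

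I do not expect any step to be a real obstacle. The only care needed is in computing the endpoints: this uses that $\maxconf$ and $\minconf$ take values in $S \subseteq [m]$, so all of them are at least $1$ and at most $m$. The other point is to orient the inequalities exactly as \cref{lemma:EF1-from-order-adjacent} requires, since that lemma is not symmetric: agent $1$ weakly envied-free in the earlier allocation and agent $2$ in the later one.
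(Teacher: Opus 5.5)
Your proof is correct and follows the paper's argument. You identify $\mathcal{A}_0 = (S, X_{\minconf})$ and $\mathcal{A}_m = (X_{\maxconf}, S)$, use the hypothesis to get the two boundary inequalities, locate a consecutive pair where the envy flips, and apply \cref{lemma:ao-order-adjacent} together with \cref{lemma:EF1-from-order-adjacent}; your explicit endpoint computation and your choice of the minimal index $o^\ast$ just fill in details the paper leaves implicit.
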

\begin{proof}
As $v\paren{ S } \geq \max \set{ v\paren{ X_{ \minconf } }, v\paren{ X_{ \maxconf } } }$, we have from \cref{line:ao} that $v\paren{ A_{ 0, 1 } } \geq v\paren{ A_{ 0, 2 } }$ and $v\paren{ A_{ m, 2 } } \geq v\paren{ A_{ m, 1 } }$. It follows that there exists $o \in [m]$ such that $v\paren{ A_{ o - 1, 1 } } \geq v\paren{ A_{ o - 1, 2 } }$ and $v\paren{ A_{ o, 2 } } \geq v\paren{ A_{ o, 1 } }$. Applying \cref{lemma:ao-order-adjacent,lemma:EF1-from-order-adjacent}, we get the theorem. 
\end{proof}

Observe that the existential guarantee in~\Cref{thm:two-agents:monotone-goods} directly follows from \cref{lemma:spropforproof} by taking $S$ to be the maximal independent set that maximizes $v\paren{ S }$. Indeed, with this choice, we have $v\paren{ S } \geq \max \set{ v\paren{ X_{ \minconf } }, v\paren{ X_{ \maxconf } } }$ and can apply \cref{lemma:spropforproof}.

We will now show that the allocation whose existence is guaranteed by \Cref{thm:two-agents:monotone-goods} can be computed in pseudopolynomial time. To prove this claim, we will show that the desired input $S$ for \Cref{algo:sequence-from-good-S}, i.e., a maximal independent set satisfying the condition outlined in \Cref{lemma:spropforproof}, can be computed in pseudopolynomial time. 

The procedure for computing the desired set $S$ is as follows: Start with an arbitrary maximal independent set $S$ in the graph $G$. Note that such a set can be computed via a greedy algorithm in polynomial time. Now, repeat the following steps until the condition in \Cref{lemma:spropforproof} is satisfied:
\begin{itemize}
    \item If the set $S$ satisfies the condition  in \Cref{lemma:spropforproof}, i.e., if $v\paren{ S } \geq \max \set{ v\paren{ X_{ \minconf } }, v\paren{ X_{ \maxconf } } }$, then return this set $S$.
    \item Otherwise, at least one of $v\paren{ X_{ \maxconf } } > v\paren{ S }$ or $v\paren{ X_{ \minconf } } > v\paren{ S }$ must be true. Let's assume the former condition is true; the argument for the latter is analogous. Starting from the set $X_{ \maxconf }$, which is an independent set of $G$, iteratively add vertices to it until we have a maximal independent set. Call this new set $S$ and repeat the first step.
\end{itemize}

The aforementioned procedure must terminate since the value $v(S)$ strictly increases in every iteration. Let $v_{\max} \coloneqq \max_{S \subseteq M} v(S)$ denote the maximum value of any set (recall that the valuations are monotone non-decreasing), and let $\Delta \coloneqq \min\{v(S) - v(T) : v(S) - v(T) > 0 \text{ for } S,T \subseteq M\}$ denote the smallest positive difference in the values of any two subsets. Thus, the value $v(S)$ increases by at least $\Delta$ in each iteration, and therefore the maximum number of iterations is at most $v_{\max} / \Delta$. For integral valuations, $\Delta \geq 1$ and therefore the number of iterations is at $v_{\max}$, implying a pseudopolynomial running time.

\subsection{Specialized Settings}
\label{sec:two-agents:specialized}

\cref{thm:TwoAgents-GeneralGraph-MonotoneVals-Existence} shows that a maximal and \EF{1} allocation exists for any conflict graph and any monotone valuation function but only provides a pseudopolynomial-time guarantee for computing such allocations. 
In this section, we describe some settings where such an allocation can be found efficiently.

\subsubsection{Bipartite Graphs}
\label{sec:two-agents:specialized:bipartite}

We start by considering the case where the conflict graph is bipartite.

\begin{restatable}{theorem}{TwoAgents-BipartiteGraph-MonotoneVals-Algorithm}
For two agents with monotone valuation functions and any bipartite conflict graph, a maximal and \EF{1} allocation exists and can be computed in polynomial time.
\label{thm:TwoAgents-BipartiteGraph-MonotoneVals-Algorithm}
\end{restatable}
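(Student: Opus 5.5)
The plan is to reduce to identical monotone non-decreasing valuations, exactly as in the proof of \Cref{thm:TwoAgents-GeneralGraph-MonotoneVals-Existence}. Set $v(S)=\abs{v_1(S)}$, then combine \Cref{thm:Equivalence} with \Cref{lemma:two-agents:monotone-goods}. This reduction uses only polynomially many value queries and preserves the conflict graph. It therefore suffices to find, in polynomial time, a maximal independent set $S$ satisfying the hypothesis of \Cref{lemma:spropforproof}. Running \Cref{algo:sequence-from-good-S} on this $S$ produces $m+1$ allocations, and checking each of them for \EF{1} takes $O(m)$ value queries per allocation. So the whole procedure is polynomial once $S$ is found, and the only real work is choosing $S$ without the pseudopolynomial local-search loop.

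For a bipartite conflict graph with bipartition $(L,R)$, we may assume $G$ has no isolated vertices. An isolated vertex can be placed in any bundle without affecting feasibility, and any maximal allocation must allocate it. I would handle isolated vertices as follows: find a solution on the rest of the graph, then hand out the isolated vertices one at a time to the agent with currently smaller value. This is the standard envy-graph argument, and for two identical agents it preserves \EF{1}. Maximality is preserved because every isolated vertex ends up allocated.

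On the graph without isolated vertices, both $L$ and $R$ are maximal independent sets: every vertex of $R$ has a neighbour in $L$, and vice versa. Consider the complete allocation $(L,R)$. Without loss of generality $v(L)\ge v(R)$. Take $S=L$ and order $M$ so that the analysis is clean. I then need to show that $X_{\maxconf}$ and $X_{\minconf}$, both of which are subsets of $\overline{S}=R$, satisfy $v(X_{\maxconf}),v(X_{\minconf})\le v(R)\le v(L)=v(S)$. This follows from monotonicity, because each of them is contained in $R$. Hence the hypothesis of \Cref{lemma:spropforproof} holds, and some $\mathcal{A}_o$ is \EF{1}.

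The step I expect to be the main obstacle is the isolated-vertex bookkeeping, and possibly the disconnected case. For the disconnected case, the sides $L,R$ are chosen across components, but their union is still a valid bipartition, so the argument goes through unchanged. For the isolated vertices, the delicate point is ensuring that adding them preserves \EF{1}. I would first try the envy-graph argument above. If it fails, the alternative is to put isolated vertices into $S$ itself: $L\cup I$ is still a maximal independent set, and $\overline{S}=R$. I also need $v(L\cup I)\ge v(R)$, and if instead $v(R)>v(L\cup I)$, I would swap roles and use $R\cup I$. This second route avoids the post-processing entirely, since $X_{\maxconf},X_{\minconf}\subseteq \overline{S}$ and monotonicity immediately give the condition of \Cref{lemma:spropforproof}.
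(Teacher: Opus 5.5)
Your proposal is correct and is essentially the paper's proof: reduce to identical monotone goods, take $S$ to be the side of the bipartition with the larger value, and use $X_{\minconf}, X_{\maxconf} \subseteq \overline{S}$ together with monotonicity to get $\max\{v(X_{\minconf}), v(X_{\maxconf})\} \le v(\overline{S}) \le v(S)$, so that \Cref{lemma:spropforproof} applies. The paper handles isolated vertices exactly as in your fallback route, by placing them all in the higher-valued side $L$, which keeps $L$ a maximal independent set with $\overline{L}=R$. This makes your envy-graph post-processing unnecessary, although that step is also valid, since giving an extra good to the poorer of two identical agents preserves \EF{1}.
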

\begin{proof}
As in the proof of \cref{thm:TwoAgents-GeneralGraph-MonotoneVals-Existence}, we can restrict attention to identical monotone valuations over a set of goods. Let $L$ and $R$ be the bipartition in the conflict graph and assume without loss of generality that $v\paren{ L } \geq v\paren{ R }$ and that all isolated vertices in the graph are in $L$. It follows that $L$ is maximal independent set in the conflict graph. Observe that running \cref{algo:sequence-from-good-S} with input $L$ ensures that $X_{ \minconf }, X_{ \maxconf } \subseteq \overline{L} = R$ which by monotonicity means that $\max \set{ v\paren{ X_{ \minconf } }, v\paren{ X_{ \maxconf } } } \leq v\paren{ R } \leq v\paren{ L }$. As \cref{algo:sequence-from-good-S} is efficient, we are done by \cref{lemma:spropforproof}.
\end{proof}

\subsubsection{Interval Graphs}
\label{sec:two-agents:specialized:interval}

We now consider the case where the conflict graph is an interval graph. Recall that an interval graph is an undirected graph where each vertex can be associated with an interval on the real line, such that two vertices are adjacent if and only if their corresponding intervals overlap (i.e., have a non-empty intersection). 

\begin{restatable}{theorem}{TwoAgents-IntervalGraph-MonotoneVals-Algorithm}
For two agents with monotone valuation functions and any interval graph as the conflict graph, a maximal and \EF{1} allocation exists and can be computed in polynomial time.
\label{thm:TwoAgents-IntervalGraph-MonotoneVals-Algorithm}
\end{restatable}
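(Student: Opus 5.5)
As in the proof of \Cref{thm:TwoAgents-BipartiteGraph-MonotoneVals-Algorithm}, I would first use \Cref{lemma:two-agents:monotone-goods}, \Cref{thm:Equivalence} and the absolute-value trick from the proof of \Cref{thm:TwoAgents-GeneralGraph-MonotoneVals-Existence}. This restricts attention to two agents with an identical monotone non-decreasing valuation $v$ over goods, given by a value oracle. \Cref{algo:sequence-from-good-S} is polynomial, and by \Cref{lemma:spropforproof} it suffices to compute, in polynomial time, a maximal independent set $S$ with $v(S) \geq \max\{v(X_{\minconf}), v(X_{\maxconf})\}$. The plan is to run the improvement procedure from the pseudopolynomial algorithm, with two changes. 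The items are labelled in a way adapted to the interval representation. A combinatorial potential, rather than the value $v(S)$, is used to bound the number of iterations.

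\textbf{Labelling.} Fix an interval representation, computable in linear time. Relabel the items $1,\dots,m$ in increasing order of right endpoint, so that $\maxconf$ and $\minconf$ in \cref{line:max-min-conf} refer to this order. In an interval graph, an independent set is a chain of pairwise disjoint intervals ordered left to right. The key structural fact I would try to prove is a pointwise-shift property. Write the elements of a maximal independent set $S$ as $s_1 < s_2 < \dots < s_k$. Every $o \in \overline{S}$ meets a contiguous block $s_a,\dots,s_b$ of $S$, with $\minconf(o) = s_a$ and $\maxconf(o) = s_b$. Greedy selection in increasing order of $\maxconf$ therefore resembles the earliest-right-endpoint greedy, run relative to the blocks of $S$. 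From this I expect to show that the extension $S'$ of $X_{\maxconf}$ is \emph{left-shifted} relative to $S$. Concretely, I would aim for a statement such as: the $j$-th element of $S'$ ends no later than the $j$-th element of $S$, with strict improvement somewhere unless $X_{\maxconf}$ is essentially $S$. Symmetrically, the extension of $X_{\minconf}$ should be right-shifted.

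\textbf{Termination.} With these shift properties, a potential such as $\Phi(S) = \sum_j \mathrm{pos}(s_j)$ together with $|S|$ changes monotonically, which would bound the number of iterations by a polynomial in $m$ (say $O(m^2)$). To prevent left and right moves from undoing each other, I would choose the extension step canonically. Always extend $X_{\maxconf}$ by the leftmost-greedy completion and $X_{\minconf}$ by the rightmost-greedy completion. Also, prefer the $\maxconf$ move whenever $v(X_{\maxconf}) > v(S)$. The value still strictly increases, so no set repeats. The remaining task is to show that the left-shift moves can occur only polynomially often between right moves, and vice versa.

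\textbf{Main obstacle.} The hard step is this termination bound: precisely the interaction between the two directions, since a pure value-increase argument gives only a pseudopolynomial bound. If the mixed-direction potential fails, my fallback is different. Exploit the fact that in an interval graph the maximal independent sets, ordered left to right, admit a dynamic-programming description. Then directly search for $S$ among a polynomial family, for instance the sets obtained by leftmost-greedy completion from each prefix, and verify that the value-maximizer within this family already dominates its own $X_{\maxconf}$ and $X_{\minconf}$. This works because those sets, after completion, lie in the same family.
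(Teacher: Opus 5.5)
\textbf{There is a genuine gap: the polynomial-time claim is not established.} Your reduction to a single identical monotone valuation over goods is fine. Existence, however, is already given by \Cref{thm:TwoAgents-GeneralGraph-MonotoneVals-Existence}; the content of this theorem is the polynomial running time, and your proposal does not deliver it.

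The pointwise-shift lemma that your potential rests on is false. Take two items $a=(0,1]$ and $b=(0.5,2]$. For $S=\{a\}$ you get $X_{\maxconf}=\{b\}$, whose only element ends \emph{later} than $a$. For $S=\{b\}$ you get $X_{\minconf}=\{a\}$, which ends \emph{earlier} than $b$. So the direction of the shift is not fixed by the type of move. Also, $|X_{\maxconf}|$ can be larger or smaller than $|S|$: compare one long interval with many short ones inside it. Hence neither $\Phi$ nor $|S|$ is monotone, and the only progress measure left is $v(S)$, which gives the pseudopolynomial bound you already had.

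The fallback is not an argument either. The family of ``leftmost-greedy completions from each prefix'' is never defined. Its closure under $S\mapsto$ completion of $X_{\maxconf}(S)$ or $X_{\minconf}(S)$ is the one property you need, and it is only asserted. Since $X_{\maxconf}(S)$ depends on all of $S$ through the conflict blocks, there is no evident reason it lands back in a polynomial, prefix-indexed family.

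\textbf{The missing idea.} You never need an $S$ with $v(S)\ge\max\{v(X_{\minconf}),v(X_{\maxconf})\}$. It suffices to extend the chain of \cref{algo:sequence-from-good-S} at both ends so that its endpoints are mirror images. The paper does this as follows.
\begin{enumerate}
\item Run the greedy $(m,2)$-interval-scheduling algorithm to get a maximum $2$-feasible set $Z=Z_1\cup Z_2$, labelled so that $v(Z_1)\ge v(Z_2)$.
\item Move items greedily from $Z_2$ to $Z_1$ so that $Z_1$ becomes a maximal independent set. Optimality of $Z$ rules out items outside $Z$ being addable.
\item Run \cref{algo:sequence-from-good-S} with $S=Z_1$, labelling items by right endpoints. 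Then $X_{\minconf}$ and $X_{\maxconf}$ are greedy, hence maximum, independent subsets of $\overline{Z_1}$, so $|Z_2|\le|X_{\minconf}|,|X_{\maxconf}|$.
\item Since $Z_1\cup X_{\minconf}$ and $Z_1\cup X_{\maxconf}$ are $2$-feasible, optimality of $Z$ gives the reverse inequality, so all three sizes are equal.
\item With equal sizes, \Cref{lemma:exchange} yields chains of order-adjacent allocations $(Z_1,Z_2)\to(Z_1,X_{\minconf})=\mathcal{A}_0$ and $\mathcal{A}_m=(X_{\maxconf},Z_1)\to(Z_2,Z_1)$. Every allocation in these chains is maximal, because one bundle is $Z_1$ and the other is a maximum independent set of $\overline{Z_1}$.
\end{enumerate}
The concatenated chain starts with $v(A_1)\ge v(A_2)$ and ends with $v(A_2)\ge v(A_1)$. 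So \Cref{lemma:EF1-from-order-adjacent} finds a maximal \EF{1} allocation among polynomially many candidates, with no iteration at all.
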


In order to prove \cref{thm:TwoAgents-IntervalGraph-MonotoneVals-Algorithm}, we first recall the well-known $\paren{ m, c }$-interval scheduling problem. In this problem, there are $m$ intervals (of the real line) and a parameter $c > 0$ and a collection of intervals is said to be $c$-feasible if any point appears in at most $c$ intervals. The goal is to find the largest $c$-feasible collection of intervals. When $c=1$, this is the classical interval scheduling problem, for which the earliest-finish-time greedy algorithm is optimal~\citep{KT06algorithm}. For general $c$, the greedy algorithm in \cref{algo:interval-scheduling} is also known to be optimal; see, e.g.,~\citep{CL95k}.
Here, we fix $m, c > 0$ and consider intervals $\mathcal{I}_i = \oc{ \ell_i, r_i }$ for all $i \in [m]$ that are sorted in non-decreasing order of $r_i$-s. We assume, without loss of generality, that all endpoints are distinct. For a set $S \subseteq [m]$ of intervals, we define the notation $r_S = \max_{ i \in S } r_i$. If $S = \emptyset$, we define $r_S = - \infty$.

\begin{algorithm}[H]
\caption{Greedy algorithm for the $\paren{ m, c }$-interval scheduling problem.}
\label{algo:interval-scheduling}
\begin{algorithmic}[1]

\renewcommand{\algorithmicrequire}{\textbf{Input:}}
\renewcommand{\algorithmicensure}{\textbf{Output:}}

\alglinenoPop{algcommon}

\smallskip

\Require Intervals $\mathcal{I}_i = \oc{ \ell_i, r_i }$ for all $i \in [m]$. We assume $r_i < r_{ i' }$ for all $i < i' \in [m]$.
\Ensure A $c$-feasible set $S = S_1 \cup \dots \cup S_c$, where the sets $S_1, \dots, S_c$ are disjoint and $1$-feasible.

\smallskip

\State Initialize sets $S_1 = \dots = S_c = \emptyset$. 

\smallskip

\For{$i \in [m]$}

\smallskip

\State If $\ell_i \geq \min_{ j \in [c] } r_{ S_j }$, add $i$ to $S_{ j^* }$, where $j^* = \argmax_{ j : \ell_i \geq r_{ S_j } } r_{ S_j }$ (ties broken arbitrarily).\label{line:add-to-set}

\smallskip

\EndFor

\smallskip

\alglinenoPush{algcommon}

\end{algorithmic}
\end{algorithm}

It is easy to see that the sets $S_1, \dots, S_c$ are always disjoint. Note from \cref{line:add-to-set} that whenever our algorithm adds an element $i$ to a set $S_{ j^* }$, we have that $\ell_i \geq r_{ S_{ j^* } }$. This ensures that the set remains $1$-feasible implying that the union of the $c$ sets is $c$-feasible. 

We now show that this procedure is optimal. For all $i \in [m]$ and $j \in [c]$, we let $S_j^i$ to be the value of $S_j$ at the end of iteration $i$ and let $S^i = S_1^i \cup \dots \cup S_c^i$. Also, define $S^0 = S_1^0 = \dots = S_c^0 = \emptyset$ to be the values at the beginning of the loop. We sometimes omit the superscript $i$ when $i = m$. 

\begin{lemma}
\label{lemma:min-collision}
For all $i \in \set{ 0 } \cup [m]$ and $j', j'' \in [c]$ such that $S_{ j' }^i \neq \emptyset$ and $r_{ S_{ j' }^i } \leq r_{ S_{ j'' }^i }$, there exists $i' \in S_{ j'' }^i$ such that $r_{ S_{ j' }^i } \in \mathcal{I}_{ i' }$.
\end{lemma}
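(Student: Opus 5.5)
We argue by induction on $i$, showing that the invariant holds at the end of each iteration of the loop in \cref{algo:interval-scheduling}. For $i = 0$ all sets are empty, so the statement is vacuous. Assume the invariant holds for $S^{i-1}$ and consider iteration $i$. If interval $i$ is not added to any set, then $S^i = S^{i-1}$ and there is nothing to show. Otherwise $i$ is added to $S_{j^*}$. The new right endpoint is $r_{S_{j^*}^i} = r_i$, and since intervals are processed in increasing order of right endpoints, $r_i$ is now the largest right endpoint among all sets. All other sets are unchanged.

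We check the pairs $(j', j'')$ with $S_{j'}^i \neq \emptyset$ and $r_{S_{j'}^i} \leq r_{S_{j''}^i}$ case by case.

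\textbf{Neither index is $j^*$.} Nothing changed, so the inductive hypothesis applies directly.

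\textbf{$j'' = j^*$, $j' \neq j^*$.} Here $r_{S_{j'}}$ is unchanged. If $r_{S_{j'}} \leq r_{S_{j^*}^{i-1}}$, the old witness in $S_{j^*}^{i-1} \subseteq S_{j^*}^i$ still works. Otherwise $r_{S_{j^*}^{i-1}} < r_{S_{j'}} \leq \ell_i$. This is the key use of the $\argmax$ rule in \cref{line:add-to-set}: $j'$ satisfies $\ell_i \geq r_{S_{j'}}$ and has a strictly larger right endpoint than $j^*$, contradicting the choice of $j^*$. So the second subcase cannot occur, and note that this argument also covers the situation $S_{j^*}^{i-1} = \emptyset$, where $r_{S_{j^*}^{i-1}} = -\infty$. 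If equality of right endpoints is a concern, distinct endpoints rule it out except in the empty case.

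\textbf{$j' = j^*$, $j'' \neq j^*$.} We need $r_i \leq r_{S_{j''}}$. But $r_i$ exceeds every previously seen right endpoint, so this is impossible unless $j'' = j^*$.

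\textbf{$j' = j'' = j^*$.} Take $i' = i$ itself, since $r_i \in (\ell_i, r_i]$ and intervals are half-open on the left. More generally, in the trivial case $j' = j''$ the interval attaining $r_{S_{j'}}$ is always a valid witness.

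The main obstacle is the subcase $j'' = j^*$ with $r_{S_{j'}} > r_{S_{j^*}^{i-1}}$, together with the handling of ties and empty sets. It is resolved entirely by the $\argmax$ choice of $j^*$ and the distinct-endpoint assumption.
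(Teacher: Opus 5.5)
\textbf{There is a genuine gap} in the case that carries the content of the lemma: $j'' = j^*$, $j' \neq j^*$, with $r_{S_{j^*}^{i-1}} < r_{S_{j'}^i}$. You write ``Otherwise $r_{S_{j^*}^{i-1}} < r_{S_{j'}} \leq \ell_i$.'' Nothing gives you $r_{S_{j'}} \leq \ell_i$, and your conclusion that this subcase cannot occur is false.

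Here is a counterexample. Take $c=2$, $\mathcal{I}_1 = (0,2]$, $\mathcal{I}_2 = (1,3]$.
\begin{itemize}
\item Interval $1$ goes into $S_1$.
\item In iteration $2$, the set $S_1$ is not eligible, since $\ell_2 = 1 < 2 = r_{S_1}$. So interval $2$ goes into the empty $S_2$.
\item Now $j'=1$ and $j''=2=j^*$ satisfy $r_{S_1^2} = 2 \leq 3 = r_{S_2^2}$ and $r_{S_2^1} = -\infty < 2$.
\end{itemize}
So the subcase occurs, and the only available witness is $i'=2$. This is exactly the empty situation you claimed your argument covers. A nonempty variant also exists: $(0,1]$, $(0.5,2]$, $(1.5,3]$ at $i=3$, where the witness must be interval $3$.

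The $\argmax$ rule is the right tool, but it yields a different conclusion than the one you drew.
\begin{itemize}
\item Suppose $\ell_i \geq r_{S_{j'}^{i-1}}$. Then $j'$ would be an eligible index with a strictly larger right endpoint than $j^*$, contradicting the choice of $j^*$. Hence $\ell_i < r_{S_{j'}^{i-1}}$.
\item Also $r_{S_{j'}^{i-1}} < r_i$, because $S_{j'}^{i-1}$ is nonempty and consists of earlier indices.
\item Together, $r_{S_{j'}^i} \in (\ell_i, r_i] = \mathcal{I}_i$. So the newly added interval $i' = i \in S_{j^*}^i$ is the required witness.
\end{itemize}
This is precisely the key step in the paper's proof.

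With this repair, your remaining cases are correct and essentially mirror the paper's induction:
\begin{itemize}
\item neither index equal to $j^*$;
\item $j' = j^* \neq j''$, which is impossible because $r_i$ exceeds all earlier endpoints;
\item $j' = j''$, handled by the interval attaining the maximum.
\end{itemize}
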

\begin{proof}
Proof by induction on $i$. The base case $i = 0$ is trivial. We will show the lemma for $i > 0$ assuming it holds for $i - 1$. Assume first that $S_{ j' }^i \neq S_{ j' }^{ i - 1 }$, that is, the set $S_{ j' }$ changes in iteration $i$. Note that this means $S_{ j' }^i = S_{ j' }^{ i - 1 } \cup \set{i}$ and $S_{ j'' }^i = S_{ j'' }^{ i - 1 }$ implying that $r_{ S_{ j' }^i } = r_i \geq r_{ S_{ j'' }^i }$. This means that $r_{ S_{ j' }^i } = r_{ S_{ j'' }^i }$ implying from the distinctness of the endpoints that $j' = j''$ and the lemma follows. 

Assume now that $S_{ j' }^i = S_{ j' }^{ i - 1 }$. As the result is trivial otherwise, we can also assume that $r_{ S_{ j'' }^{ i - 1 } } < r_{ S_{ j' }^i } = r_{ S_{ j' }^{ i - 1 } }$. With this assumption, note that $r_{ S_{ j' }^i } \leq r_{ S_{ j'' }^i }$ is possible only if $S_{ j'' }^i = S_{ j'' }^{ i - 1 } \cup \set{i}$. From \cref{line:add-to-set} of \Cref{algo:interval-scheduling}, this means that $j'' = \argmax_{ j : \ell_i \geq r_{ S_j^{ i - 1 } } } r_{ S_j^{ i - 1 } }$. Combining with $r_{ S_{ j'' }^{ i - 1 } } < r_{ S_{ j' }^{ i - 1 } }$, this is possible only if $\ell_i < r_{ S_{ j' }^{ i - 1 } } = r_{ S_{ j' }^i }$. As $r_{ S_{ j' }^i } \leq r_i$, we are done by setting $i' = i$.
\end{proof}

\begin{lemma}
\label{lemma:greedy-earlier}
Let $S'$ be any $c$-feasible set and $i' \in S' \setminus S$ be arbitrary. Define $j' = \argmin_{ j \in [c] } r_{ S_j^{ i' - 1 } }$. Then, we have $S_{ j' }^{ i' - 1 } \neq \emptyset$. Additionally, letting $i'' = \max\paren{ S_{ j' }^{ i' - 1 } }$, we have:
\[
\card{ S' \cap \bracket{ i'', i' } } \leq \card{ S \cap \bracket{ i'', i' } } .
\]
\end{lemma}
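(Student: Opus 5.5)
The plan is to reduce the claim to counting intervals that contain the single point $p \coloneqq r_{i''}$, and to use the $c$-feasibility of $S'$ at that point.

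\textbf{Step 1 (non-emptiness).} Since $i' \notin S$, the test in \cref{line:add-to-set} failed at iteration $i'$. So $\ell_{i'} < \min_{j} r_{S_j^{i'-1}} = r_{S_{j'}^{i'-1}}$. If $S_{j'}^{i'-1}$ were empty, this minimum would be $-\infty$ and the test would succeed. Hence $S_{j'}^{i'-1} \neq \emptyset$. Because intervals are sorted by right endpoint, $p = r_{i''} = r_{S_{j'}^{i'-1}} = \min_j r_{S_j^{i'-1}}$, and in particular $\ell_{i'} < p \le r_{i'}$.

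\textbf{Step 2 ($c$ intervals of $S$ through $p$).} Apply \cref{lemma:min-collision} at time $i'-1$ with this $j'$ and each $j''\in[c]$; this is allowed since $r_{S_{j'}^{i'-1}} \le r_{S_{j''}^{i'-1}}$ for every $j''$. It gives, for each $j''$, some interval of $S_{j''}^{i'-1}$ containing $p$. These $c$ intervals are distinct because the sets $S_j$ are disjoint. Each of them has right endpoint $\ge p$, so its index is at least $i''$ and at most $i'-1$. Thus $S \cap [i'',i']$ contains at least $c$ intervals containing $p$.

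\textbf{Step 3 (intervals avoiding $p$ all lie in $S$).} Every $k \in [i'',i']$ satisfies $r_k \ge p$. So either $p \in \mathcal{I}_k$ (i.e.\ $\ell_k < p$), or $\ell_k \ge p$. In the second case $k \ne i''$ and $k \ne i'$, so $i''<k<i'$. Since $i''$ is the maximum of $S_{j'}^{i'-1}$, the set $S_{j'}$ does not change between iterations $i''$ and $i'-1$. Hence $r_{S_{j'}^{k-1}} = p \le \ell_k$, the test in \cref{line:add-to-set} succeeds at iteration $k$, and $k \in S$.

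\textbf{Step 4 (counting).} Split $S'\cap[i'',i']$ into the intervals containing $p$ and those that do not. By $c$-feasibility of $S'$, the first group has size at most $c$. By Step 3, the second group is contained in the set $T$ of all indices $k\in[i'',i']$ with $\ell_k \ge p$, and $T \subseteq S$. The $c$ intervals from Step 2 contain $p$, so they are disjoint from $T$. Therefore
\[
\card{S' \cap [i'',i']} \le c + \card{T} \le \card{S \cap [i'',i']}.
\]

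\textbf{Main obstacle.} The delicate point is Step 3: it needs $S_{j'}$ to be frozen with right end exactly $p$ throughout iterations $i''+1,\dots,i'-1$. This follows from the choice of $i''$ as the maximum of $S_{j'}^{i'-1}$ together with the monotone growth of the sets.
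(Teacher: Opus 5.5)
Your proof is correct and follows essentially the same route as the paper: non-emptiness from the failed test at iteration $i'$, $c$ distinct intervals of $S$ through $p=r_{i''}$ via \cref{lemma:min-collision}, every index in $[i'',i']$ whose interval avoids $p$ lying in $S$, and the same counting split using $c$-feasibility of $S'$ at $p$. The only difference is cosmetic and lies in Step 3. The paper does not need $S_{j'}$ to be frozen: monotone growth of the sets already gives $r_{S_{j'}^{k-1}} \le r_{S_{j'}^{i'-1}} = p \le \ell_k$, so your ``main obstacle'' is milder than you suggest.
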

\begin{proof}
Note that $S_{ j' }^{ i' - 1 } \neq \emptyset$ directly follows from \cref{line:add-to-set} of \Cref{algo:interval-scheduling} and the fact that $i' \notin S$. This means that $i''$ is well defined. Also, the fact that $r_i$-s are non-decreasing implies that $r_{ i'' } = r_{ S_{ j' }^{ i' - 1 } }$. Applying \cref{lemma:min-collision} with $i = i' - 1$ and $j'$, we get that for all $j''$, there exists $i''_{ j'' } \in S_{ j'' }^{ i' - 1 }$ such that $r_{ i'' } \in \mathcal{I}_{ i''_{ j'' } }$. As we assume the endpoints of all intervals are distinct and that $r_i$-s are non-decreasing, we get that $i'' \leq i''_{ j'' } < i'$.

Now, fix an arbitrary $i^* \in \bracket{ i'', i' }$ such that $r_{ i'' } \notin \mathcal{I}_{ i^* }$. We claim that $i^* \in S$. Indeed, from $r_{ i'' } \notin \mathcal{I}_{ i^* }$ and the fact that $r_i$-s are non-decreasing, we get that $\ell_{ i^* } \geq r_{ i'' } = r_{ S_{ j' }^{ i' - 1 } }$. As the sets $S_1, \dots, S_c$ can only get larger, we get that $\ell_{ i^* } \geq r_{ S_{ j' }^{ i^* - 1 } }$. From \cref{line:add-to-set}, we get that $i^* \in S$, as desired. With this claim, we have:
\begin{align*}
\card{ S' \cap \bracket{ i'', i' } } &= \card{ S' \cap \bracket{ i'', i' } \cap \set{ i''' \mid r_{ i'' } \in \mathcal{I}_{ i''' } } } + \card{ S' \cap \bracket{ i'', i' } \cap \set{ i''' \mid r_{ i'' } \notin \mathcal{I}_{ i''' } } } \\
&\leq \card{ S' \cap \set{ i''' \mid r_{ i'' } \in \mathcal{I}_{ i''' } } } + \card{ \bracket{ i'', i' } \cap \set{ i''' \mid r_{ i'' } \notin \mathcal{I}_{ i''' } } } \\
&\leq c + \card{ \bracket{ i'', i' } \cap \set{ i''' \mid r_{ i'' } \notin \mathcal{I}_{ i''' } } } \tag{As $S'$ is $c$-feasible} \\
&= c + \card{ S \cap \bracket{ i'', i' } \cap \set{ i''' \mid r_{ i'' } \notin \mathcal{I}_{ i''' } } } \tag{By our claim} \\
&= c + \card{ S \cap \bracket{ i'', i' } } - \card{ S \cap \bracket{ i'', i' } \cap \set{ i''' \mid r_{ i'' } \in \mathcal{I}_{ i''' } } } \\
&\leq \card{ S \cap \bracket{ i'', i' } } . \tag{As $i''_1, \dots, i''_c \in S \cap \bracket{ i'', i' } \cap \set{ i''' \mid r_{ i'' } \in \mathcal{I}_{ i''' } }$ are all distinct}
\end{align*}
\end{proof}

\begin{lemma}
\label{lemma:greedy-optimal}
For any $c$-feasible solution $S'$ and any $i' \in [m]$, we have that $\card{ S' \cap \bracket{ i' } } \leq \card{ S \cap \bracket*{ i' } }$. Taking $i' = m$, it follows that $S$ is an optimal solution for the $\paren{ m, c }$-interval scheduling problem with inputs $\paren{ \mathcal{I}_i }_{ i \in [m] }$. 
\end{lemma}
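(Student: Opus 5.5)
We prove the prefix bound $\card{ S' \cap [i'] } \leq \card{ S \cap [i'] }$ by strong induction on $i'$, with \cref{lemma:greedy-earlier} as the inductive tool. The base $i' = 0$ is trivial, since both sides are $0$. For $i' \geq 1$, assume the inequality for all smaller prefixes and split into cases according to whether $i' \in S$ and whether $i' \in S'$.

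\textbf{Easy cases.} If $i' \notin S'$, then $\card{ S' \cap [i'] } = \card{ S' \cap [i'-1] }$. By induction this is at most $\card{ S \cap [i'-1] }$, which in turn is at most $\card{ S \cap [i'] }$. If $i' \in S' \cap S$, both sides increase by exactly one when passing from $[i'-1]$ to $[i']$, and the induction hypothesis again gives the claim.

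\textbf{Main case: $i' \in S' \setminus S$.} Here the left side gains one while the right side does not, so we must borrow slack from an earlier window.

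1. Apply \cref{lemma:greedy-earlier} to $i'$. This gives $j'$ with $S_{j'}^{i'-1} \neq \emptyset$, and with $i'' = \max S_{j'}^{i'-1} < i'$ it yields
\[
\card{ S' \cap [i'', i'] } \leq \card{ S \cap [i'', i'] } .
\]

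2. Decompose $[i'] = [i''-1] \cup [i'', i']$ as a disjoint union. Since $i'' - 1 < i'$, the induction hypothesis gives $\card{ S' \cap [i''-1] } \leq \card{ S \cap [i''-1] }$.

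3. Adding the two inequalities gives $\card{ S' \cap [i'] } \leq \card{ S \cap [i'] }$, as required.

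This requires that $[i'', i']$ denote the closed integer range from $i''$ to $i'$, which is the range used in the proof of \cref{lemma:greedy-earlier}. The only thing to verify is that the two pieces partition $[i']$ with no overlap and no gap, which holds because $i'' \geq 1$.

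\textbf{Conclusion.} Taking $i' = m$ gives $\card{S'} \leq \card{S}$ for every $c$-feasible $S'$. We already noted that $S$ itself is $c$-feasible, so $S$ is optimal.

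The main obstacle is entirely contained in \cref{lemma:greedy-earlier}, which we may assume. What remains is the bookkeeping that the window returned by that lemma starts strictly before $i'$, so that strong induction applies to the prefix ending just before it.
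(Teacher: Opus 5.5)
Your proof is correct and is essentially the paper's argument. The paper takes a smallest counterexample $i'$; minimality forces $i' \in S' \setminus S$, which your easy cases make explicit. It then invokes \cref{lemma:greedy-earlier} and splits $[i']$ into $[i''-1]$ and $[i'',i']$ exactly as you do, so strong induction versus smallest counterexample is only a difference in packaging.
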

\begin{proof}
Proof by contradiction. Suppose there exists another $c$-feasible solution $S'$ and $i' \in [m]$ such that $\card{ S' \cap \bracket{ i' } } > \card{ S \cap \bracket*{ i' } }$. Pick the smallest such $i'$. Note that $i' > 1$ as \cref{algo:interval-scheduling} ensures that $1 \in S$. By our choice of $i'$, we have that $i' \in S' \setminus S$. By \cref{lemma:greedy-earlier}, we have $i'' < i'$ such that $\card{ S' \cap \bracket{ i'', i' } } \leq \card{ S \cap \bracket{ i'', i' } }$. This yields the following contradiction:
\begin{align*}
\card{ S' \cap \bracket{ i' } } &= \card{ S' \cap \bracket{ i'' - 1 } } + \card{ S' \cap \bracket{ i'', i' } } \\
&\leq \card{ S \cap \bracket{ i'' - 1 } } + \card{ S \cap \bracket{ i'', i' } } \tag{Choice of $i'$} \\
&\leq \card{ S \cap \bracket{ i' } } .\qedhere
\end{align*}
\end{proof}

In addition to being optimal, the solution output by \cref{algo:interval-scheduling} also satisfies the following ``exchange lemma'' when $c = 1$. 
\begin{lemma}
\label{lemma:exchange}
Let $c = 1$ and $S = \set{ i_1, \dots, i_k }$ be the (sorted) set output by \cref{algo:interval-scheduling}. Let $S' = \set{ i'_1, \dots, i'_{ k' } }$ be any other (sorted) $c$-feasible set\footnote{The optimality of $S$ implies that $k' \leq k$.}. Then, for all $0 \leq k'' \leq k'$, the set $S''_{ k'' } = \set{ i_1, \dots i_{ k'' }, i'_{ k'' + 1 }, \dots, i'_{ k' } }$ is $c$-feasible.
\end{lemma}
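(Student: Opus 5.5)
We have to show that $S''_{k''}$ is pairwise non-overlapping, i.e., it is $1$-feasible. Both $\set{ i_1, \dots, i_{k''} }$ and $\set{ i'_{k''+1}, \dots, i'_{k'} }$ are $1$-feasible, since they are subsets of $1$-feasible sets. Because intervals are sorted by right endpoint and $c = 1$, a $1$-feasible set sorted by index consists of intervals that are pairwise disjoint and ordered left to right. So it suffices to check the single ``junction'' condition $\ell_{ i'_{k''+1} } \geq r_{ i_{k''} }$ whenever $1 \leq k'' < k'$; the cases $k'' = 0$ and $k'' = k'$ are trivial. Note that $k'' \leq k' \leq k$ by \cref{lemma:greedy-optimal}, so $i_{k''}$ exists. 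Disjointness of $i'_{k''+1}$ from the earlier $i_t$ with $t < k''$ then follows, since $r_{i_t} \leq r_{i_{k''}} \leq \ell_{i'_{k''+1}}$. Disjointness between the $i_t$ and the later $i'_s$ with $s > k''+1$ follows from $\ell_{i'_s} \geq r_{i'_{k''+1}} > \ell_{i'_{k''+1}} \geq r_{i_{k''}}$.

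The key claim is the ``greedy stays ahead'' inequality: $r_{ i_t } \leq r_{ i'_t }$ for all $t \leq k'$. I will prove it by induction on $t$.

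\emph{Base case.} The greedy algorithm always includes interval $1$. Indeed, $r_{S_1} = -\infty$ initially, so $i_1 = 1 \leq i'_1$, and hence $r_{i_1} \leq r_{i'_1}$.

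\emph{Inductive step.} Suppose $r_{i_{t-1}} \leq r_{i'_{t-1}}$. Then $\ell_{i'_t} \geq r_{i'_{t-1}} \geq r_{i_{t-1}}$. Consider the iteration of \cref{algo:interval-scheduling} processing index $i'_t$. If $i'_t \leq i_{t-1}$, then $r_{i'_t} \leq r_{i_{t-1}} \leq r_{i'_{t-1}}$, which contradicts the sorted order $i'_{t-1} < i'_t$. So $i'_t > i_{t-1}$. Now suppose greedy has added no element with index in $(i_{t-1}, i'_t)$. Then at iteration $i'_t$ the current right endpoint is $r_{i_{t-1}} \leq \ell_{i'_t}$, so greedy adds $i'_t$. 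In either case $i_t \leq i'_t$, which gives $r_{i_t} \leq r_{i'_t}$.

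Finally, apply the claim with $t = k''$: $r_{i_{k''}} \leq r_{i'_{k''}} \leq \ell_{i'_{k''+1}}$, where the last inequality holds by $1$-feasibility of $S'$. This gives the junction condition.

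The main subtlety is the step ``$1$-feasible and sorted by right endpoint implies consecutive disjointness.'' With half-open intervals $(\ell, r]$ and distinct endpoints, two intervals $(\ell_a, r_a]$ and $(\ell_b, r_b]$ with $r_a < r_b$ are disjoint exactly when $\ell_b \geq r_a$. I will state this fact explicitly and use it at each step above.
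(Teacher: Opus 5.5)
Your proof is correct and rests on the same key fact as the paper's proof. That fact is that the greedy solution stays ahead, $i_{k''} \le i'_{k''}$, which together with the sorted, pairwise-disjoint structure of $S'$ makes the junction $i_{k''}, i'_{k''+1}$, and hence every cross pair, conflict-free. The paper instead inducts on $k''$, derives $i'_{k''} < i_{k''}$ from a hypothetical conflict, and refutes it with the prefix-count bound of \cref{lemma:greedy-optimal}, whereas you prove the stays-ahead inequality directly by a self-contained induction for $c=1$ and then check all cross pairs at once.
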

\begin{proof}
Proof by induction on $k''$. The base case $k'' = 0$ is trivial. We show the result for $k'' > 0$ assuming it holds for $k'' - 1$. Assume not. Then by the induction hypothesis, we have that the set $S''_{ k'' - 1 }$ is $c$-feasible but the set $S''_{ k'' }$ is not. As $S$ is $c$-feasible and $c = 1$, this means that there exists $k'' > k''$ such that $\mathcal{I}_{ i_{ k'' } } \cap \mathcal{I}_{ i'_{ k''' } } \neq \emptyset$. With this, the $c$-feasibility of $S'$ implies that $i'_{ k'' } < i_{ k'' }$. This means that $\card{ S' \cap \bracket{ i'_{ k'' } } } = k'' > k'' - 1 \geq \card{ S \cap \bracket*{ i'_{ k'' } } }$, contradicting \cref{lemma:greedy-optimal}.
\end{proof}

We are now ready to prove \cref{thm:TwoAgents-IntervalGraph-MonotoneVals-Algorithm}.
\begin{proof}[Proof of \cref{thm:TwoAgents-IntervalGraph-MonotoneVals-Algorithm}]
As in the proof of \cref{thm:TwoAgents-GeneralGraph-MonotoneVals-Existence}, we can restrict attention to identical monotone valuations over a set of goods. Given that we are working with an interval graph, each good $o \in [m]$ can be associated with an interval~$\oc{ \ell_o, r_o }$ of the real line such that two goods conflict if and only if their intervals intersect. We assume without loss of generality that the goods are sorted in non-decreasing order of the $r_o$-s. Let $Z = \set{ z_1, \dots, z_k } \subseteq [m]$ be the output of \Cref{algo:interval-scheduling} when $c = 2$, and let $Z_1$ and $Z_2$ denote the $1$-feasible subsets constructed by the algorithm such that $Z = Z_1 \cup Z_2$. Observe that both $Z_1$ and $Z_2$ are independent subsets of the original conflict graph $G$. Swap $Z_1$ and $Z_2$ if needed to ensure that $v\paren{ Z_1 } \geq v\paren{ Z_2 }$. 



We greedily move intervals from $Z_2$ to $Z_1$ if doing so does not violate the fact that $Z_1$ constitutes an independent set. Note that monotonicity implies that $Z_1$ and $Z_2$ are independent sets and that $v\paren{ Z_1 } \geq v\paren{ Z_2 }$ even after these moves. 

We claim that the modified $Z_1$ is a maximal independent set in the conflict graph $G$. Indeed, any element of $Z_2$ cannot be added to it while preserving independence by construction. Additionally, if an element outside $Z$ could be added, the fact that $Z_1$ and $Z_2$ are independent sets would violate the optimality of $Z$. 

Having shown that $Z_1$ is a maximal independent set in the conflict graph, we can now invoke \cref{algo:sequence-from-good-S} with $S = Z_1$ and the goods ordered in increasing order of $r_o$, their finish times. Let $X_{ \minconf }$ and $X_{ \maxconf }$ be the sets constructed in \cref{line:x-max-min-conf} of \Cref{algo:sequence-from-good-S}. Observe that $X_{ \minconf }$ and $X_{ \maxconf }$ are 
outputs of the greedy algorithm applied to the set $\overline{ Z_1 }$ where the vertices are indexed according to the finish times of their corresponding intervals. Thus, by applying~\cref{lemma:greedy-optimal} to the vertices in $\overline{ Z_1 }$ with $c = 1$, it follows that $X_{ \minconf }$ and $X_{ \maxconf }$ are the largest $1$-feasible subsets of $\overline{ Z_1 }$. As $Z_2$ is also $1$-feasible subset of $\overline{ Z_1 }$, we have that $\card{ Z_2 } \leq \min \set{ \card{ X_{ \minconf } }, \card{ X_{ \maxconf } } }$.

Moreover, the fact that $Z_1$ and $X_{ \minconf }$ (respectively, $X_{ \maxconf }$) are both maximal independent sets implies that $Z_1 \cup X_{ \minconf }$ (respectively, $Z_1 \cup X_{ \maxconf }$) is $2$-feasible. 
By the optimality of $Z$, we get that $\card{ Z_2 } \geq \max \set{ \card{ X_{ \minconf } }, \card{ X_{ \maxconf } } }$. Combining, we get that $\card{ Z_2 } = \card{ X_{ \minconf } } = \card{ X_{ \maxconf } }$. Using these equalities and \cref{lemma:exchange}, we get that there exists a sequence of allocations from the allocation $\paren{ Z_1, Z_2 }$ to the allocation $\paren{ Z_1, X_{ \minconf } }$ and another sequence of allocations from the allocation $\paren{ X_{ \maxconf }, Z_1 }$ to the allocation $\paren{ Z_2, Z_1 }$ such that in both sequences, every allocation is maximal and every pair of consecutive allocations is order-adjacent.

Inserting the sequence of allocations constructed in \cref{line:ao} between these sequences, we get from \cref{lemma:ao-maximal,lemma:ao-order-adjacent} that there is a sequence of allocations from $\paren{ Z_1, Z_2 }$ to $\paren{ Z_2, Z_1 }$ where every allocation is maximal and every pair of consecutive allocations is order-adjacent. As $v\paren{ Z_1 } \geq v\paren{ Z_2 }$, we get that there must be a pair of consecutive allocations in this sequence satisfying the conditions of \cref{lemma:EF1-from-order-adjacent} which then implies the theorem.
\end{proof}

%
%

\subsubsection{Additive Valuations}
\label{sec:two-agents:specialized:additive}

Now, we show that whenever the valuations are additive, then, for any conflict graph $G$, a maximal and \EF{1} allocation exists and can be computed in polynomial time. 

\begin{restatable}{theorem}{TwoAgents-GeneralGraph-AdditiveVals-Algorithm}
For two agents with additive and monotone valuation functions and any conflict graph, a maximal and \EF{1} allocation exists and can be computed in polynomial time.
\label{thm:TwoAgents-GeneralGraph-AdditiveVals-Algorithm}
\end{restatable}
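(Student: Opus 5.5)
The plan is to reduce, exactly as in the proof of \cref{thm:TwoAgents-GeneralGraph-MonotoneVals-Existence}, to two agents with an identical additive valuation $v$ over goods with $v(o)\ge 0$. Existence is already given by \cref{thm:two-agents:monotone-goods}, so only the running time needs new work. By \cref{lemma:spropforproof}, it suffices to find in polynomial time a maximal independent set $S$ with $v(S)\ge\max\{v(X_{\minconf}),v(X_{\maxconf})\}$. More generally, it suffices to find \emph{any} sequence of maximal allocations, running from some $(A,B)$ with $v(A)\ge v(B)$ to one in which agent $2$'s bundle is weakly larger, in which consecutive pairs are order-adjacent. \cref{lemma:EF1-from-order-adjacent} then yields an \EF{1} allocation at the switch point.

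First step: run the local-improvement procedure from the pseudopolynomial algorithm, but on rounded values. Let $v^{*}=\max_o v(o)$ and fix a scale $\delta=v^{*}/m^{2}$. Define the integral valuation $\tilde v(o)=\lfloor v(o)/\delta\rfloor$, which is additive and bounded by $m^{2}$ per item, hence by $m^{3}$ per set. The improvement procedure for $\tilde v$ therefore terminates after at most $m^{3}$ iterations, each polynomial. It returns $S$ with $\tilde v(S)\ge\max\{\tilde v(X_{\minconf}),\tilde v(X_{\maxconf})\}$. Under $\tilde v$, \cref{lemma:spropforproof} gives an index at which the sign of $\tilde v$-envy switches between the order-adjacent allocations $\mathcal{A}_{o-1},\mathcal{A}_o$.

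Second step, which is the main obstacle: transfer the guarantee back to $v$, since rounding can flip the sign of envy. Additivity lets us avoid the issue. I would not rely on signs under $\tilde v$. Instead I would use the true values along the entire sequence $(\mathcal{A}_o)$ and scan for a sign switch of $v(A_{o,1})-v(A_{o,2})$. If a switch exists, \cref{lemma:EF1-from-order-adjacent} finishes. If none exists, then $v(A_{m,1})>v(A_{m,2})$, i.e.\ $v(X_{\maxconf})>v(S)$ (or symmetrically with $X_{\minconf}$). We then continue the improvement with the true values from this point.

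To bound the number of such continuation steps, I would use the additive structure. Any increase that is invisible to $\tilde v$ is at most $m\delta = v^{*}/m$. I also intend to show that an \EF{1} certificate (removing the single most valuable item of the envied bundle) absorbs this error. The intended claim is that if $v(X)>v(S)$ but $\tilde v(X)\le\tilde v(S)$, then $v(X)-v(S)<v^{*}/m$, which is tiny relative to the largest item. The sign-switch pair produced under $\tilde v$ then satisfies \EF{1} under $v$ whenever the envied bundle contains an item of value at least $v^{*}/m$. The residual case, where all relevant items are small, I would handle by restricting the rounding to the items actually appearing and rescaling $\delta$ to the maximum item of the current envied bundle, re-running the procedure.

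I expect this error-absorption and rescaling argument to be the delicate part. The fallback is a direct combinatorial potential: after each improvement $S\leftarrow$ extension of $X_{\maxconf}$, the new $S$ contains $X_{\maxconf}$. I would then try to prove that the number of improvements is polynomially bounded by tracking the lexicographically sorted value vector of $S$, using additivity.
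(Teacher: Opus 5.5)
\paragraph{There is a gap: the polynomial bound is never established, and the rounding sketch cannot supply it.}
The rounding phase only yields a set that is stable for $\tilde v$, so any certificate it gives is \EF{1} for $\tilde v$, not for $v$. The transfer back to $v$ fails for a structural reason. \cref{lemma:EF1-from-order-adjacent} certifies \EF{1} by removing a single item $a\in A_1\setminus A'_1$, and that removal is already consumed by the $\tilde v$-inequality $\tilde v(A_2)\ge\tilde v(A_1\setminus\{a\})$. That inequality can be tight, in which case $v(A_2)$ falls short of $v(A_1\setminus\{a\})$ by up to $|A_1|\delta\le v^{*}/m$. Your condition (the envied bundle contains an item worth at least $v^*/m$) does not fix this: removing some other item $b$ instead of $a$ only gains $v(b)-v(a)$, which can be zero or negative.

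On the side of the agent who is non-envious under $\tilde v$, envy of order $v^*/m$ can appear under $v$. Shrinking $\delta$ does not help, because exact \EF{1} under $v$ can fail by arbitrarily small margins. The rescaling step and the lexicographic potential are not argued.

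Your continuation ``with true values'' is just the original pseudopolynomial procedure, and its iteration count is exactly what must be bounded. Your estimate $v(X)-v(S)<v^{*}/m$ holds only at the first continuation step, since later sets need not be $\tilde v$-stable.

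\paragraph{The missing idea: failure of \EF{1} forces a large, multiplicative improvement, so no rounding is needed.}
Run \cref{algo:sequence-from-good-S} and test every $\mathcal{A}_o$ for \EF{1} under the true $v$, not merely for a sign switch. Suppose none is \EF{1}. Then by \cref{lemma:spropforproof}, say, $v(S)<v(X_{\maxconf})$. Since the endpoint $\mathcal{A}_m=(X_{\maxconf},S)$ is not \EF{1}, agent~2 still envies after removing the most valuable item $o'\in X_{\maxconf}$. By additivity $v(o')\ge v(X_{\maxconf})/m$, hence $v(S)<(1-1/m)\,v(X_{\maxconf})$. The case $\mathcal{A}_0=(S,X_{\minconf})$ is symmetric.

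So extending $X_{\maxconf}$ to a maximal independent set $S'$ gives $v(S')>\frac{m}{m-1}\,v(S)$. Now initialize with a maximal independent set $S_0$ that contains the globally most valuable item. Then $v(S)\le m\,v(S_0)$ for every $S$, so there are at most $O(m\log m)$ rounds, each polynomial.

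Your instinct to bound how small the ``invisible'' increments can be points in the wrong direction. The useful fact is that when the whole sequence contains no \EF{1} allocation, the increment is provably large.
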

\begin{proof}
As in the proof of \cref{thm:TwoAgents-GeneralGraph-MonotoneVals-Existence}, we can restrict attention to identical additive valuations over a set of goods. Let $m$ denote the number of items and $v$ denote the valuation functions of the agents. We first claim that running \cref{algo:sequence-from-good-S} for a given maximal independent set $S$ either produces a maximal and $\EF{1}$ allocation or produces a maximal independent set $S'$ satisfying $v\paren{ S' } \geq \frac{ m }{ m - 1 } \cdot v\paren{ S }$. Indeed, by \cref{lemma:ao-maximal,lemma:spropforproof}, we have that if \cref{algo:sequence-from-good-S} does not produce a maximal and $\EF{1}$ allocation, we have that $v\paren{ S } < \max\{ v\paren{ X_{ \minconf } }, v\paren{ X_{ \maxconf } } \}$. As the other case is similar, assume that $v\paren{ S } < v\paren{ X_{ \maxconf } }$. Since, we know that $\mathcal{A}_m = \paren{ X_{ \maxconf }, S }$ is not \EF{1}, we can deduce that for all $o' \in X_{ \maxconf }$, it holds that $v\paren{ S } < v\paren{ X_{ \maxconf } \setminus \set{ o' } }$. Taking $o'$ to be the item with the largest valuation, we have that:
\[
v\paren{ S } < v\paren{ X_{ \maxconf } \setminus \set{ o' } } = v\paren{ X_{ \maxconf } } - v\paren{ \set{ o' } } \leq v\paren{ X_{ \maxconf } } - \frac{ v\paren{ X_{ \maxconf } } }{ m } .
\]
The claim follows by setting $S' = X_{ \maxconf }$. With this claim in hand, we find the required allocation by repeatedly invoking \cref{algo:sequence-from-good-S} for different values of $S$, starting from any maximal independent set $S_0$ in the conflict graph that contains the item $o^*$ with the highest valuation. Such an $S_0$ can be efficiently obtained by a greedy procedure. Note that additivity implies that $v\paren{ S } \leq m \cdot v\paren{ S_0 }$ for any set $S$. Now, use our claim above to show that starting with the set $S_0$ would either produces a maximal and $\EF{1}$ allocation or produces a maximal independent set $S'$ whose valuation is a factor of $\frac{ m }{ m - 1 }$ larger. As $v\paren{ S } \leq m \cdot v\paren{ S_0 }$ for any set $S$, the latter cannot happen more that $O\paren{ m \log m }$ times. Thus, in polynomial time, we will find a maximal and \EF{1} allocation.
\end{proof}

So far, we have focused on the case of two agents with monotone valuations. 
The monotonicity assumption is essential: once valuations are allowed to be additive but not necessarily monotone, a maximal \EF{1} allocation may fail to exist, even for two agents with identical additive valuations.

\begin{restatable}
{theorem}{NonExistence-Additive}
For two agents on a path, there exists an instance with identical additive non-monotone valuations where no
maximal \EF{1} allocation exists.
\label{thm:NonExistence-Additive}
\end{restatable}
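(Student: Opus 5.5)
The plan is to exhibit an explicit small instance and check it by exhausting its maximal allocations. I will use the path $o_1 - o_2 - o_3$ with two agents and the identical additive valuation
\[
v(o_1) = -2, \qquad v(o_2) = 2, \qquad v(o_3) = 1 .
\]
This valuation is neither non-decreasing nor non-increasing, so it lies outside the scope of \Cref{thm:TwoAgents-GeneralGraph-MonotoneVals-Existence}. The values were reverse-engineered from the structure of maximal allocations described next, so the argument reduces to one characterization followed by a finite check.

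\textbf{Step 1: characterize the maximal allocations.} An endpoint $o_1$ (or $o_3$) has the single neighbour $o_2$. If it were unallocated, maximality would require $o_2$ to lie in both bundles, which is impossible. Hence both endpoints are always allocated.

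If $o_1$ and $o_3$ lie in different bundles, then $o_2$ conflicts with both bundles and may stay unallocated. If $o_2$ were allocated instead, it would share a bundle with one of its neighbours, violating independence. This gives the allocations $(\{o_1\},\{o_3\})$ and its reverse.

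If $o_1$ and $o_3$ lie in the same bundle, then $o_2$ can be feasibly given to the other agent, so maximality forces this. This gives $(\{o_1,o_3\},\{o_2\})$ and its reverse.

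Since valuations are identical, an allocation and its reverse are \EF{1} or not together. So only two allocations need checking.

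\textbf{Step 2: check that \EF{1} fails in both.} For $(\{o_1\},\{o_3\})$, the agent holding $o_1$ has value $-2$ against $1$. Removing $o_1$ gives $0<1$, and removing $o_3$ gives $-2<0$, so this agent's envy cannot be eliminated.

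For $(\{o_1,o_3\},\{o_2\})$, the agent holding $\{o_1,o_3\}$ has value $-1$ against $2$. The three possible removals give:
\begin{itemize}
\item removing $o_1$: $1<2$;
\item removing $o_3$: $-2<2$;
\item removing $o_2$: $-1<0$.
\end{itemize}
Every single removal, from either bundle as the \EF{1} definition permits, leaves envy. Hence no maximal \EF{1} allocation exists.

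\textbf{Main obstacle.} The only delicate point is Step 1, which must be complete: every maximal allocation has to be accounted for, including ones that leave items unallocated. I handle this by the degree argument on the endpoints, which pins down all cases.

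The other subtlety is that \EF{1} here allows removing a chore from the envious agent's own bundle as well as a good from the envied bundle. The values were chosen to defeat both kinds of removal. In particular, the inequality $c<b$, that is $1<2$, defeats removing the good $o_3$ from the envious bundle, and the negative total $v(o_1)+v(o_3)=-1<0$ defeats removing $o_2$ from the envied bundle.
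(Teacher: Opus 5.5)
Your proof is correct and takes essentially the paper's approach: an explicit path instance whose maximal allocations you enumerate, showing each one violates \EF{1}. The paper uses an even smaller instance, a two-vertex path with a good of value $+1$ adjacent to a chore of value $-1$, where maximality forces one item to each agent. One small slip is in your closing remark: the inequality $1<2$ defeats removing the \emph{chore} $o_1$ from the envious bundle (leaving $\{o_3\}$ against $\{o_2\}$), not the good $o_3$, although your Step 2 computation of this case is right.
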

\begin{proof}
    Consider a path graph with two items: one good valued at $+1$ and one chore valued at $-1$ by both agents. Any maximal allocation must assign both items, with one assigned to each agent. However, this assignment violates \EF{1}.
\end{proof}

In the next section, we will extend this counterexample to any fixed $n \geq 2$ (\Cref{prop:coutnerexample:mixed}).

\section{Results for An Arbitrary Number of Agents}
\label{sec:Results_n_Agents}

We will now turn to the case of an arbitrary number of agents. We will start by showing that an \EF{1} and maximal allocation can fail to exist for three agents with identical monotone valuations (\Cref{thm:NonExistence-Monotone}), or for $n \geq 4$ agents with identical non-negative additive valuations (\Cref{thm:counterexample_n4}). Additionally, we will provide a general framework that transforms any counterexample for \EF{1} and maximality into an \NPH{}ness proof (\Cref{thm:NPhardness}) of checking the existence of such allocations. As a consequence, we will show that determining the existence of an \EF{1} and maximal allocation is \NPH{} for (a) three agents with identical monotone valuations, (b) any fixed $n \geq 2$ agents, with identical additive valuations, and (c) any fixed $n \geq 4$ agents, with identical non-negative additive valuations.

Given the negative existential and computational results, we restrict our attention to identical additive valuations and a path graph. In this setting, we show that an allocation that is envy-free up to one good and one chore (\EFoneone{}) and maximal always exists (\Cref{thm:n_Agents_Identical}). Our proof uses a novel application of the ``cycle plus triangles'' theorem, specifically, its generalization known as ``cycle plus $n$-cliques'' theorem, from graph theory.

We also consider the case of \emph{uniform valuations} in which each item has the same marginal value for every set. We show that when the conflict graph is a tree, a maximal \EF{1} allocation is guaranteed to exist and can be found in polynomial time~(\Cref{thm:uni-tree2}).

\subsection{General Valuations on General Graphs}

\begin{restatable}
{theorem}{NonExistence-Monotone}
For $n = 3$ agents, there exists an instance with identical monotone valuations where no
maximal \EF{1} allocation exists.
\label{thm:NonExistence-Monotone}
\end{restatable}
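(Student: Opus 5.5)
The plan is to build an explicit small counterexample. I will pick a conflict graph whose maximal allocations for three agents fall into a few rigid shapes. Then I will define a single monotone non-decreasing set function $v$ that can be supersubadditive where needed, so that every one of these shapes violates \EF{1}. Additive valuations cannot do this here: the ``singleton versus singleton'' comparisons force contradictory inequalities, as the $4$-cycle below shows. So the freedom of an arbitrary monotone $v$ is essential. By \Cref{thm:Equivalence} it suffices to treat goods.

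\emph{Graph.} I would use a complete bipartite graph $K_{s,t}$ with sides $L$ and $R$, plus possibly a few isolated or pendant vertices if extra rigidity is needed. The structural lemma I would prove first is the following. In a maximal allocation, each bundle lies entirely within $L$ or entirely within $R$, because every $L$--$R$ pair conflicts. Moreover, if some vertex of $L$ is unallocated, every agent must hold a vertex of $R$, so no agent holds any $L$-vertex. Symmetrically for $R$. Consequently every maximal allocation is of one of two types:
\begin{itemize}
\item[(a)] one side is entirely unallocated, and the other side is fully split among all three agents with nonempty bundles;
\item[(b)] both sides are fully allocated, each agent being an ``$L$-agent'', an ``$R$-agent'', or empty. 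If some agent is empty, everything is allocated.
\end{itemize}
A sanity check shows that $K_{2,2}$ is too small: there, type (b) forces the two conditions $v(a)<\min(v(x),v(y))$ and $v(x)<\min(v(a),v(b))$, which are incompatible. So I will take $s,t\ge 3$, probably $K_{3,3}$, or $K_{2,k}$ with $k$ chosen so that $R$ cannot be split into three nonempty \EF{1}-parts. This will let me push the singletons' values into the right order.

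\emph{Valuation.} I will define $v$ by cardinality profiles, $v(T)=f(|T\cap L|,|T\cap R|)$, or with a few item-specific tweaks, with $f$ monotone in both coordinates. Since bundles are one-sided, only $f(\cdot,0)$ and $f(0,\cdot)$ matter, together with monotonicity on mixed sets. The intended mechanism is sharply threshold-like (superadditive) values on each side, such as $v$ jumping only when a bundle contains all of one side. Then any split of a side into two or three parts leaves a ``full side'' holder envied even after removing one item, while partial holders remain stuck at low value. At the same time, unequal-size allocations are killed by the asymmetry between $L$-values and $R$-values: a whole $L$ is worth much more than any proper subset of $R$, and vice versa.

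\emph{Verification and main obstacle.} I will enumerate all maximal allocations up to permuting agents and check \EF{1} failure for each. For $K_{3,3}$ this is a finite list of partitions of $L$ and $R$ into at most three labelled parts. The hard step is choosing $f$ so that every partition shape in type (b) fails simultaneously. Shapes such as ``$L$ to one agent, $R$ split $2+1$'' versus ``$L$ split $2+1$, $R$ to one agent'' pull $f$ in opposite directions, exactly as in the $K_{2,2}$ contradiction. I would first try making one side large, $|R|=k$, with $f(0,j)$ convex in $j$ and $f(|L|,0)$ strictly between $f(0,k-1)$ and $f(0,k)$. If that fails, I would add a pendant gadget, for example an extra vertex adjacent to all of $L$, to eliminate the troublesome shapes from the list of maximal allocations.
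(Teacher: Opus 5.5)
\textbf{There is a genuine gap: the proposal never exhibits an instance, and the family you plan to search cannot contain one.} For this statement, the instance together with its case check \emph{is} the proof.

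Your own type~(a) classification shows why $K_{s,t}$ fails. Suppose $|R|\ge 3$. Leave $L$ unallocated and hand out the vertices of $R$ one at a time, each to an agent of currently minimum value. Give the first three to distinct empty agents, which are minimum-value since $v\ge 0=v(\emptyset)$. With identical valuations a minimum-value agent is envied by no one, so each step preserves \EF{1}. Every bundle then contains an $R$-vertex, so every vertex of $L$ conflicts with every bundle, and the allocation is maximal.

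Consequences for your plan:
\begin{itemize}
\item No choice of $f$ (threshold-like, convex in $j$, or otherwise) can make all type~(a) allocations fail \EF{1} once a side has at least three vertices. For $K_{3,3}$ this is immediate, since three singletons are trivially \EF{1}.
\item No $k\ge 3$ makes ``$R$ cannot be split into three nonempty \EF{1} parts'' true.
\item Combined with your $K_{2,2}$ computation and the trivial cases $K_{1,1}$ and $K_{1,2}$, no complete bipartite graph works.
\item Your fallback gadget does not help. A vertex adjacent to all of $L$ is just another $R$-vertex, turning $K_{s,t}$ into $K_{s,t+1}$. More generally, any added vertices whose neighbours all lie in $L$ (isolated vertices, pendants on $L$) can be fed into the same greedy after the first three $R$-vertices.
\end{itemize}

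\textbf{The missing idea} is a gadget that is \emph{forced} into every maximal allocation and also bars its holder from designated cheap items on \emph{both} sides. The paper takes $K_{3,3}$ on $X=\{o_1,o_2,o_3\}$, $Y=\{o_4,o_5,o_6\}$ and adds $o_7$ adjacent only to $o_1$ and $o_4$.
\begin{itemize}
\item Since $o_7$ has only two neighbours, at most two agents can block it, so it must be allocated.
\item Its holder cannot take $o_1$ or $o_4$.
\item The non-additive valuation makes exactly these the cheap singletons: $v(o_1)=v(o_4)=1$, all other singletons are worth $2$, $v(\{o_7,o_j\})=3$ for $j\in\{2,3,5,6\}$, and every other set of size at least two is worth $4$.
\end{itemize}
Now the ``split $Y$ three ways'' allocation is forced, up to symmetry, to be $(\{o_5,o_7\},\{o_6\},\{o_4\})$. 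The holder of $o_4$ has value $1$, while the pair minus either good is still worth $2$, so \EF{1} fails. Using the symmetries $o_2\leftrightarrow o_3$, $o_5\leftrightarrow o_6$ and $X\leftrightarrow Y$, the paper reduces to six maximal allocations and checks that each violates \EF{1}. You need a construction of this kind plus the exhaustive verification; superadditivity heuristics on a complete bipartite graph cannot get there.
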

\begin{proof}
    Consider the following instance with $7$ goods. The graph consists of a complete bipartite graph $K_{3,3}$ with bipartition $X=\{o_1, o_2, o_3\}$ and $Y=\{o_4, o_5, o_6\}$ together with two edges $\{o_1, o_7\}$ and $\{o_4, o_7\}$. Each of the three agents has an identical monotone valuation $v$ such that 
\begin{itemize}
			\item $v(\emptyset) = 0$; 
			\item $v(S) = 1$ if $|S|=1$ and $S\in \{ \{o_1\},\{o_4\} \}$; 
			\item $v(S) = 2$ if $|S|=1$ and $S \not \in \{ \{o_1\},\{o_4\} \}$; 
			\item $v(S) = 3$ if $S$ is $\{o_2, o_7\}$, $\{o_3, o_7\}$, $\{o_5, o_7\}$, or $\{o_6, o_7\}$; 
			\item $v(S) = 4$ for any other $S \subseteq M$. 
\end{itemize}

	Note that the symmetry on the graph and valuation holds: swapping $o_2$ and $o_3$, swapping $o_5$ and $o_6$, and swapping $\{o_1, o_2, o_3\}$ and $\{o_4, o_5, o_6\}$ all lead to the same instance, and swapping the bundles of agents does not change whether the allocation is EF1 or not because valuations are identical. 

Specifically, consider any maximal allocation $\mathcal{A}$.
By maximality, good $o_7$ must be allocated to some agent since it has degree $2$. Since agents have identical valuations, without loss of generality, suppose it is allocated to agent $1$. Thus, agent $1$ can receive neither $o_1$ nor $o_4$; let us assume that if $o_1$ is allocated, then it is allocated to agent $2$ and if $o_4$ is allocated, then it is allocated to agent $3$. Further, agent $1$ can receive a good from at most one part of the bipartition $X$ and $Y$ due to conflicting constraints. Without loss of generality, suppose it is $Y$. Now consider the following cases. 

\begin{enumerate}
\item If none of the goods in $Y$ is allocated to agent $1$, then $\mathcal{A}$ must allocate $X$ to agent $2$ and $Y$ to agent $3$. 
\item Suppose that exactly one of the goods in $Y$, say good $o_5$, is allocated to agent $1$. In this case, observe that $o_4$ must be allocated. This is because if $o_4$ is unallocated, this means that both agents $2$ and $3$ receive at least one good from $X$ and $o_6$ must be allocated to agent $1$, a contradiction. 
Thus, the only possible maximal allocations are $(\{o_5,o_7\},\{o_1,o_2,o_3\},\{o_4,o_6\})$ and $(\{o_5,o_7\},\{o_6\},\{o_4\})$.
A similar argument holds when $o_6$ is allocated to agent $1$ due to the symmetry of the graph. 

\item Suppose that two goods, $o_5$ and $o_6$, in $Y$ are allocated to agent $1$. If $o_4$ is allocated, then the only possible maximal allocation is $(\{o_5,o_6,o_7\},\{o_1,o_2,o_3\},\{o_4\})$. 
If $o_4$ is unallocated, this means that both agents $2$ and $3$ receive at least one good from $X$ and all goods in $X$ must be allocated to either of such agents. Thus, the only possible maximal allocations are $(\{o_5,o_6,o_7\},\{o_1\},\{o_2,o_3\})$, $(\{o_5,o_6,o_7\},\{o_1,o_2\},\{o_3\})$, together with $(\{o_5,o_6,o_7\},\{o_1,o_3\},\{o_2\})$. 
\end{enumerate}

    For this reason, there are only $6$ maximal allocations to consider (refer to Figure \ref{fig:counterexample_n3}). All of them are not EF1, because:
	\begin{itemize}
		\item Allocations \#1, \#4, \#5, \#6: one agent takes one good but there is another agent who takes three goods
		\item Allocation \#2: $v(\{o_5, o_7\}) = 3$ but $v(\{o_1, o_2\}) = v(\{o_1, o_3\}) = v(\{o_2, o_3\}) = 4$.
		\item Allocation \#3: $v(o_4) = 1$ but $v(o_5) = v(o_7) = 2$.~\qedhere
	\end{itemize}
\end{proof}


\newcommand{\mygraphC}[4]{%
\begin{scope}[xshift=#2,yshift=#3]

\def\colors{#4}

\foreach \col [count=\i] in \colors {

    \ifnum\i=1 \node[vertex,fill=\col] (w#1\i) at (-1,0.75) {$o_1$};\fi
    \ifnum\i=2 \node[vertex,fill=\col] (w#1\i) at (0,0.75) {$o_2$};\fi
    \ifnum\i=3 \node[vertex,fill=\col] (w#1\i) at (1,0.75) {$o_3$};\fi
    \ifnum\i=4 \node[vertex,fill=\col] (w#1\i) at (-1,-0.75) {$o_4$};\fi
    \ifnum\i=5 \node[vertex,fill=\col] (w#1\i) at (0,-0.75) {$o_5$};\fi
    \ifnum\i=6 \node[vertex,fill=\col] (w#1\i) at (1,-0.75) {$o_6$};\fi
    \ifnum\i=7 \node[vertex,fill=\col] (w#1\i) at (-1.75,0) {$o_7$};\fi

}

\draw (w#11)--(w#14)--(w#12)--(w#15)--(w#13)--(w#16)--(w#12);
\draw (w#14)--(w#17)--(w#11)--(w#15)--(w#13);
\draw (w#11)--(w#16);
\draw (w#13)--(w#14);

\end{scope}
}

\begin{figure}[!htbp]
\centering

\begin{tikzpicture}[
    vertex/.style={
        circle,
        draw,
        minimum size=5mm,
        inner sep=0pt
    },
    middle/.style={midway, inner sep = 1pt, fill=white}
]


\mygraphC{1}{-4.5cm}{0cm}{blue!20, blue!20, blue!20, green!20, green!20, green!20, red!20}
\node[below=0.2cm of w15] {Allocation \#1};

\mygraphC{2}{0cm}{0cm}{blue!20, blue!20, blue!20, green!20, red!20, green!20, red!20}
\node[below=0.2cm of w25] {Allocation \#2};

\mygraphC{3}{4.5cm}{0cm}{black!20, black!20, black!20, green!20, red!20, blue!20, red!20}
\node[below=0.2cm of w35] {Allocation \#3};

\mygraphC{4}{-4.5cm}{-4cm}{blue!20, blue!20, blue!20, green!20, red!20, red!20, red!20}
\node[below=0.2cm of w45] {Allocation \#4};

\mygraphC{5}{0cm}{-4cm}{blue!20, green!20, green!20, black!20, red!20, red!20, red!20}
\node[below=0.2cm of w55] {Allocation \#5};

\mygraphC{6}{4.5cm}{-4cm}{blue!20, blue!20, green!20, black!20, red!20, red!20, red!20}
\node[below=0.2cm of w65] {Allocation \#6};

\end{tikzpicture}

\caption{$6$ maximal allocations to consider in the instance of Theorem~\ref{thm:NonExistence-Monotone}. The vertices in red, blue, green, and gray are goods taken by agent $1, 2, 3$, and no one, respectively.}
	\label{fig:counterexample_n3}

\end{figure}

The instance constructed in the proof of \Cref{thm:NonExistence-Monotone} uses an identical monotone valuation. It remains an open question whether a counterexample exists for three agents with (even identical) additive and monotone valuations on general graphs. 

For every number $n \geq 4$ of agents, \citet{HH22fair} presented an instance with identical additive valuations and a complete bipartite graph $K_{n-1,n-1}$ for which a complete EF1 allocation does not exist.
In such cases, a complete allocation does exist, as the maximum degree of the graph is less than the number of agents. Thus, this counterexample implies that achieving both EF1 and maximality is impossible even when four agents have identical additive valuations. Here, we provide a smaller example using $K_{3,n-1}$, which turns out to be smallest since for $m\leq n+1$, there always exists a maximal EF1 allocation. 


\begin{restatable}{proposition}{ExampleFourAgents}
 For every number $n \geq 4$ of agents, there is an instance with identical non-negative additive valuations, $m=n+2$, and $G=K_{3,n-1}$ where no maximal EF1 allocation exists.
    \label{thm:counterexample_n4}
\end{restatable}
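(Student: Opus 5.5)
We take an explicit instance and show that every maximal allocation violates \EF{1}. Let $G=K_{3,n-1}$ with parts $X=\{x_1,x_2,x_3\}$ and $Y=\{y_1,\dots,y_{n-1}\}$, so $m=n+2$. All agents share the identical additive valuation with $v(x)=2$ for every $x\in X$ and $v(y)=3$ for every $y\in Y$. These numbers are chosen to satisfy two inequalities, $2v(x)>v(y)$ and $v(y)>v(x)$, which are exactly what the case analysis below uses.

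\emph{Step 1: maximality forces completeness.} In $K_{3,n-1}$ every independent set lies inside $X$ or inside $Y$, so each nonempty bundle is a subset of one side. Suppose some $x\in X$ were unallocated. By maximality, every agent's bundle would have to contain a neighbour of $x$, that is, an item of $Y$. This would require $n$ distinct items of $Y$, but $|Y|=n-1$. Symmetrically, an unallocated $y\in Y$ would require every agent to hold an item of $X$, and $n\ge 4>3=|X|$ rules this out. Hence every maximal allocation is complete. Each agent then holds a subset of $X$ or a subset of $Y$, possibly empty.

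\emph{Step 2: case analysis on the number $k\in\{1,2,3\}$ of agents whose bundles meet $X$.} These $k$ agents partition $X$, and the remaining $n-k$ agents partition the $n-1$ items of $Y$.
\begin{itemize}
\item \textbf{Case $k=1$.} One agent holds all of $X$, with value $6$. If some other agent is empty, it still values the $X$-bundle at $4>0$ after removing any one item. Otherwise the $n-1$ other agents each hold exactly one $Y$-item, worth $3$, and $X$ minus any one item is still worth $4>3$.
\item \textbf{Case $k=2$.} Some agent holds exactly one $X$-item, worth $2$. The $n-2$ other agents share $n-1$ items of $Y$, so by pigeonhole some agent holds at least two $Y$-items. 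That bundle minus any one item is worth at least $3>2$.
\item \textbf{Case $k=3$.} Each $X$-agent has value $2$. Now $n-3\ge 1$ agents share $n-1$ items of $Y$, so again some agent holds at least two $Y$-items, and the same inequality $3>2$ applies.
\end{itemize}
In every case some pair of agents violates \EF{1}. Since valuations are goods, removing an item from one's own bundle never helps, so only removal from the envied bundle needs to be checked, as in the definition of \EF{1} for goods.

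The main obstacle is Step 1: reducing maximality to completeness via the degree and counting argument, which is what makes the case split exhaustive. After that, the only delicate point is choosing the values so that the inequalities $2v(x)>v(y)$ and $v(y)>v(x)$ hold simultaneously; the choice $(2,3)$ works. Finally, we should double-check that $n\ge 4$ is really used: in the case $k=3$ it guarantees $n-3\ge 1$ agents for $Y$, and the pigeonhole steps need $n-1>n-k$.
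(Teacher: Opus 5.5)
Your proof is correct and follows essentially the same route as the paper: the same instance ($K_{3,n-1}$ with values $2$ on the three-vertex side and $3$ on the other), completeness from the degree bound, and then a pigeonhole case analysis. The only difference is how the cases are split. The paper distinguishes whether exactly one agent or at least two agents receive no item from the $(n-1)$-side, while you split on the number $k$ of agents meeting the three-vertex side; the two splits cover the same situations.
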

\begin{proof}
    Consider the following instance with $m = n+2$ goods and a complete bipartite graph $K_{3, n-1}$ with the left part containing the three goods $\{o_1,o_2,o_3\}$ and the right part containing the $n-1$ goods $\{o_4,\dots,o_{n+2}\}$. Each agent has an identical additive valuation where
        \begin{equation*}
            v(o) = \begin{cases}
                2 & (o \in \{o_1, o_2, o_3\}) \\
                3 & (\text{otherwise}).
            \end{cases}
        \end{equation*}
    Consider an arbitrary maximal allocation $\mathcal{A}$. 
    Since the maximum degree of the graph is $n-1$, all goods must be allocated in this allocation. By the pigeonhole principle, there exists an agent $i$ that does not receive any of the goods on the right.
    \begin{itemize}
        \item Case 1: If there is exactly one such agent, then the goods $o_4,\dots,o_{n+2}$ are allocated to the other $n-1$ agents, which means that each of those agents receives one good of value $3$. The remaining goods $o_1,o_2,o_3$ must therefore be assigned to agent $i$. Thus, every other agent envies agent $i$ even after the removal of one of the goods $o_1,o_2,o_3$ from agent $i$'s bundle.
        \item Case 2: If there are two or more such agents, then there exists an agent $j$ who receives two or more goods among $o_4,\dots,o_{n+2}$. Further, there exists an agent $k$ who receives none of the goods on the right and at most one good from the left side. Then agent $k$ envies agent $j$ even after the removal of one good from agent $j$'s bundle.
    \end{itemize}
    Therefore, in either case, the allocation $\mathcal{A}$ is not EF1.
\end{proof}

\begin{proposition}
For every number $n \geq 2$ of agents, there is an instance with identical additive valuations, consisting of $n-1$ goods and one chore, where no maximal EF1 allocation exists.
\label{prop:coutnerexample:mixed}
\end{proposition}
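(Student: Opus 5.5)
We generalize the two-item path counterexample of \Cref{thm:NonExistence-Additive}. The conflict graph is a star: one center chore $c$ with $v(c)=-1$, adjacent to $n-1$ leaf goods $g_1,\dots,g_{n-1}$ with $v(g_k)=1$ each. Leaves are pairwise non-adjacent. For $n=2$ this is exactly the path with one good and one chore. The plan is to show that every maximal allocation violates \EF{1}.

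First, we pin down the structure of maximal allocations. Any unallocated leaf conflicts only with $c$. Since $n\ge 2$, some agent does not hold $c$, and that agent could take the leaf. Hence every leaf is allocated.

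Next, consider the chore $c$. If $c$ is unallocated, maximality requires every agent to hold some leaf. That needs $n$ leaves, but only $n-1$ exist. So $c$ is allocated, say to agent $i$, and then $A_i=\{c\}$, because $c$ conflicts with every leaf. The remaining $n-1$ agents together hold all $n-1$ leaves.

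Now we check \EF{1} in two cases, using the mixed definition: a single item may be removed from either bundle. If some other agent $j$ holds at least one leaf, then agent $i$ has value $-1$ and agent $j$ has value at least $1$. Removing $c$ from $A_i$ gives $0<1\le v(A_j)$. If $A_j$ has $t\ge 1$ leaves, removing one of them gives $t-1\ge 0>-1$. Either way, the best single removal leaves agent $i$ strictly envious, so \EF{1} fails. The other case, where no other agent holds a leaf, is impossible because all $n-1\ge 1$ leaves are allocated among agents other than $i$.

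The only delicate point is the first case: we must make sure no single removal rescues \EF{1}. Removing $c$ gives $0$ versus at least $1$. Removing a good gives $-1$ versus at least $0$. Both still leave strict envy.

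If instead one wanted the chore on a path, the star is not a path, but the proposition does not require a path, so the star suffices. Note that the goods here have value $1$ and there are exactly $n-1$ of them, matching the statement.
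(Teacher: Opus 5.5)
Your proof is correct, but it uses a different instance from the paper. The paper puts the $n$ items on the complete graph $K_n$. There every independent set is a singleton, so the structure of maximal allocations is immediate: each agent receives exactly one item, one agent gets the chore, and a three-way check over $S\in\{\emptyset,\{o_n\},\{o_k\}\}$ rules out \EF{1}.

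Your star $K_{1,n-1}$ needs two short extra maximality arguments. First, an unallocated leaf could be taken by any agent not holding the center. Second, leaving the center unallocated would require $n$ disjoint leaves for $n$ agents. You also need the small adjustment that the envied agent may hold $t\ge 1$ leaves, which you handle correctly.

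In exchange, your conflict graph is a tree with only $n-1$ edges; it is bipartite, an interval graph, and for $n=3$ simply the path $P_3$. So your version shows two things the paper's does not:
\begin{itemize}
\item With a single chore, the failure already occurs on trees for every $n$. This contrasts with the tree result for uniform valuations in \Cref{thm:uni-tree2}.
\item For $n=3$ it occurs on a path. Hence \EFoneone{} in \Cref{thm:n_Agents_Identical} cannot be strengthened to \EF{1}, at least for three agents.
\end{itemize}
Both instances have $n$ items, so either one plugs into \Cref{lem:main-red} for the \NPH{}ness claim.
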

\begin{proof}
Consider a complete graph on $n$ items. Let $o_1,\dots,o_{n-1}$ be goods of value $1$ and let $o_n$ be a chore of value $-1$, where all agents have the same additive valuation.

Since the graph is complete, every independent set has size at most one. Hence, in any maximal allocation, every agent must receive exactly one item, and therefore all $n$ items are allocated. In particular, one agent receives the chore $o_n$, while each of the remaining $n-1$ agents receives one good.

Let $i$ be the agent receiving $o_n$ and let $j$ be an agent receiving some good $o_k$ with $k \in [n-1]$. Then $v(A_i)=-1$ and $v(A_j)=1$. Since $A_i \cup A_j=\{o_n,o_k\}$, the only sets $S \subseteq A_i \cup A_j$ of size at most one are $\emptyset$, $\{o_n\}$, and $\{o_k\}$. For these choices, we obtain
\[
(v(A_i \setminus S), v(A_j \setminus S)) \in \{(-1,1), (0,1), (-1,0)\},
\]
so the inequality $v(A_i \setminus S) \ge v(A_j \setminus S)$ never holds. Thus, the allocation is not EF1. Since every maximal allocation has this form, no maximal EF1 allocation exists.
\end{proof}

\begin{restatable}
{theorem}{NPhardness}
Determining the existence of an \EF{1} and maximal allocation is \NPH{} even for:
\begin{itemize}
    \item three agents with identical monotone valuations, 
    \item any fixed $n \geq 2$ agents, with identical additive valuations, or
    \item any fixed $n \geq 4$ agents, with identical non-negative additive valuations.
\end{itemize}
\label{thm:NPhardness}
\end{restatable}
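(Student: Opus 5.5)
Our plan is to give one generic reduction that turns any fixed counterexample instance into an \NPH{}ness proof, and then to plug in the three counterexamples already established: \Cref{thm:NonExistence-Monotone} for three agents with identical monotone valuations, \Cref{prop:coutnerexample:mixed} for any $n \ge 2$ with identical additive valuations, and \Cref{thm:counterexample_n4} for any $n \ge 4$ with identical non-negative additive valuations. We reduce from a problem of the same flavour that is known to be \NPH{}. The natural choice is deciding whether a graph $H$ admits a proper $n$-coloring, or, more conveniently, a balanced variant such as \textsc{Equitable Coloring} or a partition into $n$ independent sets of equal size. Given such an $H$, we build an instance whose conflict graph is the disjoint union of $H$ and a gadget copy of the counterexample $C$ from the relevant result. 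The valuation is identical for all agents. Items of $H$ receive a common value $\beta$, and items of $C$ receive their original values scaled by a factor $\alpha$. The constants are chosen so that the two components interact only in a controlled way.

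The key steps, in order, are as follows.
\begin{enumerate}
\item \textbf{Decoupling.} Maximality decomposes componentwise, since an unallocated item is blocked by neighbours inside its own component. Hence a maximal allocation is exactly a pair of maximal partial colorings, one of $H$ and one of $C$.
\item \textbf{Completeness direction.} If $H$ has a balanced $n$-coloring, we give each agent one color class of $H$, which contributes an equal amount $\beta |V(H)|/n$ to every bundle. On $C$ we pick an allocation that is ``almost'' EF1. The identical $H$-parts shift every bundle by the same constant, so we must arrange that the residual envy from $C$ is absorbed. To do this, we make one item of $H$ so valuable (or, in the chore case, so negative) that removing it repairs the violation.
\item \textbf{Soundness direction.} If $H$ has no such coloring, every maximal allocation of $H$ leaves the agents' $H$-parts unequal, or else leaves $H$ incomplete in a way that maximality forbids unless a degree condition holds. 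The $C$-part then cannot be compensated, and we fall back to the non-existence argument of the counterexample.
\end{enumerate}

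In practice the cleaner design may be the reverse: choose the $H$-gadget so that exactly its balanced colorings make the $H$-contributions equal. Then EF1 on the whole instance forces EF1 on the $C$-part, while an unbalanced coloring yields slack that lets $C$ be handled. To get that slack, we will add to $C$ an extra ``escape'' item, namely a pendant vertex whose value lets some $C$-allocation become EF1 once a one-item imbalance is available. For monotone valuations we take $v$ to be the sum of $\alpha v_C$ and an additive part on $H$, so monotonicity and identicality are preserved. For the non-negative additive case everything stays additive with non-negative weights, and for the mixed case we simply keep the chore from \Cref{prop:coutnerexample:mixed}.

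The main obstacle is calibrating the gadget so that the equivalence holds in both directions. We need EF1 of the whole allocation to be equivalent to an EF1-type condition on $C$ shifted by the $H$-imbalance. This must also hold uniformly for each of the three valuation classes, where only one item may be removed across both components. I would first try to make $H$-items tiny (value $\varepsilon$) except for a designated structure, and compare ``which item is removed'' case by case. If that fails, I would instead reduce from \textsc{Independent Set}/\textsc{3-Coloring} via a path-padding construction in which item counts, rather than values, encode the coloring.
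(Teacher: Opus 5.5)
\textbf{There is a genuine gap.} The proposal never fixes a reduction for which both directions hold, and the soundness step runs the wrong way.

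First, your combined instance does not test whether $H$ has a balanced proper $n$-coloring. By your own (correct) decoupling step, maximality only requires a maximal \emph{partial} $n$-coloring of $H$. Such colorings always exist: color greedily, and an uncolored vertex merely needs a neighbour in every class. For example, $K_{n+1}$ has no proper $n$-coloring, yet every maximal partial $n$-coloring of it has all classes of size one. What you would actually need is hardness of some statement about the class-size profiles of maximal partial colorings. That is not \textsc{Equitable Coloring}, and you do not establish it.

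Second, and more seriously, unequal $H$-parts are not an obstruction. They are exactly the resource that can repair the envy on $C$, by giving extra $H$-value to the agent who is behind on $C$. With small $H$-values, this is possible whenever $H$ has large enough independent sets. With large $H$-values, a one-item imbalance turns the requirement on $C$ into a different condition: roughly, agents holding fewer $H$-items must be weakly richer on $C$. The counterexample gives no reason for that condition to fail. Hence ``the $C$-part cannot be compensated and we fall back to the non-existence argument'' is unjustified. The reverse design has the same defect, since you would then need hardness of ``$H$ admits a sufficiently unbalanced maximal partial coloring'', which you neither name nor prove.

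The missing idea is the NP-hard quantity that caps compensation. An agent's share of a gadget copy of $H$ is an independent set, so it can gain at most $\alpha(H)$ gadget items. The paper therefore reduces from \textsc{Independent Set} $(H,t)$:
\begin{itemize}
\item Let $\gamma>0$ be the minimum, over maximal allocations of the fixed counterexample $\tilde I$, of $\max_{i,i'}\bigl(\tilde v^{-1}(\tilde A_i)-\tilde v(\tilde A_{i'})\bigr)$, and give each gadget item value $\lambda=\gamma/t$.
\item Use $n$ private copies $X_i$ of $H$, with all edges between $X_i\cup Y_i$ and $X_{i'}\cup Y_{i'}$ for $i\neq i'$. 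Each agent then draws from a single copy, cannot pool independent sets, and maximality holds across copies.
\item Attach to each $x_{i,w}$ a zero-valued twin $y_{i,w}$. An independent set of any size $c_i\le t$, not necessarily maximal, can then be padded into a maximal allocation without adding value.
\end{itemize}
In the YES case, each agent receives exactly $c_i=\lceil\max\{0,\tilde v^{-1}(\tilde A^*_1)-\tilde v(\tilde A^*_i)\}/\lambda\rceil$ gadget items. This catches up without overshooting, because each item is worth only $\lambda$. In the NO case, every agent gets at most $(t-1)\lambda<\gamma$ extra value, so the worst gap survives. Your decoupling observation is used there too: the restriction of a maximal allocation to $\tilde M$ is maximal for $\tilde I$. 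Without the independence-number cap, the per-agent copies and the dummy padding, however, your proposal does not yet contain a working reduction.
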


Now, we will show that it is NP-hard to decide whether a maximal EF1 allocation exists, as stated below.

In fact, our proof can transform any negative example into an NP-hardness result, as stated more precisely below.

\begin{lemma} \label{lem:main-red}
Suppose that there exists an instance $\tI = ([n], \tM, \tv, \tG = (\tM, \tE))$ (with identical valuation) where the number $n$ of agents and the number $|\tM|$ of items are both constants, such that no maximal EF1 allocation exists. Then, it is NP-hard to decide whether a maximal EF1 allocation exists for $n$ agents with identical valuations. 
Furthermore, if $\tv$ is additive, then this NP-hardness applies even when the valuations are additive. 
\end{lemma}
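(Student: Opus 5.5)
We will reduce from an NP-complete problem in which a ``yes'' certificate gives a balanced allocation and a ``no'' instance lets us force the gadget $\tI$. The natural source is a partition-type problem for identical agents. Since the valuations must be additive whenever $\tv$ is, we use \textsc{Partition}, generalized to $n$ parts: given integers $a_1,\dots,a_k$ with total $nT$, decide whether they split into $n$ groups of sum exactly $T$. This is NP-hard for every fixed $n\ge 2$.

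\textbf{Construction.} The instance $I$ is the disjoint union of three graphs.
\begin{enumerate}[(i)]
\item One copy of the gadget $\tG$, with its valuation scaled by a small factor $\varepsilon$.
\item For each number $a_i$, a complete graph $K_n$ whose $n$ vertices are $n$ identical items of value $a_i$. Each such item carries a large multiplier $L$.
\item A set of $n$ isolated ``padding'' items of huge value $B$.
\end{enumerate}
The valuation is the sum of the values on the three components. Components (ii) and (iii) are additive, so if $\tv$ is additive then the whole valuation is additive.

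Maximality acts componentwise. On each $K_n$ clique every agent must hold exactly one item, since an agent missing one could take a free vertex. This forces all $n$ copies to be allocated, one to each agent, so the clique contributes equally to everyone and is useless for encoding partition.

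I will therefore change component (ii). For each $a_i$, use a star $K_{1,n-1}$, or alternatively a single isolated item of value $a_i$ (isolated items must be allocated under maximality). The isolated-item version gives exactly \textsc{Partition}: the big items are all allocated, and their distribution across agents is arbitrary. Component (iii) is then unnecessary.

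\textbf{Completeness.} Suppose an equal partition exists. Give each agent a group of isolated items of sum $T\cdot L$, and let $\varepsilon$ be tiny relative to $L$. Allocate the gadget by any maximal allocation. Envy between any two agents is then at most $\varepsilon\cdot v(\tM)$. Removing a single large item from the envied bundle erases this envy, provided the envied bundle is nonempty (if it is empty, there is nothing to remove). For mixed or negative values I will use a sign-adjusted version and check the EF1 condition directly. So a maximal EF1 allocation exists.

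\textbf{Soundness.} This is the main obstacle. A non-partitionable instance must not admit a maximal EF1 allocation. With large items alone this fails, because EF1 tolerates one item of slack, so unequal big sums can still be EF1.

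The fix is to make the slack come only from the gadget. I will use the following three steps.
\begin{enumerate}[(a)]
\item \textbf{Add a small tail to each big item.} For each $a_i$, replace the single isolated item with an item of value $a_iL$ plus a large number of tiny isolated filler items. Alternatively, and better, multiply all $a_i$ by $L$ and add a per-agent slack that the gadget alone controls.
\item \textbf{Create an exact gap.} Choose $L > n\cdot(\text{max gadget item})\cdot|\tM|$, so that any bundle inequality in big values of at least $L$ cannot be fixed by removing a gadget item. Then require that every big-item removal destroys the EF1 check. This can be done by splitting each $a_i$ into $a_i$ unit items of value $L$, turning the task into ``equal counts'' together with gadget parity.
\item \textbf{Reduce to the gadget.} Once the big parts are exactly equal, the EF1 comparison between two agents reduces to comparing gadget bundles. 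The removal of a unit big item of value $L$ dominates, however, so I would instead place the gadget at a scale comparable to $L$ and make the big items integral multiples of a common unit larger than the gadget's total.
\end{enumerate}
The key quantitative step is choosing the scales so that EF1 with unequal big sums forces envy exceeding one big item, while equal big sums reduce exactly to the gadget's EF1. If this tuning fails, I would fall back on the three-agent structure: pin the envy via the gadget and encode the hardness through independent-set or coloring constraints rather than numbers.
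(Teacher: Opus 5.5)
\paragraph{Gap: soundness fails.} The construction you settle on is unsound: the gadget scaled by $\varepsilon$, plus one isolated item of value $a_iL$ per number. Take $n=2$ and the \textsc{Partition} NO-instance $\{1,3\}$. Give the item of value $3L$ to one agent and the item of value $L$ to the other, and complete with any maximal allocation of the gadget.
\begin{itemize}
\item The agent holding $L$ satisfies EF1 by deleting the $3L$ item from the other bundle. What remains is worth only $O(\varepsilon)$, and $\varepsilon$ is tiny relative to $L$.
\item The other agent does not envy at all.
\end{itemize}
So a maximal EF1 allocation exists even though the \textsc{Partition} instance has no solution.

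This failure is structural, not a matter of tuning. The slack you use for completeness (dropping one big item from the envied bundle swamps all gadget envy) is equally available in NO-instances whenever the big sums are close. If you instead let the gadget dominate, isolated items can be piled onto whichever agent is envious, so they always compensate regardless of whether a partition exists.

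Your proposed repairs do not escape this. Demanding that no big-item removal restores EF1 contradicts your own completeness argument. Splitting $a_i$ into $a_i$ unit items destroys the \textsc{Partition} structure, since any counts are realizable and the question reduces to divisibility of $\sum_i a_i$. It is also only a pseudo-polynomial transformation.

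\paragraph{The missing idea.} You need a compensation resource whose per-agent capacity is governed by a hard combinatorial parameter and calibrated against the gadget's unavoidable EF1 violation. The paper defines $\gamma=\min_{\tcA}\max_{i,i'}\bigl(\tv^{-1}(\tA_i)-\tv(\tA_{i'})\bigr)>0$, the minimum over maximal allocations $\tcA$ of $\tI$, and reduces from Independent Set $(H,t)$. It adds three ingredients:
\begin{itemize}
\item $n$ disjoint copies $X_1,\dots,X_n$ of $H$, each item worth $\lambda=\gamma/t$;
\item complete bipartite edges between different copies and their twins, so each agent draws from at most one copy;
\item a zero-valued twin $y_{i,w}$ adjacent to each $x_{i,w}$, so that an independent set of any size in $X_i$ extends to a maximal allocation.
\end{itemize}
If $H$ has an independent set of size $t$, each agent can be topped up by up to $\gamma$, which restores EF1. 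Otherwise, the restriction of any maximal allocation to $\tM$ is maximal for $\tI$, since no edges join $\tM$ to the new items. That restriction therefore contains a pair with violation at least $\gamma$. Meanwhile, the envious agent's extra items form an independent set of $H$ of size less than $t$, worth strictly less than $\gamma$, so EF1 fails.
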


\Cref{thm:NPhardness} is an immediate corollary of  \Cref{lem:main-red} where $\tI$ is the instance from \Cref{thm:NonExistence-Monotone}, \Cref{thm:counterexample_n4}, or \Cref{prop:coutnerexample:mixed}. Note that we state \Cref{lem:main-red} in this generic form so that, if subsequent work finds such an instance $\tI$ for additive valuation for $n = 3$, then the NP-hardness would follow as a corollary.

\paragraph{Reduction.}
We will reduce from the Independent Set (IS) problem, which is one of Karp's classic NP-hard problems~\citep{Karp72}. In the IS problem, we are given a graph $H = (V_H, E_H)$ and a positive integer $t$, and the goal is to decide whether $H$ contains an IS of size $t$.

At a high-level, our reduction starts from $\tI$ and adds to it $n$ copies of the graph $H$, where each good has the same value $\lambda$. Roughly speaking, we wish the $i$-th copy of $H$ (denoted by $X_i$ in the proof below) to give ``extra goods'' to the $i$-th agent, in case that agent envies some other agent by more than one good. The crux of the reduction is that such an agent can ``catch up'' (and thus satisfy EF1) iff there is a sufficiently large independent set in $H$.  This is not yet a complete reduction since, $H$ may not have a \emph{maximal} independent set of a certain prescribed size. To alleviate this, we introduce ``dummy goods'' with zero value (denoted by $Y_i$ below) to ensure that we can pick any desired number of goods from each copy of $H$. Finally, some additional edges are also added to ensure that each agent selects goods from a single copy of $H$.

For the proof below, we will use the following notation: for any valuation $v$ and set $S$ of goods, let $v^{-1}(S) := \min_{j \in S} v(S \setminus \{j\})$ denote the value of $S$ after its most valuable good is removed. We use the convention $v^{-1}(\emptyset) = 0$.

\begin{proof}[Proof of \Cref{lem:main-red}]
Recall $\tI$ from the lemma statement.
Let\footnote{$\gamma$ can be computed in $O(1)$ time by bruteforce.}
$\gamma := \min_{\tcA} \max_{i, i' \in [n]} \left(\tv^{-1}(\tA_i) - \tv(\tA_{i'})\right)$ where the outer minimum is over all maximal allocation $\tcA$ of $\tI$. By the assumption on $\tI$, we have $\gamma > 0$. Let $\lambda := \gamma / t$.

Let $(H = (V_H, E_H), t)$ denote the IS instance. Our reduction constructs the instance $I = ([n], M, v, G)$ as follows:
\begin{itemize}
\item {\bf Goods}: $M = \tM \cup X_1 \cdots \cup X_n \cup Y_1 \cup \cdots \cup Y_n$ where $X_i = \{x_{i, w} \mid w \in V_H\}$ and $Y_i = \{y_{i, w} \mid w \in V_H\}$ are sets (each of size $|V_H|$) of additional goods.
\item {\bf Graph}: $G$ contains the following edges:
	\begin{enumerate}[(i)]
	\item All edges in $\tG$,
	\item $(x_{i, u}, x_{i, w})$ for all $i \in [n]$ and $(u, w) \in V_H$,
        \item $(x_{i, w}, y_{i, w})$ for all $i \in [n]$ and $w \in V_H$,
        \item all pairs of vertices in $(X_i \cup Y_i) \times (X_{i'} \cup Y_{i'})$ for all distinct $i, i' \in [n]$.
	\end{enumerate}
	\item {\bf Valuation}: For all $S \subseteq M$, let $v(S) = \tv(S \cap \tM) + \lambda |S \cap X|$ where $X := X_1 \cup \cdots \cup X_n$. That is, the valuations on original goods remain the same, the valuations of each good in $X$ is $\lambda$, and the goods in $Y_1 \cup \cdots \cup Y_n$ have valuations zero.
\end{itemize}

See Figure \ref{fig:nphard_construction} for an illustration of the instance $I$.



It is clear that the reduction runs in polynomial time, and that, if $\tv$ is additive, then $v$ is also additive.

\paragraph{(YES)} Suppose that $H$ contains an IS of size $t$.
Let $\tcA^*$ be a maximal allocation of $\tI$ such that $\max_{i, i' \in [n]} \left(\tv^{-1}(\tA^*_i) -  \tv(\tA^*_{i'})\right) = \gamma$. We may assume w.l.o.g. that $v^{-1}(\tA^*_1) \geq v^{-1}(\tA^*_2), \dots, \geq v^{-1}(\tA^*_n)$. For each $i \in [n]$, we construct $A^*_i$ as follows:
\begin{enumerate}
\item  Let $c_i := \lceil \max\{0, \tv^{-1}(\tA^*_1) - \tv(\tA_i^*)\} / \lambda \rceil \leq t$.
\item Let $S_i$ be any (non-necessarily maximal) IS of size $c_i$ in $H$, which exists since $H$ contains an IS of size $t$.
\item Let $A^*_i = \tA^*_i \cup \{x_{i, v}\}_{v \in S_i} \cup \{y_{i, v}\}_{v \in (V_H \setminus S_i)}$
\end{enumerate}
Observe that each good in $X_i \cup Y_i$ can only belong to $A^*_i$, and it is obvious that there is no edge between goods in $A^*_i$. Thus, $\cA^* = (A^*_1, \dots, A^*_n)$ is a valid allocation. To see that this is maximal, note that the goods from $Y_i$ (resp., $X_i$) together with type-(iii) edges ensure that no other goods in $X_i$ (resp., $Y_i$) can be added to $A^*_i$. Since at least one good from $X_i \cup Y_i$ is picked, type-(iv) edges ensure that no goods in $X_{i'} \cup Y_{i'}$ for $i' \ne i$ can be added to $A^*_i$.

Finally, we argue that $\cA^*$ is EF1. 
To bound $v^{-1}(A^*_i)$, note that $v(A^*_{i}) = \tv(\tA^*_{i}) + c_i \lambda$. Consider two cases based on $c_i$.
\begin{itemize}
\item If $c_i = 0$, we have $v^{-1}(A^*_i) = \tv^{-1}(\tA^{*}_i) \leq \tv^{-1}(\tA^*_1)$.
\item If $c_i > 0$, by definition of $c_i$, we have $v(A^*_i) < \tv^{-1}(\tA^*_1) + \lambda$. Thus, $v^{-1}(A^*_i) \leq v(A^*_i) - \lambda < \tv^{-1}(\tA^*_1)$.
 \end{itemize}
Thus, in both cases, we have $v^{-1}(A^*_i) \leq \tv^{-1}(\tA^*_1)$. 

On the other hand, for any $i' \in [n]$, the definition of $c_{i'}$ immediately implies $v(A^*_{i'}) \geq \tv^{-1}(\tA^*_1)$.

By the two previous paragraphs, $\cA^*$ is EF1.

\paragraph{(NO)} Suppose that $H$ does not contain an IS of size $t$. Consider any maximal allocation $\cA = (A_1, \dots, A_n)$ of $I$. Notice that the allocation $\tcA = (A_1 \cap \tM, \dots, A_n \cap \tM)$ is maximal w.r.t. $\tI$. Thus, there exist $i, i' \in [n]$ such that $\tv^{-1}(\tA_i) - \tv(\tA_{i'}) \geq \gamma$. Due to type-(iv) edges, at most one of $A_{i'} \cap X_1, \dots, A_{i'} \cap X_n$ can be non-empty. Furthermore, type-(ii) edges imply that the non-empty set must correspond to an independent set in $H$. From our assumption, this implies $|A_{i'} \cap X| < t$. As a result,
\begin{align*}
v^{-1}(A_i) \geq \tv^{-1}(A_i)
&\geq \gamma + \tv(\tA_{i'}) \\
&= \gamma + v(A_{i'}) - \lambda |A_{i'} \cap X| \\
&> \gamma + v(A_{i'}) - \lambda \cdot t  \qquad \overset{(\star)}{\geq} v(A_{i'}),
\end{align*}
where $(\star)$ is due to our choice of $\lambda$. Thus, $\tA$ is not EF1.
\end{proof}

\begin{figure}[!htbp]
	\centering
	\includegraphics[width=0.7\textwidth]{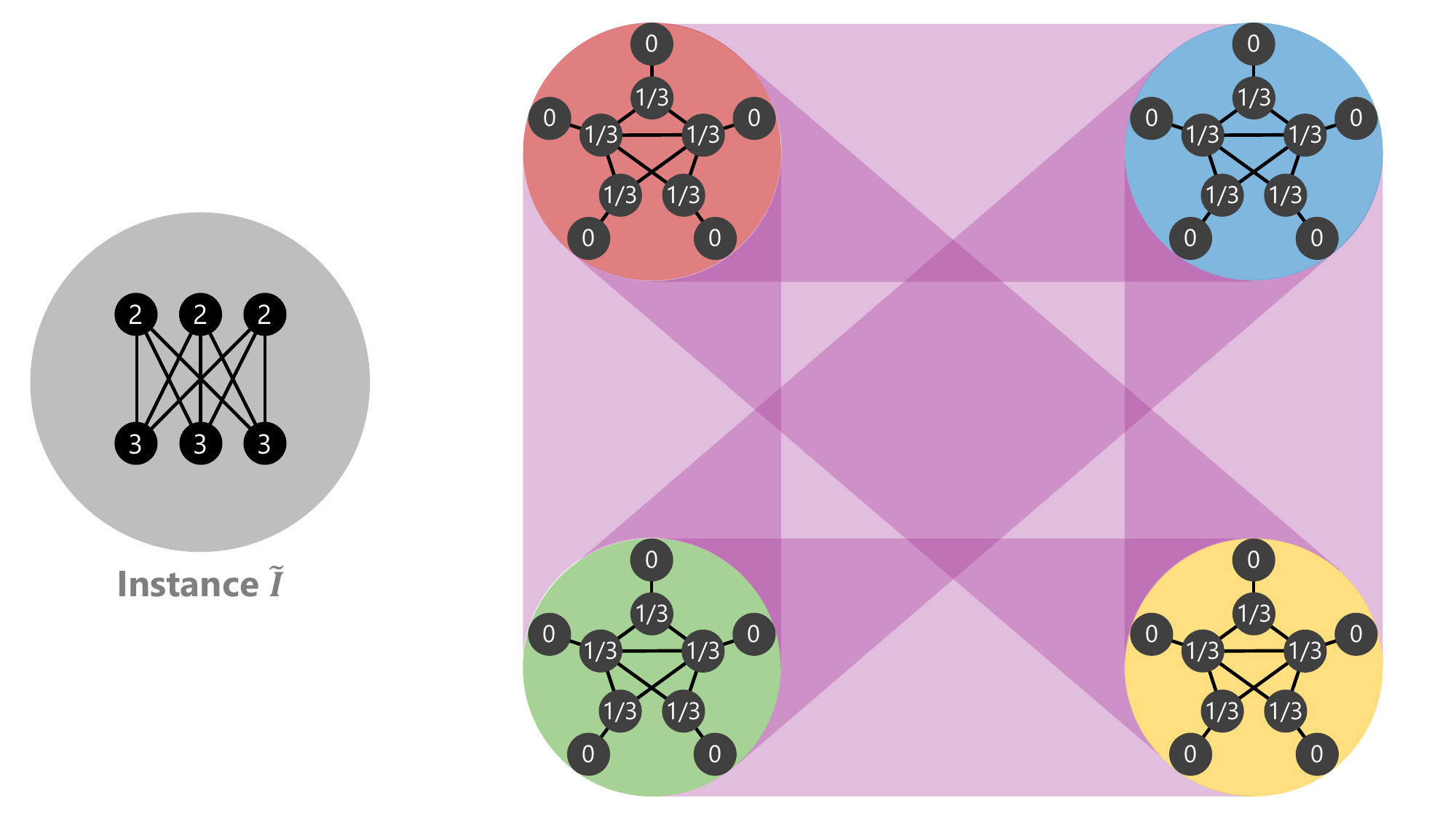}
	\caption{Instance $I$ created with $n = 4$ instance given by Proposition \ref{thm:counterexample_n4} for $\tI$, and $5$-vertex $7$-edge graph for $H$, setting $t = 3$ (note that $\gamma = 1, \lambda = \frac{1}{3}$). The bands in purple represents type-(iv) edges.}.
	\label{fig:nphard_construction}
\end{figure}


\subsection{Identical Additive Valuations on a Path}

We will now establish a positive result for \emph{identical} additive valuations when the conflict graph is a \emph{path}~(\Cref{thm:n_Agents_Identical}). 

\begin{restatable}
{theorem}{nAgentsIdentical}
Given any instance with $n \geq 3$ agents with identical additive valuations and an arbitrary path graph, an EF[1,1] and complete (therefore, maximal) allocation always exists. Moreover, if the valuations are monotone, an \EF{1} and complete allocation always exists.
\label{thm:n_Agents_Identical}
\end{restatable}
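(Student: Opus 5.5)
We plan to construct the allocation as a generalized round-robin whose picking structure is given by a proper $n$-coloring of an auxiliary graph, obtained from the ``cycle plus $n$-cliques'' theorem: if the edge set of a graph is the disjoint union of a Hamiltonian cycle and vertex-disjoint $K_n$'s, then the graph is $n$-colorable (the case $n=3$ is the Fleischner--Stiebitz theorem, and the general case is its known extension). Let the path be $o_1,\dots,o_m$, pad it with zero-valued dummy items so that $m=kn$, and relabel so that the dummy items do not disturb the argument; they are removed at the end.

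\textbf{Step 1 (grouping by value).} Sort the items by value in non-increasing order, $v(o_{\pi(1)})\ge \dots \ge v(o_{\pi(m)})$, and partition them into consecutive blocks $B_1,\dots,B_k$ of $n$ items each. Each block is a ``round''.

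\textbf{Step 2 (auxiliary graph).} Form the graph $\Gamma$ on the $m$ items whose edges are the path edges $\{o_j,o_{j+1}\}$, closed into a Hamiltonian cycle by adding the edge $\{o_m,o_1\}$, together with a clique on each block $B_r$. If some path edge already lies inside a block, the union is not edge-disjoint. We then argue either that a duplicated edge does not matter, since colorability of a supergraph implies colorability of the subgraph, or that the cycle-plus-cliques theorem applies to the multigraph version. Both points should follow from the standard statement.

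\textbf{Step 3 (coloring gives the allocation).} A proper $n$-coloring of $\Gamma$ gives every item a color in $[n]$. Each color class is an independent set of the path. Each block contains exactly one item of each color, because every block is an $n$-clique. Giving color class $i$ to agent $i$ therefore yields a complete, hence maximal, allocation in which every agent receives exactly one item from each block $B_r$.

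\textbf{Step 4 (fairness).} Fix agents $i$ and $j$, and write $a_r, b_r$ for the items they receive from $B_r$. Because the blocks are sorted, $v(a_r)\ge v(b_{r+1})$ and $v(b_r)\ge v(a_{r+1})$.
\begin{itemize}
\item \emph{Goods (monotone case).} Removing $b_1$ from $j$'s bundle, we get $v(A_j\setminus\{b_1\})=\sum_{r\ge2} v(b_r)\le \sum_{r\ge1} v(a_r)$, which is \EF{1}.
\item \emph{Mixed signs.} The sorted order puts goods first and chores last. Removing the largest good $b_1$ from $A_j$ handles the goods part in the same way. Removing the most negative chore from $A_i$, which is $a_k$ (it lies in the last block), handles the chore part by the symmetric telescoping $\sum_{r<k} v(a_r)\ge \sum_{r>1} v(b_r)$. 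A one-line computation combining these two shifted comparisons gives $v(A_i\setminus\{a_k\})\ge v(A_j\setminus\{b_1\})$, which is \EF{[1,1]}.
\item \emph{Chores only.} Here $T=\emptyset$ is needed, so we compare $\sum_{r<k} v(a_r)$ against $\sum_r v(b_r)$ directly.
\end{itemize}

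\textbf{Main obstacle.} The delicate part is the interaction between padding and block structure. Dummy items must not break independence: they either receive no real item or are inserted where they create no conflicts. Placing them as isolated extra path vertices at the end is one option, but they must also not shift the telescoping. We will try padding with $0$-valued items at the end of the path, and note that a $0$-valued item sits between goods and chores in the sorted order, so the inequalities in Step 4 remain valid.

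The chores-only case is a further concern, because the one-sided shift yields only $\sum_{r<k} v(a_r)\ge \sum_{r>1} v(b_r)$, not a bound against all of $A_j$. We expect to repair this by ordering the blocks in the opposite direction, placing the smallest-magnitude items first, or equivalently by applying \Cref{thm:Equivalence}, so that the item removed from $A_i$ is always its worst chore.
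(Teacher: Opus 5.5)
Your overall route matches the paper's: sort by value, cut into blocks of $n$, overlay a clique on each block with the path closed into a cycle, apply the cycle-plus-$n$-cliques theorem, and finish with the shift inequality $v(A_i\setminus\{a_k\})=\sum_{r<k}v(a_r)\ge\sum_{r>1}v(b_r)=v(A_j\setminus\{b_1\})$. Step 4 is fine.

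\textbf{The gap is Step 2.} Suppose a path edge $\{o_j,o_{j+1}\}$, or the closing edge $\{o_m,o_1\}$, joins two items in the same block. This happens whenever two path-neighbours are close in value, so it is the typical case, not a corner case. Then $\Gamma$ is not a simple graph whose edge set is the \emph{disjoint} union of a Hamiltonian cycle and vertex-disjoint cliques, so the theorem does not apply to it. Neither of your escapes works as stated:
\begin{itemize}
\item ``Colorability of a supergraph implies colorability of the subgraph'' helps only once you exhibit a supergraph that actually satisfies the hypotheses, and $\Gamma$ itself is not one.
\item The ``multigraph version'' is not what the theorem states, and you give no argument that it follows.
\end{itemize}
An explicit reduction is needed. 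The paper handles each such pair with a nine-vertex gadget. The gadget has three triangles $\{g_1,g_2,g_3\},\{g_4,g_5,g_6\},\{g_7,g_8,g_9\}$ and the Hamiltonian path $g_1g_7g_4g_9g_6g_8g_3g_5g_2$, which uses no triangle edge. The cycle is then rerouted as $o_j\to g_1\to\cdots\to g_2\to o_{j+1}$. After this, $\{o_j,o_{j+1}\}$ is only a clique edge, and the new graph is a genuine cycle-plus-cliques graph. The gadget triangles are admissible cliques of size at most $n$ because $n\ge 3$. Its $n$-coloring restricts to a proper coloring of $\Gamma$ in which every block is rainbow. You need to supply this construction or an equivalent one.

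\textbf{Smaller points.} Your chores-only worry is unfounded and needs neither reordering nor \Cref{thm:Equivalence}. There $v(b_1)\le 0$, so $v(A_i\setminus\{a_k\})\ge v(A_j\setminus\{b_1\})\ge v(A_j)$, which is \EF{1} for chores. More generally, under the paper's definition of \EFoneone{} the removed items need not be a good and a chore, so the single shift inequality already covers every sign pattern. Removing the zero-valued dummies at the end is also harmless. If $a_k$ or $b_1$ is a dummy, replace it by $\emptyset$; no value changes. Feasibility on the original path follows by restricting independent sets of the extended path.
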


We will start by discussing the key ideas in the proof of \Cref{thm:n_Agents_Identical} for the case of $n=3$ agents. Later, we will extend this discussion to provide a detailed proof for general $n$. Our proof will use the ``cycle plus triangles'' theorem~(\Cref{thm:Cycle_Plus_Triangles}).

\begin{restatable}[Cycle plus triangles;~\citealp{FS92solution}]{theorem}{CyclePlusTriangles}
For any positive integer $p$, let $H$ be any simple graph on $3p$ vertices whose set of edges is the disjoint union of a Hamiltonian cycle and $p$ pairwise vertex-disjoint triangles. Then, $H$ is $3$-colorable.
\label{thm:Cycle_Plus_Triangles}
\end{restatable}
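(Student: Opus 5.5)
The plan is to prove the stronger statement that $H$ is $3$-\emph{choosable}, using the Alon--Tarsi polynomial method; $3$-colorability then follows. This is how Fleischner and Stiebitz argued, and I will follow that route instead of looking for an explicit coloring. The key steps, in order, are as follows.

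\textbf{Step 1 (orientation).} Every vertex of $H$ has degree exactly $4$: two edges from the Hamiltonian cycle and two from its triangle. Orient the Hamiltonian cycle cyclically and each triangle cyclically. This gives an orientation $D$ in which every vertex has out-degree exactly $2$ (and in-degree $2$).

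\textbf{Step 2 (Alon--Tarsi).} Recall the Alon--Tarsi theorem. Consider spanning subdigraphs of $D$ in which every vertex has equal in- and out-degree (Eulerian subgraphs), and call one even or odd according to the parity of its number of arcs. If the number $EE(D)$ of even ones differs from the number $EO(D)$ of odd ones, then $H$ is $(d^+_{\max}+1)$-choosable. Proving this theorem amounts to showing that the coefficient of $\prod_v x_v^{2}$ in the graph polynomial $\prod_{(u,w)\in D}(x_u-x_w)$ equals $\pm(EE(D)-EO(D))$, and then applying the Combinatorial Nullstellensatz with lists of size $3$. It therefore suffices to show that $EE(D)\neq EO(D)$.

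\textbf{Step 3 (the count).} This is the main obstacle. Since every in- and out-degree equals $2$, an Eulerian subgraph has, at each vertex, in/out-degree $0$, $1$ or $2$. I would relate $EE-EO$ to a count in the underlying undirected structure: equivalently, $EE-EO$ is, up to sign, the signed count of Eulerian orientations of $H$ relative to $D$. I would then try to show that this count is $\equiv 2 \pmod 4$, and in particular nonzero. The first thing to try is a parity/involution argument. Pair up Eulerian subgraphs by reversing or toggling a chosen triangle, so that most contributions cancel. What survives should be configurations in which each triangle is either fully taken or fully omitted, together with a consistent choice on the cycle, and these can be counted explicitly. If the involution leaves too many fixed points, the fallback is Sachs' elementary reformulation. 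There one counts, modulo $4$, the ways of choosing from each triangle a set of vertices subject to parity conditions along the cycle, and evaluates that sum through a transfer-matrix product around the Hamiltonian cycle.

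\textbf{Step 4.} Conclude from Steps 2 and 3 that $H$ is $3$-choosable, hence $3$-colorable. Steps 1, 2 and 4 are routine given the Alon--Tarsi theorem; the real difficulty is the nonvanishing of $EE(D)-EO(D)$ in Step 3.
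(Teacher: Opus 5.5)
The paper gives no proof of this statement; it imports it from Fleischner and Stiebitz and uses it as a black box. Your framework is the right one: the orientation $D$ with all out-degrees $2$, the Alon--Tarsi theorem, and $3$-choosability. Steps~1, 2 and~4 are fine. Step~3, however, is the entire content of the theorem, and you have not supplied it. Saying you would try to show that the count is $\equiv 2 \pmod 4$ states a target, not an argument.

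The involution you sketch also does not do what you claim. Let $F$ be an Eulerian subgraph of $D$ and $T$ a triangle, which is a directed $3$-cycle in $D$. At each vertex, the net out-degree (out minus in) of $F \triangle T$ is $-2$ times that of $F \cap T$. So $F \triangle T$ is Eulerian exactly when $F \cap T \in \{\emptyset, T\}$.

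Toggling the first such triangle is therefore a sign-reversing involution. It cancels precisely the configurations in which some triangle is fully taken or fully omitted, so these are what disappear, not what survives. The survivors are the $F$ meeting every triangle in one or two arcs.

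Equivalently, after reversing $F$, every triangle is transitively oriented. Out-degree $2$ then forces each triangle's source, middle and sink to be, respectively, a sink, a through-vertex and a source of the induced orientation of the Hamiltonian cycle. This in turn determines the triangle orientations.

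So what you still must prove is the following: the signed count of orientations of the Hamiltonian cycle in which every triangle contains exactly one cycle-source, one cycle-sink and one through-vertex is nonzero. For $p=2$, the octahedron, there are six such orientations, all of the same sign. This is consistent with your $\equiv 2 \pmod 4$ target, but nothing in your sketch establishes it in general.

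The constraint is non-local, because the three vertices of a triangle can lie anywhere on the cycle. Hence neither ``these can be counted explicitly'' nor a transfer-matrix product around the cycle works without an additional idea that decouples the triangle conditions. That nonvanishing argument is where the cited proof does its real work, and it is missing from your proposal.
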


Interestingly, \citet{E90some} conjectured that any ``cycle plus triangles'' graph $H$ always admits a $3$-coloring. This conjecture was proved by \citet{FS92solution}. 

To see how the cycle-plus-triangles'' theorem applies to our problem, consider reindexing the items on the path graph $G$ so that $o_1$ is the most preferred item (recall that the valuations are identical), $o_2$ is the next most preferred item, and so on (ties are broken arbitrarily). Furthermore, for simplicity, let the number of items be a multiple of $3$ (i.e., $m = 3p$); if not, we can add some zero-valued dummy items that will be removed later.

We now construct the graph $H$. For every item $o_1,\dots,o_m$, we add a distinct vertex in $H$. We group the items into triples $\{o_1,o_2,o_3\}, \{o_4,o_5,o_6\}$, and so on, and add edges between every pair of items in each triple, resulting in vertex-disjoint triangles. Thus, for $i \in [p]$, the $i^\text{th}$ triangle contains the $(3i-2)^\textup{th}$, $(3i-1)^\textup{th}$, and $(3i)^\textup{th}$-most preferred items. Further, between any pair of items $o_i$ and $o_j$ that are adjacent along the path graph $G$, we add an edge in $H$. However, if an edge between $o_i$ and $o_j$ already exists from the aforementioned triangle construction, then we connect them via a \emph{nine-vertex gadget}; see \Cref{fig:CyclePlusTriangle}. The gadget helps in avoiding multiedges between vertices by connecting them via a loop of dummy vertices. Finally, we add an edge connecting the extreme items on the path graph to complete the Hamiltonian cycle---again, we do this via a gadget if the items already have an edge between them. Observe that $H$ is a supergraph of $G$.

The graph $H$ constructed above satisfies the conditions of the cycle-plus-triangles theorem~(\Cref{thm:Cycle_Plus_Triangles}) and is therefore $3$-colorable. The set of vertices assigned to each color constitutes an independent set of $H$. By treating each color as an agent, we obtain a feasible and complete allocation.

To argue \EFoneone{}, observe that for each $i \in [p]$, each agent receives exactly one item from its $(3i-2)^\textup{th}$, $(3i-1)^\textup{th}$, and $(3i)^\textup{th}$-most preferred items. Using an item-for-item correspondence argument, it is easy to see that each agent prefers its own bundle over that of any other agent after, if necessary, removing an item in the other agent's bundle from the first triangle (which must be a good) and one of its own items in the last triangle (which must be a chore). Additivity of valuations implies that the allocation is \EFoneone{}. Note that the zero-valued dummy items can be removed without affecting the fairness guarantee.

The above construction for $n=3$ agents can be extended to a general number of agents. To this end, we will use a generalization of the cycle-plus-triangles theorem called the ``cycle plus $n$-cliques'' theorem; the latter applies to graphs consisting of a 
Hamiltonian cycle and vertex disjoint cliques each with at most $n$ vertices. Such graphs are known to always admit an $n$-coloring~\citep{FS97some}. 

\begin{restatable}[Cycle plus $n$-cliques;~\citealp{FS97some}]{theorem}{CyclePlusNCliques}
Let $H$ be any simple graph whose set of edges is the disjoint union of a Hamiltonian cycle and $p$ pairwise vertex-disjoint complete subgraphs each on at most $n$ vertices. Then, $H$ is $n$-colorable.
\label{thm:Cycle_Plus_N_Cliques}
\end{restatable}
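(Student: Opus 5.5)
The plan is to argue by induction on $n$, using \Cref{thm:Cycle_Plus_Triangles} as the base case $n=3$. The cases $n \le 2$ are degenerate: the cliques are edges or single vertices, and one checks directly (or by a parity argument) which such graphs occur. To make the induction go through, I will prove a slightly more flexible statement, which I call $(\ast)$:
\begin{quote}
Any graph $H$ whose edge set is \emph{contained in} the union of the edges of a Hamiltonian cycle $C$ and the edges of pairwise vertex-disjoint cliques $Q_1,\dots,Q_p$, each on at most $n$ vertices, is $n$-colorable.
\end{quote}
Here the cycle and clique edges may overlap. Since $n$-colorability passes to spanning subgraphs, $(\ast)$ follows from \Cref{thm:Cycle_Plus_N_Cliques} itself. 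Conversely, I reduce $(\ast)$ to the edge-disjoint version with the same device as the nine-vertex gadget above: whenever a cycle edge $uv$ coincides with a clique edge, reroute the cycle through a small loop of fresh vertices. These fresh vertices are grouped into new cliques of size at most $3$, so the resulting graph is a disjoint union of a Hamiltonian cycle and cliques, it contains the original graph as an induced subgraph, and any coloring of it restricts to the original. For the base case $n=3$, this reduction together with \Cref{thm:Cycle_Plus_Triangles} (after padding cliques of size $1$ or $2$ to triangles by such gadgets) gives $(\ast)$.

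For the inductive step $n \ge 4$, the key is to find an independent set $I$ of $H$ that contains exactly one vertex from every clique $Q_j$ of size exactly $n$. I would obtain $I$ from Haxell's independent transversal theorem: if a graph has maximum degree $\Delta$ and its vertex set is partitioned into parts of size at least $2\Delta$, then it has an independent set meeting every part exactly once. Apply it to the cycle graph $C$ with $\Delta = 2$ and parts $Q_1,\dots,Q_p$, where every part of size less than $n$ is padded with isolated dummy vertices up to size $n \ge 4 = 2\Delta$. The transversal is independent in $C$ by construction. It is also independent with respect to clique edges, because it picks one vertex per clique. Dummy picks are discarded. Hence $I$ is independent in $H$ and hits every $n$-clique.

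Color $I$ with color $n$ and delete it. In $H - I$, every clique has size at most $n-1$, and the cycle $C$ falls apart into paths. I then shortcut each deleted vertex $x \in I$ by joining its two cycle neighbours. These neighbours are distinct and not themselves in $I$, since $I$ is independent in $C$. The shortcuts give a Hamiltonian cycle $C'$ on $V \setminus I$, and $H - I$ is a spanning subgraph of $C' \cup \bigcup_j Q_j'$, where $Q_j' = Q_j \setminus I$. A shortcut edge may coincide with a clique edge, but $(\ast)$ explicitly tolerates such overlaps. By induction, $H - I$ is $(n-1)$-colorable, and adding color $n$ on $I$ gives an $n$-coloring of $H$.

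I expect the main obstacle to be the step that secures the transversal $I$. The independent-transversal theorem is the non-trivial ingredient, and the bound $n \ge 2\Delta = 4$ is exactly why the genuinely hard case $n=3$ cannot be handled this way and must come from \Cref{thm:Cycle_Plus_Triangles}. If invoking Haxell's theorem is undesirable, I would first try a direct greedy or alternating-path argument on the cycle: each chosen vertex forbids at most two vertices elsewhere, while each part has at least four vertices. A second point needing care is the bookkeeping in the gadget reduction behind $(\ast)$: I must check that the rerouting keeps every new clique of size at most $3$, so that for $n \ge 3$ it stays within the class, and that it never introduces a multi-edge.
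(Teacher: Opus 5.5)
\textbf{The base case of your induction is false, and the induction depends on it.} Statement $(\ast)$ with $n=3$ already fails for $K_4$. Take the Hamiltonian cycle $1\,2\,3\,4$ and the vertex-disjoint cliques $\{1,3\}$ and $\{2,4\}$, each on $2\le 3$ vertices. Their union is $K_4$, which is not $3$-colorable. These edge sets are disjoint, so this is also a counterexample to the theorem as literally stated for $n=3$; an odd cycle rules out $n\le 2$. The paper only uses $n=3$ when all cliques are triangles covering $V$, and there \Cref{thm:Cycle_Plus_Triangles} plus the nine-vertex gadget suffices.

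The step ``pad cliques of size $1$ or $2$ to triangles by such gadgets'' cannot be carried out. The gadget only creates new triangles on fresh vertices. Putting a padding vertex into an existing clique forces you to insert it into the Hamiltonian cycle. That deletes some cycle edge $xy$, and its constraint survives only if $x$ and $y$ lie in a common clique.

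Your inductive step does reach such configurations:
\begin{itemize}
\item For $n=4$ and $H=K_4$ as above, a transversal consisting only of padding dummies is admissible. Then $I=\emptyset$, and you would be asserting that $K_4$ is $3$-colorable.
\item Insisting on $I\neq\emptyset$ does not help, because the shortcut edges create the obstruction. Take the cycle $1\,x\,2\,3\,4$ with cliques $\{1,3\},\{2,4\},\{x\}$, and $I=\{x\}$. The shortcut $12$ turns $C'\cup\bigcup_j Q'_j$ into exactly this $K_4$, so the hypothesis you invoke at $n-1=3$ is not available.
\end{itemize}

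\textbf{Repair: stop shortcutting and carry a linear forest instead of a Hamiltonian cycle.}
\begin{itemize}
\item At the top level, apply Haxell's theorem to $C$ as you do. If the transversal uses only dummies, every clique has fewer than $n$ vertices, and you may take $I$ to be any single vertex. Either way $C-I$ is a linear forest.
\item From then on Haxell's theorem applies verbatim, since the maximum degree is at most $2$ and parts are padded to size at least $4$. Deleting $I$ keeps a linear forest, and every clique of the current maximum size loses a vertex.
\item The base case becomes: a linear forest plus vertex-disjoint cliques of size at most $3$, with overlaps allowed, is $3$-colorable. This is true and follows from \Cref{thm:Cycle_Plus_Triangles}. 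Chain the paths into one Hamiltonian path. Append the padding vertices that complete each small clique to a triangle at an end of the path, as the paper does with its dummy items. Close the path into a cycle, and insert a nine-vertex gadget at every cycle edge that is also a triangle edge.
\end{itemize}
With this change your Haxell-based induction is a valid proof for every $n\ge 4$. The paper gives no proof of this statement and only cites Fleischner and Stiebitz. Once the base case is corrected, your reduction to the triangle case via independent transversals would be a self-contained alternative.
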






In the remainder of this section, we will discuss a procedure to construct a cycle-plus-$n$-cliques graph $H$ from the given fair division instance with $n \geq 3$ agents. We will show that any $n$-coloring in the graph $H$ gives an \EFoneone{} and maximal allocation in the fair division instance. This reduction will establish the desired implication in \Cref{thm:n_Agents_Identical}.

\subsubsection*{Reducing \EFoneone{}+maximal to cycle-plus-$n$-cliques.} 
We are given a conflict graph $G$ that is a path graph. Since the valuations are identical, we can reindex the items so that $o_1$ is the most preferred item, $o_2$ is the next most preferred item (breaking ties arbitrarily), and so on. For convenience, we can make the number of items $m$ an integer multiple of $n$ by adding up to $n-1$ dummy items, each with a value of $0$. These dummy items will be helpful in applying \Cref{thm:Cycle_Plus_N_Cliques} to the reduced graph $H$ and will be removed later. For an illustration of our reduction with $n=3$ agents, see \Cref{fig:CyclePlusTriangle}.

\begin{figure*}[t]
\centering
    \begin{subfigure}[b]{0.9\linewidth}
    \centering
    \begin{tikzpicture}
            \tikzset{mynode/.style = {shape=circle,draw,inner sep=0pt,minimum size=15pt}}
            \node[mynode] (1) at (0,0) {$o_7$};
            \node[mynode] (2) at (1,0) {$o_{10}$};
            \node[mynode] (3) at (2,0) {$o_9$};
            \node[mynode] (4) at (3,0) {$o_6$};
            \node[mynode] (5) at (4,0) {$o_5$};
            \node[mynode] (6) at (5,0) {$o_2$};
            \node[mynode] (7) at (6,0) {$o_1$};
            \node[mynode] (8) at (7,0) {$o_8$};
            \node[mynode] (9) at (8,0) {$o_4$};
            \node[mynode] (10) at (9,0) {$o_3$};
            \node[mynode,fill=blue!20] (11) at (10,0) {$o_{11}$};
            \node[mynode,fill=blue!20] (12) at (11,0) {$o_{12}$};
            \draw[line width=2pt] (1) -- (2);
            \draw[line width=2pt] (2) -- (3);
            \draw[line width=2pt] (3) -- (4);
            \draw[line width=2pt] (4) -- (5);
            \draw[line width=2pt] (5) -- (6);
            \draw[line width=2pt] (6) -- (7);
            \draw[line width=2pt] (7) -- (8);
            \draw[line width=2pt] (8) -- (9);
            \draw[line width=2pt] (9) -- (10);
            \draw[line width=2pt,dotted] (10) -- (11);
            \draw[line width=2pt,dotted] (11) -- (12);
        \end{tikzpicture}
        \caption{}
        \label{fig:CyclePlusTriangle-path}
        \vspace{0.2in}
    \end{subfigure}
    %
    %
    \begin{subfigure}[b]{0.9\linewidth}
    \centering
    \begin{tikzpicture}
            \tikzset{mynode/.style = {shape=circle,draw,inner sep=0pt,minimum size=15pt}}
            \tikzset{gadgetnode/.style = {shape=rectangle,draw,inner sep=0pt,minimum size=15pt}} 
            \node[mynode] (1) at (1,1.4) {$o_1$};
            \node[mynode] (2) at (0,0) {$o_2$};
            \node[mynode] (3) at (2,0) {$o_3$};
            \node[mynode] (4) at (4,1.4) {$o_4$};
            \node[mynode] (5) at (3,0) {$o_5$};
            \node[mynode] (6) at (5,0) {$o_6$};
            \node[mynode] (7) at (7,1.4) {$o_7$};
            \node[mynode] (8) at (6,0) {$o_8$};
            \node[mynode] (9) at (8,0) {$o_9$};
            \node[mynode] (10) at (10,1.4) {$o_{10}$};
            \node[mynode,fill=blue!20] (11) at (9,0) {$o_{11}$};
            \node[mynode,fill=blue!20] (12) at (11,0) {$o_{12}$};
            \draw (1) -- (2);
            \draw (2) -- (3);
            \draw (1) -- (3);
            \draw (4) -- (5);
            \draw (5) -- (6);
            \draw (4) -- (6);
            \draw (7) -- (8);
            \draw (8) -- (9);
            \draw (7) -- (9);
            \draw (10) -- (11);
            \draw (11) -- (12);
            \draw (10) -- (12);
            \draw[line width=2pt] (7) to[bend left] (10);
            \draw[line width=2pt] (10) to[bend right] (9);
            \draw[line width=2pt] (9) to[bend left] (6);
            \draw[line width=2pt] (5) to[bend left] (2);
            \draw[line width=2pt] (1) to[out=20,in=110,looseness=1.3] (8);
            \draw[line width=2pt] (8) to[bend right] (4);
            \draw[line width=2pt] (4) to[bend right] (3);
            \draw[line width=2pt] (3) to[out=80,in=110] (11);
            \draw[line width=2pt,color=red] (12) to[out=80,in=60] (7);
            %
            %
            %
            \def\x{1.5}
            \def\y{4}
            \node[gadgetnode] (13) at (\x+1,\y+1.4) {$g^1_1$};
            \node[gadgetnode] (14) at (\x+0,\y+0) {$g^1_2$};
            \node[gadgetnode] (15) at (\x+2,\y+0) {$g^1_3$};
            \node[gadgetnode] (16) at (\x+4,\y+1.4) {$g^1_4$};
            \node[gadgetnode] (17) at (\x+3,\y+0) {$g^1_5$};
            \node[gadgetnode] (18) at (\x+5,\y+0) {$g^1_6$};
            \node[gadgetnode] (19) at (\x+7,\y+1.4) {$g^1_7$};
            \node[gadgetnode] (20) at (\x+6,\y+0) {$g^1_8$};
            \node[gadgetnode] (21) at (\x+8,\y+0) {$g^1_9$};
            \draw (13) -- (14);
            \draw (14) -- (15);
            \draw (13) -- (15);
            \draw (16) -- (17);
            \draw (17) -- (18);
            \draw (16) -- (18);
            \draw (19) -- (20);
            \draw (20) -- (21);
            \draw (19) -- (21);
            \draw (13) to[bend left] (19);
            \draw (19) to[bend left] (16);
            \draw (16) to[out=40,in=80,looseness=1.2] (21);
            \draw (21) to[bend left] (18);
            \draw (18) to[bend left] (20);
            \draw (20) to[bend left] (15);
            \draw (15) to[bend left] (17);
            \draw (17) to[bend left] (14);
            %
            %
            %
            \def\x{0.5}
            \def\y{-4}
            \node[gadgetnode] (22) at (\x+1,\y+1.4) {$g^2_1$};
            \node[gadgetnode] (23) at (\x+0,\y+0) {$g^2_2$};
            \node[gadgetnode] (24) at (\x+2,\y+0) {$g^2_3$};
            \node[gadgetnode] (25) at (\x+4,\y+1.4) {$g^2_4$};
            \node[gadgetnode] (26) at (\x+3,\y+0) {$g^2_5$};
            \node[gadgetnode] (27) at (\x+5,\y+0) {$g^2_6$};
            \node[gadgetnode] (28) at (\x+7,\y+1.4) {$g^2_7$};
            \node[gadgetnode] (29) at (\x+6,\y+0) {$g^2_8$};
            \node[gadgetnode] (30) at (\x+8,\y+0) {$g^2_9$};
            \draw (22) -- (23);
            \draw (23) -- (24);
            \draw (22) -- (24);
            \draw (25) -- (26);
            \draw (26) -- (27);
            \draw (25) -- (27);
            \draw (28) -- (29);
            \draw (29) -- (30);
            \draw (28) -- (30);
            \draw (22) to[bend left] (28);
            \draw (28) to[bend left] (25);
            \draw (25) to[out=40,in=80,looseness=1.2] (30);
            \draw (30) to[bend left] (27);
            \draw (27) to[bend left] (29);
            \draw (29) to[bend left] (24);
            \draw (24) to[bend left] (26);
            \draw (26) to[bend left] (23);
            %
            %
            \def\x{10.5}
            \def\y{-4}
            \node[] (31) at (\x+1,\y+1.4) {};
            \node[] (32) at (\x+0,\y+0) {};
            \node[] (33) at (\x+0.5,\y+0.7) {gadget};
            %
            %
            %
            \draw[dashed] (2) to[out=95,in=200,looseness=1.2] (13);
            \draw[dashed] (14) to[bend right,looseness=1.2] (1);
            \draw[dashed] (5) to[bend right] (23);
            \draw[dashed] (22) to[bend left] (6);
            \draw[dashed] (11) to[bend right] (32);
            \draw[dashed] (31) to[bend left] (12);
        \end{tikzpicture}
        \caption{}
        \label{fig:CyclePlusTriangle-construction}
    \end{subfigure}
    %
    %
    %
    %
    \caption{An illustration of the reduction in the proof of \Cref{thm:n_Agents_Identical} for $n=3$ agents. Subfigure (a) shows the path graph consisting of ten original items $o_7,o_{10},\dots,o_3$ and two dummy items $o_{11}$ and $o_{12}$ (denoted by shaded nodes). The dummy items can be considered to be an extension of the path graph as shown. Subfigure (b) shows the corresponding cycle-plus-$n$-cliques graph $H$. We divide the items into groups of size $n$, namely $\{o_1,o_2,o_3\}$, $\{o_4,o_5,o_6\}$, and so on such that $\{o_1,o_2,o_3\}$ are the most-preferred $n$ items, $\{o_4,o_5,o_6\}$ are the next most-preferred $n$ items, and so on. The thick black edges simulate the edges of the path graph between items belonging to different groups. For items in the same group that are adjacent along the path (i.e., the \emph{special} pairs, namely, $\{o_1,o_2\}$, $\{o_5,o_6\}$, and $\{o_{11},o_{12}\}$), we simulate the edge via a nine-vertex gadget. The square nodes denote the gadget items. The thin dashed lines denote the connector edges between the special vertices and the gadget vertices. The thick red edge denotes the fictitious edge added between the extreme items on the path to complete the Hamiltonian cycle. \Cref{fig:CyclePlusTriangle-Coloring} shows a $3$-coloring of the graph $H$.}
\label{fig:CyclePlusTriangle}
\end{figure*}

\begin{figure*}[t]
\centering
    \begin{tikzpicture}
            \tikzset{mynode/.style = {shape=circle,draw,inner sep=0pt,minimum size=15pt}}
            \tikzset{gadgetnode/.style = {shape=rectangle,draw,inner sep=0pt,minimum size=15pt}} 
            \node[mynode] (1) at (1,1.4) {$1$};
            \node[mynode] (2) at (0,0) {$2$};
            \node[mynode] (3) at (2,0) {$3$};
            \node[mynode] (4) at (4,1.4) {$1$};
            \node[mynode] (5) at (3,0) {$3$};
            \node[mynode] (6) at (5,0) {$2$};
            \node[mynode] (7) at (7,1.4) {$1$};
            \node[mynode] (8) at (6,0) {$2$};
            \node[mynode] (9) at (8,0) {$3$};
            \node[mynode] (10) at (10,1.4) {$2$};
            \node[mynode,fill=blue!20] (11) at (9,0) {$1$};
            \node[mynode,fill=blue!20] (12) at (11,0) {$3$};
            \draw (1) -- (2);
            \draw (2) -- (3);
            \draw (1) -- (3);
            \draw (4) -- (5);
            \draw (5) -- (6);
            \draw (4) -- (6);
            \draw (7) -- (8);
            \draw (8) -- (9);
            \draw (7) -- (9);
            \draw (10) -- (11);
            \draw (11) -- (12);
            \draw (10) -- (12);
            \draw[line width=2pt] (7) to[bend left] (10);
            \draw[line width=2pt] (10) to[bend right] (9);
            \draw[line width=2pt] (9) to[bend left] (6);
            \draw[line width=2pt] (5) to[bend left] (2);
            \draw[line width=2pt] (1) to[out=20,in=110,looseness=1.3] (8);
            \draw[line width=2pt] (8) to[bend right] (4);
            \draw[line width=2pt] (4) to[bend right] (3);
            \draw[line width=2pt] (3) to[out=80,in=110] (11);
            \draw[line width=2pt,color=red] (12) to[out=80,in=60] (7);
            %
            %
            %
            \def\x{1.5}
            \def\y{4}
            \node[gadgetnode] (13) at (\x+1,\y+1.4) {$1$};
            \node[gadgetnode] (14) at (\x+0,\y+0) {$2$};
            \node[gadgetnode] (15) at (\x+2,\y+0) {$3$};
            \node[gadgetnode] (16) at (\x+4,\y+1.4) {$2$};
            \node[gadgetnode] (17) at (\x+3,\y+0) {$1$};
            \node[gadgetnode] (18) at (\x+5,\y+0) {$3$};
            \node[gadgetnode] (19) at (\x+7,\y+1.4) {$3$};
            \node[gadgetnode] (20) at (\x+6,\y+0) {$2$};
            \node[gadgetnode] (21) at (\x+8,\y+0) {$1$};
            \draw (13) -- (14);
            \draw (14) -- (15);
            \draw (13) -- (15);
            \draw (16) -- (17);
            \draw (17) -- (18);
            \draw (16) -- (18);
            \draw (19) -- (20);
            \draw (20) -- (21);
            \draw (19) -- (21);
            \draw (13) to[bend left] (19);
            \draw (19) to[bend left] (16);
            \draw (16) to[out=40,in=80,looseness=1.2] (21);
            \draw (21) to[bend left] (18);
            \draw (18) to[bend left] (20);
            \draw (20) to[bend left] (15);
            \draw (15) to[bend left] (17);
            \draw (17) to[bend left] (14);
            %
            %
            %
            \def\x{0.5}
            \def\y{-4}
            \node[gadgetnode] (22) at (\x+1,\y+1.4) {$3$};
            \node[gadgetnode] (23) at (\x+0,\y+0) {$2$};
            \node[gadgetnode] (24) at (\x+2,\y+0) {$1$};
            \node[gadgetnode] (25) at (\x+4,\y+1.4) {$2$};
            \node[gadgetnode] (26) at (\x+3,\y+0) {$3$};
            \node[gadgetnode] (27) at (\x+5,\y+0) {$1$};
            \node[gadgetnode] (28) at (\x+7,\y+1.4) {$1$};
            \node[gadgetnode] (29) at (\x+6,\y+0) {$2$};
            \node[gadgetnode] (30) at (\x+8,\y+0) {$3$};
            \draw (22) -- (23);
            \draw (23) -- (24);
            \draw (22) -- (24);
            \draw (25) -- (26);
            \draw (26) -- (27);
            \draw (25) -- (27);
            \draw (28) -- (29);
            \draw (29) -- (30);
            \draw (28) -- (30);
            \draw (22) to[bend left] (28);
            \draw (28) to[bend left] (25);
            \draw (25) to[out=40,in=80,looseness=1.2] (30);
            \draw (30) to[bend left] (27);
            \draw (27) to[bend left] (29);
            \draw (29) to[bend left] (24);
            \draw (24) to[bend left] (26);
            \draw (26) to[bend left] (23);
            %
            %
            \def\x{10.5}
            \def\y{-4}
            \node[] (31) at (\x+1,\y+1.4) {};
            \node[] (32) at (\x+0,\y+0) {};
            \node[] (33) at (\x+0.5,\y+0.7) {gadget};
            %
            %
            %
            \draw[dashed] (2) to[out=95,in=200,looseness=1.2] (13);
            \draw[dashed] (14) to[bend right,looseness=1.2] (1);
            \draw[dashed] (5) to[bend right] (23);
            \draw[dashed] (22) to[bend left] (6);
            \draw[dashed] (11) to[bend right] (32);
            \draw[dashed] (31) to[bend left] (12);
        \end{tikzpicture}
        \caption{A $3$-coloring of the graph $H$ shown in \Cref{fig:CyclePlusTriangle}.}
\label{fig:CyclePlusTriangle-Coloring}
\end{figure*}

Let $m$ denote the total number of items in the path graph after adding the dummy items. The cycle-plus-$n$-cliques graph $H$ is constructed as follows: For each item, we will create a distinct vertex in $H$. To define the edges in $H$, the items are grouped into $m/n$ groups, formed by putting together blocks of $n$ most-preferred items as per agents' preferences. That is, the groups are $\{o_1,o_2,\dots,o_n\}$, $\{o_{n+1},o_{n+2},\dots,o_{2n}\}$, and so on. Each group corresponds to an $n$-clique in $H$, meaning we will add an edge between every pair of vertices corresponding to the items in the same group. In \Cref{fig:CyclePlusTriangle}, this construction is illustrated in the form of four triangles $\{o_1,o_2,o_3\},\{o_4,o_5,o_6\},\{o_7,o_8,o_9\}$ and $\{o_{10},o_{11},o_{12}\}$ in subfigure~(b).

Next, we will incorporate the edges in the path graph $G$ into the cycle-plus-$n$-cliques graph $H$. For any pair of items that are adjacent in $G$, we will add an edge between the corresponding vertices in $H$ (shown via thick black edges in \Cref{fig:CyclePlusTriangle}). 
However, there may be items that are adjacent in the path graph $G$ that belong to the same group; we call such pairs \emph{special}. In \Cref{fig:CyclePlusTriangle}, the pairs $\{o_1,o_2\}$, $\{o_5,o_6\}$, and $\{o_{11},o_{12}\}$ correspond to the special pairs. To connect the vertices corresponding to any special pair of items, we use a \emph{gadget} (\Cref{fig:CyclePlusTriangle} shows two such gadgets). For any $i$, the $i^\textup{th}$ gadget consists of nine vertices $g^i_1,g^i_2,\dots,g^i_9$. These vertices are connected as follows: First, we group the gadget vertices into three triples $\{g^i_1,g^i_2,g^i_3\},\{g^i_4,g^i_5,g^i_6\}$, and $\{g^i_7,g^i_8,g^i_9\}$ and connect all vertex pairs within the same triple. Next, we add eight more edges, namely $\{g^i_1,g^i_7\}$, $\{g^i_7,g^i_4\}$, $\{g^i_4,g^i_9\}$, $\{g^i_9,g^i_6\}$, $\{g^i_6,g^i_8\}$, $\{g^i_8,g^i_3\}$, $\{g^i_3,g^i_5\}$, and $\{g^i_5,g^i_2\}$. Note that with this construction, the subgraph induced by the gadget consists of a union of pairwise vertex-disjoint triangles and a Hamiltonian cycle except for the missing edge between $g^i_1$ and $g^i_2$. We connect the gadget vertices $g^i_1$ and $g^i_2$ with the vertices corresponding to the $i^\textup{th}$ special pair of items using two \emph{connector} edges (shown in \Cref{fig:CyclePlusTriangle}, subfigure (b) via thin dashed lines). 
It is relevant to note that for general $n$, while the original items will induce $n$-cliques, the gadget will nevertheless consist only of triangles.

Finally, we add one more edge between the vertices corresponding to the start and the end items of the path (including the dummy items). In \Cref{fig:CyclePlusTriangle}, this edge is shown via a thick red edge between $o_7$ and $o_{12}$. (Again, if an edge already exists between these items, we connect them through another gadget.) This finishes the construction of the graph~$H$.

\begin{proof} (of \Cref{thm:n_Agents_Identical})
    It is easy to see that the edge set of the graph $H$ consists of a disjoint union of pairwise vertex-disjoint complete subgraphs of size at most $n$ and a Hamiltonian cycle. By the cycle-plus-$n$-cliques theorem~(\Cref{thm:Cycle_Plus_N_Cliques}), the graph $H$ admits an $n$-coloring. Note that since $H$ is a supergraph of $G$, an $n$ coloring of $H$ induces an $n$-coloring of $G$.

    Consider any bijection between colors and agents. Assign each item in $G$ (including dummy items) to the agent whose associated color is assigned to the corresponding vertex. Since the coloring is proper, no pair of adjacent vertices is assigned to the same color; thus, the assignment is feasible. Furthermore, since all items are allocated, the allocation is complete and, therefore, maximal.

    To see why the allocation is \EFoneone{}, observe that each agent receives exactly one item from the vertices corresponding to each $n$-clique. Fix a pair of agents $h$ and $k$. For any $t \in \{1,2,\dots,m/n\}$, denote the items in group $t$ as $\{o_{(t-1)n + 1},\dots,o_{(t-1)n + n}\}$. Observe that for any $t \in \{1,2,\dots,m/n - 1\}$, agent $h$ prefers the item it receives from the $t^\textup{th}$ group over that received by agent $k$ from the $(t+1)^\textup{th}$ group. By additivity, it follows that agent $h$ prefers its bundle over agent $k$'s bundle, after the removal of its group $t$ item and agent $k$'s group $1$ item. Furthermore, the group $t$ item removed from agent $h$'s bundle must be a chore, and the group $1$ item removed from agent $k$'s bundle must be a good; otherwise, these items can be included in the comparison of bundles without affecting envy-freeness. Thus, the allocation is envy-free up to the removal of one good and one chore (\EFoneone{}), as desired.
    
    Finally, observe that the vertices corresponding to all dummy items are located in the same $n$-clique. Since all vertices within each $n$-clique are assigned distinct colors, no agent receives more than one dummy item. As the dummy items have zero value, their removal does not affect the \EFoneone{} guarantee. Furthermore, for monotone valuations, \EFoneone{} is equivalent to \EF{1}. This proves \Cref{thm:n_Agents_Identical}.
\end{proof}

We conclude this subsection with a discussion about the computational aspects. The proof of \Cref{thm:n_Agents_Identical} provides an efficient (i.e., polynomial-sized) reduction from the fair division problem to the cycle-plus-$n$-cliques problem. Thus, a computationally efficient (i.e., polynomial-time) algorithm for computing the $n$-coloring in the latter problem would provide a computationally efficient algorithm for computing an \EFoneone{} and maximal allocation of the path graph. However, showing that the cycle-plus-$n$-cliques problem admits a polynomial-time algorithm remains an open problem for all $n \geq 3$. 

The following related problem, called ``Fair Independent Set in Cycle (\FairIS{})'', has been studied by \citet{AAB+17fair} and \citet{H22complexity}.

\FairIS{}: Given a cycle $C$ on $n$ vertices and a partition $V_1,V_2,\dots,V_t$ of its vertex set into $t$ sets, find an independent set $S$ of $C$ such that $|S \cap V_i| \geq \frac{1}{2} |V_i| - 1$ for all $i \in [t]$.

\citet{AAB+17fair} established that every instance of \FairIS{} is guaranteed to have the desired independent set. Furthermore, they showed that when the vertex sets $V_1,\dots,V_t$ are each of size $3$, there exist two disjoint independent sets, each containing a vertex from each of the $V_i$'s. The latter result implies the existence of a $3$-coloring in the cycle-plus-triangles problem.

Following this, \citet{H22complexity} showed that the \FairIS{} problem is \PPAC{}. However, this work left open the problem of determining the precise computational complexity of \FairIS{} restricted to vertex sets of equal size, as is the case in the cycle-plus-$n$-cliques problem. Note that even with equi-sized vertex sets, the cycle-plus-$n$-cliques problem is at least as hard as \FairIS{}, since the desired independent set in the latter problem can be inferred from a valid $n$-coloring of the former. We refer the reader to the survey by \citet{A93restricted} and the references therein for further details.

\subsection{Uniform Valuations on Trees}
\label{sec:uniform}

We consider the case of uniform valuations, i.e., $v_i(S) = \left|S\right|$ for all $S \subseteq M$. For such valuations, our problem is closely related to an \emph{equitable coloring} of a graph, where the number of vertices assigned to each color differs by at most $1$. The key difference is that an  \emph{equitable coloring} must be complete---no vertex can remain uncolored---whereas in our problem, we aim to find a \emph{maximal equitable (partial) coloring} defined as follows.  


Given a graph $G = (V, E)$ and $n \in \N$, a \emph{maximal equitable $n$-coloring} is a subpartition $(S_1, \dots, S_n)$ of $V$, where $S_1, \dots, S_n \subseteq V$, with the following conditions:
    \begin{itemize}
        \item $S_1, \dots, S_n$ are disjoint.
        \item (independence) For any pair of distinct vertices $x, y \in S_i$, the vertices $x, y$ are not adjacent.
        \item (maximality) For every $v \in V \setminus (S_1 \cup \dots \cup S_n)$ and $i \in \{1, \dots, n\}$, $v$ has an adjacent vertex in $S_i$. 
        \item (equitability) $\max_{i \in [n]} \left|S_i\right| - \min_{i \in [n]} \left|S_i\right| \leq 1$.
    \end{itemize}

Below, we prove that a maximal equitable $n$-coloring exists when $G$ is a tree. It remains an open problem whether this can be extended to all graphs.

\begin{theorem}
For every tree $T = (V, E)$ and every $n \in \N$, there exists a maximal equitable $n$-coloring of $T$.
    \label{thm:uni-tree2}
\end{theorem}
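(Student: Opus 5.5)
We argue by induction on $|V|$, proving a slightly stronger statement that is convenient to carry through the induction: for every forest $F$ and every $n \in \N$ there is a maximal equitable $n$-coloring of $F$. For $n=1$ we take any maximal independent set. For $n \geq 2$ the plan is to strip off a leaf-heavy piece of the tree, color the remainder by induction, and then put the stripped vertices back while repairing balance and maximality.

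Concretely, we root $T$ and pick a deepest non-leaf vertex $u$, so all children $\ell_1,\dots,\ell_k$ of $u$ are leaves. Let $p$ be the parent of $u$, if it exists. We remove the star $\{u,\ell_1,\dots,\ell_k\}$ and apply the induction hypothesis to the remaining forest $F'$, obtaining a maximal equitable coloring $(S_1,\dots,S_n)$ of $F'$. We order the classes so that $|S_1|\le\dots\le|S_n|$; the sizes then take two consecutive values $q$ and $q+1$. Reinserting the star has two cases.

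\textbf{Case 1: $k \geq n-1$ (many leaves).} We leave $u$ uncolored if possible. The leaves are pairwise non-adjacent and are adjacent only to $u$, so we may hand them out to the classes in increasing order of size. Each class must receive at least one leaf, so that $u$ is blocked for every color. After this distribution the sizes are still within one of each other, since a round-robin top-up of a balanced vector by $k$ units keeps it balanced. However, if $k < n$, some class receives no leaf, and then $u$ must be colored with such a class $c$. This class must avoid the color of $p$, and it must be chosen so that the resulting sizes remain balanced. Maximality for vertices of $F'$ is unaffected, because adding vertices to classes only blocks more.

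\textbf{Case 2: $k$ small.} We color $u$ with one color $a \ne$ color$(p)$ and assign the leaves to other smallest classes. Any leaf left uncolored would need every class blocked, which is impossible since it is adjacent only to $u$. Hence every leaf must be colored, and with colors different from $a$.

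The main obstacle is reconciling three requirements at once: $u$'s color must differ from $p$'s, the leaves must avoid $u$'s color, and the multiset of increments must be a balanced top-up. The balance requirement means the classes receiving new vertices must be exactly the currently smallest ones, filled cyclically. This can fail when $p$'s color is the unique smallest class and $u$ is forced to be colored.

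To handle this I would first try a recoloring swap inside $F'$. Since a tree is bipartite, permuting two color classes on a component of $F'$ that does not contain $p$ preserves everything. Failing that, I would swap colors between $p$ and a vertex of the same-size class, using the freedom that $p$ may be uncolored. A cleaner alternative is to strengthen the induction hypothesis: for any prescribed vertex $r$ and any prescribed ``forbidden'' class, there is a maximal equitable coloring in which $r$ is either uncolored or not in the smallest class. Establishing this strengthened invariant is where I expect most of the case analysis to lie. Polynomial-time computability then follows, since every step is constructive.
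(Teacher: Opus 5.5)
Your argument stops at the step you yourself call the crux, so as written it is a plan rather than a proof. In Case~2 you assert that reinsertion ``can fail when $p$'s color is the unique smallest class and $u$ is forced to be colored.'' You then list candidate repairs (a swap inside $F'$, a swap at $p$, a strengthened invariant) but carry none of them out. The strengthened invariant is not usable as stated. Its prescribed ``forbidden'' class never enters the conclusion. Moreover, if ``the smallest class'' means a class of minimum size, it already fails for a single edge with $n=2$: both endpoints must be colored, both classes have size $1$, and $r$ always lies in a minimum-size class. Case~1 with $k=n-1$ is likewise left at listing what the class of $u$ must satisfy, without showing that such a class exists. For $k\ge n$, your round-robin also needs a tie-breaking rule to guarantee that every class receives a leaf.

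The missing observation is that the obstruction is illusory, and your leaf-star induction then goes through with no strengthened hypothesis. Balance constrains only the multiset of classes that receive new vertices, not which new vertex goes where. Since $u$ is a non-leaf, $k\ge 1$, so in Case~2 there is always a leaf available to absorb $p$'s class. Let the classes of $F'$ have sizes in $\{q,q+1\}$, and let $c$ be $p$'s color, if any.
\begin{itemize}
\item For $1\le k\le n-2$: if some minimum-size class $a\neq c$ exists, put $u$ in $a$ and put the leaves in distinct classes other than $a$, smallest first. If instead $c$ is the unique minimum-size class, put one leaf in $c$, put $u$ in any other class $a$, and put the remaining $k-1\le n-3$ leaves in distinct classes other than $a$. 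All sizes then end in $\{q+1,q+2\}$.
\item For $k=n-1$: give the leaves to all classes except one class $b$ of maximum size. Leave $u$ uncolored if $c=b$; otherwise put $u$ in $b$.
\item For $k\ge n$: give one leaf to every class, give the rest greedily to smallest classes, and leave $u$ uncolored.
\end{itemize}
Independence and balance are immediate in each case. Maximality for the old vertices (including an uncolored $p$) is inherited, since adding colored vertices only blocks more.

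Completed this way, your argument is a genuinely different and somewhat lighter route than the paper's. The paper strengthens the hypothesis to ``the root is uncolored or has a color of maximum size.'' It then merges all child subtrees via cyclic shifts of their color classes, and frees a color for the root by swapping two equally large colors inside a non-singular child subtree.
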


\begin{proof}
    Consider $T$ as a rooted tree with root $r$. We prove a stronger fact that there exists a maximal equitable $n$-coloring of $T$ such that $r$ is colored with a \emph{higher color} or is uncolored. Here, for a coloring $\mathcal{S} = (S_1, \dots, S_n)$, we define that color $i$ is a \emph{higher color} if $|S_i| = \max_{j \in [n]} |S_j|$.

    We will prove this by strong induction on the number of vertices of $T 
= (V, E)$. The base case $|V| = 1$ is trivial.

    For the inductive step, consider any tree $T 
= (V, E)$ with $|V| > 1$ and suppose that 
the statement holds
for all trees with smaller number of vertices.
    Let $r_1, \dots, r_k$ be the children of $r$, and let $T_1, \dots, T_k$ be the subtree of $r_1, \dots, r_k$, respectively. By the inductive hypothesis, 
    for each $i$, there exists a maximal equitable $n$-coloring of $T_i$ where $r_i$ is a higher color or is uncolored; let it be $\mathcal{S}^{(i)} = (S^{(i)}_1, \dots, S^{(i)}_n)$. Without loss of generality, we assume $|S^{(i)}_1| \geq \dots \geq |S^{(i)}_n|$. We say that $T_i$ is a \emph{singular subtree} if $\mathcal{S}^{(i)}$ has only one higher color (color $1$) and $r_i$ has color $1$. Without loss of generality, for some $c$, $T_1, \dots, T_c$ are singular subtrees and $T_{c+1}, \dots, T_k$ are not.
    
    First, we aim to obtain an equitable $n$-coloring $\mathcal{S}$ of $T_1 \cup \dots \cup T_k$ in the following way:
    \begin{enumerate}
        \item Initialize $\mathcal{S} = (\emptyset, \dots, \emptyset)$ and $x = 0$.
        \item For $i = 1, \dots, k$, do the following:
        \begin{enumerate}
            \item $(S_1, \dots, S_n) \gets (S_1 \cup S^{(i)}_{n-x+1}, \dots, S_x \cup S^{(i)}_n, S_{x+1} \cup S^{(i)}_1, \dots, S_n \cup S^{(i)}_{n-x})$ where the indices of $\mathcal{S}^{(i)}$ wrap around to 1 after $n$, and,
            \item $x \gets (x + (\text{number of higher colors in $\mathcal{S}^{(i)}$})) \bmod n$.
        \end{enumerate}
    \end{enumerate}
    It is not difficult to see that, for every loop just after step 2 (b), colors $1, \dots, x$ are higher colors of $\mathcal{S}$, and colors $x+1, \dots, n$ are not (or, when $x = 0$, all colors are higher colors). Also, observe that the algorithm works in a way that $r_1, \dots, r_{\min(c, n)}$ will have colors $1, \cdots, \min(c, n)$, respectively, as $T_1, \dots, T_{\min(c, n)}$ are singular subtrees.
    
    The problem is that, in order to obtain a coloring of $T$, we need to consider vertex $r$. We divide it into two cases to solve this issue.
    \begin{itemize}
        \item (Case 1. $c \geq n$) $r_1, \dots, r_n$ will have colors $1, \dots, n$, respectively. Therefore, $r$ can be left uncolored, and $\mathcal{S}$ is already a maximum equitable $n$-coloring of $T$.
        \item (Case 2. $c < n$) We aim to color vertex $r$ with color $n$. The case when this cannot be done is that some $r_i$ already has color $n$. However, since $r_1, \dots, r_c$ have colors $1, \dots, c \ (< n)$, we know that $T_i$ is not a singular subtree. So, there is another higher color inside $T_i$, say color $x$ (in the context of coloring $\mathcal{S}$; $x \neq n$ must hold). Then, we can swap color $x$ and color $n$ inside the entire $T_i$ --- then $r_i$ no longer has color $n$, and the number of vertices of each color does not change. We repeat this process until no $r_i$ has color $n$. Since color $n$ is used by the least number of vertices, coloring vertex $r$ in color $n$ does not break the equitability condition, and color $n$ becomes a higher color.
    \end{itemize}
    
    Therefore, we obtained the desired coloring of $T$.
\end{proof}

\section{Concluding Remarks}
\label{sec:Conclusion}

We studied the problem of fair and efficient allocation of indivisible items under conflict constraints. Although techniques from the unconstrained setting are not applicable, we demonstrated, using a color-switching technique, that an \EF{1} and maximal allocation always exists for two agents with monotone valuations for any conflict graph. Resolving the existence question for three or more agents with monotone valuations for path graphs---and, more generally, for interval graphs---is an exciting open problem. Another relevant question is to determine if there is a polynomial-time algorithm for the two-agent problem under monotone valuations and arbitrary conflict graphs. It would also be interesting to consider allocations that ``minimize wastage'' (i.e., leave the fewest items unassigned) or satisfy other notions of fairness such as proportionality, equitability, or maximin fair share. 


Additionally, we applied a novel approach using the cycle-plus-$n$-cliques theorem to demonstrate the existence of an \EFoneone{} and maximal allocation for identical additive valuations on a path graph. However, this result does not provide an efficient algorithm for computing such an allocation. An interesting problem for future research is to determine whether there is a polynomial-time algorithm for finding an \EFoneone{} and maximal allocation on a path graph when there are $n \geq 3$ agents.

\FloatBarrier

\section*{Acknowledgments}
We are thankful to anonymous reviewers of AAMAS 2024 and IJCAI 2025 for their helpful feedback, and to Amit Kumar for valuable discussions during the early stages of this work. AI acknowledges support from JST FOREST Grant Numbers JPMJPR20C1. SN gratefully acknowledges the support of a MATRICS grant (MTR/2021/000367) and a Core Research Grant (CRG/2022/009169) from SERB, Govt. of India, a TCAAI grant (DO/2021-TCAI002-009), and a TCS grant (MOU/CS/10001981-1/22-23). RS acknowledges support of the Department of Atomic Energy, Government  of India, under project no. RTI4014. RV acknowledges support from DST INSPIRE grant no. DST/INSPIRE/04/2020/000107, SERB/ANRF grant no. CRG/2022/002621, and Mr. D.P. Gupta Chair Professorship.

\FloatBarrier

\bibliographystyle{plainnat}
\bibliography{abb,References,main}

\end{document}